    \newtheorem{theorem}{Theorem}
    \newtheorem{result}{Result}
    \newtheorem{remark}{Remark}
\def\01{\{0,1\}}
\newcommand{\eps}{\varepsilon}
\newcommand{\ket}[1]{|#1\rangle}
\newcommand{\bra}[1]{\langle#1|}
\newcommand{\ketbra}[2]{|#1\rangle\langle#2|}
\newcommand{\braket}[2]{\langle#1|#2\rangle}
\newcommand{\Id}{I}
\newcommand{\E}{\mathbb{E}}
\newcommand{\Cc}{{\mathcal C}} 
\newcommand{\Exp}{{\mathbb{E}}}
\newcommand{\X}{\ensuremath{\mathcal{X}}}
\newcommand{\D}{\ensuremath{\mathcal{D}}}
\newcommand{\A}{\ensuremath{\mathcal{A}}}
\newcommand{\id}{\ensuremath{\mathbb{I}}}
\DeclareMathOperator{\poly}{poly}
\newcommand{\Z}{\mathcal{Z}}
\DeclareMathOperator{\polylog}{polylog}
\newtheorem{definition}[theorem]{Definition}
\newtheorem{fact}[theorem]{Fact}
\newtheorem{lemma}[theorem]{Lemma}
\newtheorem{proposition}[theorem]{Proposition}
\newcommand{\pmset}[1]{\{-1,1\}^{#1}} 
\newcommand{\EM}{\mathsf{EM}}
\newcommand{\Var}{\mathsf{Var}}
\newcommand{\sign}{\mathsf{sign}}
\newcommand{\QSQ}{\mathsf{QSQ}}
\newcommand{\PAC}{\mathsf{PAC}}
\newcommand{\QPAC}{\mathsf{QPAC}}
\newcommand{\Qstat}{\mathsf{Qstat}}
\newcommand{\Stat}{\mathsf{Stat}}
\newcommand{\Tr}{\mathsf{tr}}
\newcommand{\QSD}{\mathsf{QQC}}
\newcommand{\TVD}{d_{\text{TV}}}
\newcommand{\QSDA}{\mathsf{QSD}}
\newcommand{\QAC}{\mathsf{QAC}}
\newcommand{\SQ}{\mathsf{SQ}}
\newcommand{\SWAP}{\mathrm{\mathsf{SWAP}}}
\newcommand{\TRD}{d_{\Tr}}
\newcommand{\HD}{d_{\mathsf{H}}}
\newcommand{\QSSD}{\mathsf{QSSD}}
\newcommand{\ML}{\ensuremath{\mathsf{ML}}}
\newcommand{\C}{\ensuremath{\mathbb{C}}}
\begin{document}

\title{On the Role of Entanglement and Statistics in Learning}

\author{Srinivasan Arunachalam\thanks{IBM Quantum, Almaden Research Center, \textsf{Srinivasan.Arunachalam@ibm.com}} \and
Vojt{\v e}ch Havl{\'i}{\v c}ek\thanks{IBM Quantum, T.J. Watson Research Center, \textsf{
Vojtech.Havlicek@ibm.com}}
 \quad \and
Louis Schatzki\thanks{Electrical and Computer Engineering, University of Illinois Urbana-Champaign, \textsf{louisms2@illinois.edu}}\vspace{1mm}
 \and
}

\maketitle

\begin{abstract}
We make progress in understanding the relationship between learning models with access to entangled, separable and statistical measurements in the quantum statistical query ($\QSQ$) model. We show the following~results.

\vspace{3mm}

\textbf{Entangled versus separable measurements.} The goal is to learn an unknown~$f$ from the concept class $\Cc\subseteq \{f:\01^n\rightarrow [k]\}$ given copies of  $\frac{1}{\sqrt{2^n}}\sum_x \ket{x,f(x)}$.  We show that, if $T$ copies suffice to learn~$f$ using entangled measurements, $O(nT^2)$ copies suffice to learn~$f$ using only separable measurements.
\vspace{3mm}

\textbf{Entangled versus statistical measurements} The goal is to learn a function $f \in \Cc$ given access to separable measurements or statistical measurements.   We exhibit a concept class $\Cc$ based of degree-$2$ functions with exponential separation between $\QSQ$ learning and quantum learning with entangled measurements (even in the presence of noise). This proves the ``quantum analogue" of the seminal result of Blum et al.~\cite{blum2003noise} that separates classical $\SQ$ learning from classical $\PAC$ learning with classification~noise. 
\vspace{3mm}

\textbf{$\QSQ$ lower bounds for learning states.} We introduce a quantum statistical query dimension ($\QSDA$), and use it to give lower bounds on the $\QSQ$ complexity of learning. We prove superpolynomial $\QSQ$ lower bounds for testing purity of quantum states, shadow tomography, learning coset states for the Abelian hidden subgroup problem, degree-$2$ functions, planted bi-clique states and learning output states of Clifford circuits of depth $\polylog(n)$. We also show that an extension of $\QSDA$ \textit{characterizes} the complexity of general search problems.
\vspace{3mm}

\textbf{Further applications.} We give \emph{unconditional} separation between weak and strong error mitigation and prove lower bounds for learning distributions in the $\QSQ$ model. Prior works by Quek et al.~\cite{quek2022exponentially}, Hinsche et al.~\cite{hinsche2022single} and Nietner et al.~\cite{sweke23} proved analogous results \emph{assuming} diagonal measurements and our work removes this~assumption.

\end{abstract}

\newpage

\newpage

\begin{spacing}{0.15}
\tableofcontents
\end{spacing}

\newpage

\section{Introduction}
 Machine learning ($\ML$) has emerged as one of successful parts of artificial intelligence  with wide-ranging applications in computer vision, image recognition, natural  language processing. More recently, $\ML $ has been used  in popular applications  such as AlphaGo and Alpha zero (to play the games of Go and chess), chatGPT (to mimic a human conversation) and Alphafold (for solving some hard instances of protein folding). Simultaneously, understanding the power of quantum physics for $\ML$ has received much attention in the quantum computing community.  Many quantum algorithms have been proposed for practically relevant $\ML$ tasks such as clustering, recommendation systems,  linear algebra, convex optimization, support-vector machines, kernel-based methods, topological data analysis~\cite{lloyd:clustering,chia:dequantizeall,brandao2017quantum,rebentrost2014quantum,kerenidis2016quantum,havlivcek2019supervised,mcardle2022streamlined,gyurik2022towards,van2020convex,tang:recommendation}. There are several surveys dedicated to understanding the power of quantum methods for $\ML$~\cite{biamonte2017quantum,cerezo2022challenges,arunachalam2017guest,schuld2015introduction,anshu23}.

Quantum learning theory provides a theoretical framework to understand quantum advantages in $\ML$. Here, there is a concept class $\Cc$ which is a collection of $n$-qubit quantum states, a learner is provided with several copies of $\rho \in \Cc$, performs an arbitrary entangled operation on $\rho^{\otimes T}$ and the goal is to learn $\rho$ well-enough.  This framework encompasses several results in quantum learning  such as tomography, shadow tomography, learning interesting classes of states, learning an unknown distribution and functions encoded as a quantum state~\cite{o2016efficient,haah2016sample,aaronson:shadow,huangpreskill,montanaro2017learning,arunachalam2022optimal,HTFS22,DBLP:conf/focs/ChenCH021,hinsche2022single,quek2022exponentially,sweke23,arunachalam2017guest,anshu23}. 

A natural concern when considering the near-term implementation of such quantum learning algorithms is, it is infeasible to prepare several copies of $\rho$ and furthermore, perform arbitrary entangled measurements $\rho$. More recently, motivated by near-term implementations,~\cite{arunachalam2020quantum} introduced the model of \emph{quantum statistical query} ($\QSQ$) learning, to understand the power of measurement statistics for learning, with variations of it finding applications in~\cite{quek2022exponentially,hinsche2022single,sweke23,du2021learnability,gollakota2022hardness,angrisani2022quantum}. In the $\QSQ$ model, suppose a learning algorithm wants to learn an unknown $n$-qubit quantum state $\rho$, a learning algorithm can perform $\poly(n)$-many \emph{efficiently}-implementable two-outcome measurements $\{M_i,\id-M_i\}$ with noise and the goal is to learn the unknown $\rho$ well enough. Clearly this model is weaker than the model when given access to $\rho^{\otimes T}$, since the learner is only allowed access to \emph{expectation values} over a single copy of $\rho$.

In this work, we primarily consider concept classes constructed from Boolean functions. In Valiant's probabily approximately correct (PAC) learning framework,  a \emph{concept class} $\Cc\subseteq \{c:\01^n\rightarrow \01\}$ is a collection of Boolean functions. In the  PAC  model,\footnote{For simplicity, we discuss PAC learning under the \emph{uniform}-distribution, i.e., $x$ is drawn uniformly from $\01^n$.} a learning algorithm is given many uniformly random $(x^i,c^\star(x^i))$ where $c^\star\in \Cc$ is unknown and it uses these to learn $c^\star$ approximately well. Bshouty and Jackson introduced the \emph{quantum $\PAC$} ($\QPAC$) model~\cite{DBLP:conf/colt/BshoutyJ95} wherein a quantum learner is given \emph{quantum examples} $\ket{\psi_{c^\star}}^{\otimes T}$, i.e., coherent superpositions
$
\ket{\psi_{c^\star}}=\frac{1}{\sqrt{2^n}}\sum_{x}\ket{x,c^\star(x)},
$
 and it needs to learn the unknown $c^\star\in\Cc$ well enough. The complexity measure here is the \emph{sample complexity}, i.e., copies of classical or quantum examples used by the algorithm. There have been works that have looked at this model and proven positive and negative results for learning function classes (see~\cite{arunachalam2017guest} for a survey).  Surprisingly, in~\cite{arunachalam2020quantum} they observed that, many positive results using quantum examples can be transformed into algorithms in the weaker $\QSQ$ framework. This motivates the following two open questions:
\begin{quote}
\hspace{15mm}\textbf{\emph{1.}} \emph{Are entangled measurements needed for learning function classes?}

\hspace{15mm}\textbf{\emph{2.}} \emph{Do measurement statistics suffice for learning function classes?}   
\end{quote}

\subsection{Main results}
In this work, we resolve both of these questions. 
We show that $(i)$ for learning \emph{Boolean} function classes the sample complexity of learning with entangled measurements and separable measurements are polynomially related and $(ii)$ there is an exponential separation between  learning with separable measurements (even in the presence of classification noise) and learning with just measurement statistics. We now discuss these results in more detail.

\paragraph{Entangled versus Separable measurements.} Understanding the role of entangled measurements in quantum information has received attention recently. Bubeck et al.~\cite{DBLP:conf/focs/BubeckC020} gave a property testing task for which entangled measurements are \emph{necessary} for obtaining the optimal bounds. More recently, for learning classes of arbitrary \emph{quantum states} (i.e., not necessarily states constructed from function classes), there were two recent works by~\cite{DBLP:conf/focs/ChenCH021,huang2022quantum} which showed  exponential separation for learning properties of quantum states when given access entangled measurements in comparison to separable measurements. Here, we study if similar separations exist when considering \emph{function} classes, a small subset of all quantum states.  

Our first result shows that in order to exactly learn a function class, every learning algorithm using entangled measurements  can be transformed into a learning algorithm using just separable measurements with a polynomial overhead in sample complexity. In contrast, if the goal is to learn a \emph{property} about the unknown function, then entangled measurements can reduce the sample complexity exponentially compared to separable measurements.
\begin{result}
\label{res1}
    For a concept class $\Cc\subseteq \{c:\01^n\rightarrow \01\}$, if $T$ copies of $\ket{\psi_c}$ suffice to learn $c$, then $O(nT^2)$ copies to learn $c$ using only separable measurements.
\end{result}

\paragraph{$\QSQ$ versus noisy-$\QPAC$ learning.} In~\cite{arunachalam2020quantum} they ask if there is a natural class of Boolean functions for which, $\QSQ$ learning can be separated from $\QPAC$ learning. Classically it is well-known that parities separates $\SQ$ learning from $\PAC$ learning. In~\cite{arunachalam2020quantum}, it was observed that the class of parities, juntas, DNF formulas are learnable in the $\QSQ$ framework and a candidate class to separate $\QSQ$ from quantum-$\PAC$ was unclear. Furthermore, Kearns posed the question if $\SQ$ learning is equal to $\PAC$ learning with classification noise.  The seminal result of Blum et al.~\cite{blum2003noise} resolves this question by showing that the class of parity functions on $O((\log n)\cdot \log \log n)$ separates these two models of learning (under constant noise rate).  This motivates the following questions:
\begin{enumerate}[label={$(\alph*)$}]
    \item  In the noisy-quantum $\PAC$ model~\cite{DBLP:conf/colt/BshoutyJ95,arunachalam2018optimal}, a learning algorithm is given copies of 
\begin{align}
\label{eq:noisyintroexample}
    \ket{\psi^n_{c^\star}}=\frac{1}{\sqrt{2^n}}\sum_{x\in \01^n}\ket{x} \big(\sqrt{1-\eta}\ket{c^\star(x)}+\sqrt{\eta}\ket{\overline{c^\star(x)}}\big).
    \end{align}
    the goal is to learn $c^\star$. Is there a class that separates  noisy-quantum $\PAC$ from $\QSQ$ learning?
    \item Admittedly, the class constructed by Blum et al.~\cite{blum2003noise} is ``unnatural", can we obtain the   separation in $(a)$ for a \emph{natural} concept class?
    \item Does such a separation hold for non-constant error rate $\eta$?
\end{enumerate}
Here, we describe a natural problem that witnesses this separation and resolves the questions above. 
\begin{result}
\label{res2}
    There is a concept class of $n$-bit Boolean functions, based out of degree-$2$ functions that can be solved in using quantum examples even in the presence of $\eta$-classification noise in time $\poly(n,1/(1-2\eta))$, whereas every $\QSQ$ algorithm requires $2^{\Omega(n)}$ queries to learn $\Cc$. 
\end{result}

\subsection{Further applications}
Using our $\QSQ$ lower bounds for learning quadratic functions, we present two applications. First we give an exponential separation between weak and strong error mitigation, resolving an open question of Quek et al.~\cite{quek2022exponentially} who proved the same separation assuming the $\QSQ$ observables are diagonal. Second, we show super-polynomial lower bounds for learning output distributions (in the computational basis) of $n$-qubit Clifford circuits of depth $\omega(\log n)$ and Haar random circuit of depth-$O(n)$. This extends the work of~\cite{hinsche2022single,sweke23} who proved these lower bounds for $\QSQ$ algorithms wherein the observables are diagonal.

\emph{Error mitigation.} Error mitigation ($\EM$) was introduced as an algorithmic technique to reduce the noise-induced in near-term quantum devices, hopefully with a small overhead, in comparison to building a full-scale fault-tolerant quantum computer~\cite{temme2017error}. In recent times, $\EM$ has obtained a lot of attention with several works understanding how to  obtain \emph{near-term} quantum speedups as a surrogate to performing error correction.  More formally, an $\EM$ algorithm $\A$ takes as input a quantum circuit $C$, noise channel $\mathcal{N}$ and copies of $\ket{\psi'}=\mathcal{N}(C)\ket{0^n}$. In a strong $\EM$ protocol, $\A$ needs to produce samples from a distribution $D$ that satisfies $\TVD(D,\{\langle x| C|0^n\rangle^2\}_x)\leq \varepsilon$ and in the   weak $\EM$ setting, given observables $M_1,\ldots,M_k$ the goal is to approximate $\langle \psi|M_i|\psi\rangle$ upto $\varepsilon$-error. In~\cite{quek2022exponentially}, they asked the question: how large should $k$ be in order to simulate weak $\EM$ by strong $\EM$? They  show that \emph{when $M_i$s are diagonal}, then $k= 2^{\Omega(n)}$, i.e., they gave an exponential separation between weak and strong $\EM$. In this work, our main contribution is to use Result~\ref{res2} to remove the assumption and show an exponential separation unconditionally between weak and strong $\EM$.

\emph{Learning distributions.} Recently, the works of Hinsche et al.~\cite{hinsche2022single} and Nietner et al.~\cite{sweke23} initiated the study of learning output distributions of quantum circuits. In particular, they considered the following \emph{general} question: Let $\ket{\psi_U}=U\ket{0^n}$ where $U\in \mathcal{U}$ and $\mathcal{U}$ is a family of interesting unitaries and let $P_U(x)=|\langle x|U|0^n\rangle|^2$. How many $\QSQ$ queries does one need to learn the $P_U$ to total variational distance at most $\varepsilon$? To this end, the works of~\cite{hinsche2022single,sweke23} looked at \emph{diagonal} $M$s, i.e., $M=\sum_X \phi(x) \ketbra{x}{x}$ for $\phi:\01^n\rightarrow [-1,1]$ and showed the hardness of approximately learning $P_U$ for $\mathcal{U}$ being $\omega(\log n)$-depth Clifford circuits and  depth-$d\in \{\omega(\log n), O(n)\}$ and  $d\rightarrow \infty$-depth Haar random circuits.\footnote{They in fact prove that learning even a $(1-\exp(-n))$-{frac}tion of the circuits in these circuit families is hard in $\QSQ$ model when restricted to diagonal observables.} In this work, we improve upon their lower bounds by removing the assumption that $M$ is diagonal and prove a \emph{general} $\QSQ$ lower bounds for these circuit families that is considered in their work. We also observe that learning the output states of \emph{constant}-depth circuit can be done in polynomial-time using $\QSQ$ queries.

\subsection{Proof overview}
In this section we give a brief overview of the results we described above. 

\paragraph{Relating entangled and separable learning.} Our starting point towards proving this result is, that one could use a result of Sen~\cite{sen2006random} that, given copies of $\ket{\psi_{c^\star}}$, one could apply random measurements on single copies of this state and produce an $h$ that is approximately close to $c^\star$ using at most $T=(\log |C|)/\varepsilon$ copies of $\ket{\psi_{c^\star}}$.\footnote{This idea was used in an earlier work of Chung and Lin~\cite{chung2018sample} as well, but they weren't concerned with entangled and separable measurements.} So for separable learning, by picking $\varepsilon=\eta_{min}$ as the minimum distance between concepts in $\Cc$, one could exactly learn $\Cc$ using $T$ quantum examples. Proving a lower bound on entangled learning $\Cc$ is fairly straightforward as well: first observe that $(\log |\Cc|/n)$ is a lower bound on learning (since each quantum example gives $n$ bits if information and for exact learning one needs $\Omega(\log |\Cc|)$ bits of information) and also observe that $1/\eta_{m}$ is a lower since to distinguish just between $c,c'\in \Cc$ that satisfy $\Pr_x[c(x)=c'(x)]=1-\eta_{m}$, one needs $1/\eta_{m}$ copies of the unknown state. Putting this separable upper bound and entangled lower bound together gives us $\textsf{SepExact}(\Cc)\leq n\cdot \textsf{EntExact}(\Cc)^2$ for all $\Cc$. We further improve the entangled lower bound as follows: let $\eta_a=\Exp_{c,c'\in \Cc}\Pr_x [c(x)\neq c'x)]$, then using a information-theoretic argument (inspired by a prior work~\cite{arunachalam2018optimal}) one can show that the entangled sample complexity of exact learning is at least $\max \{1/\eta_m, (\log |\Cc|)/(n\eta_a)\}$. Putting this entangled lower bound with the separable upper bound, we get that
$$
\textsf{SepExact}\leq  O\Big(n\cdot \textsf{EntExact}\cdot \min\big\{\eta_{\textsf{a}}/\eta_{\textsf{m}}\ , \ \textsf{EntExact}\Big\}\Big).
$$
It is not hard to see that this bound is optimal as well for 
the class of degree-$2$ functions, i.e., 
\begin{align}
\label{eq:deg2functionclass}
\Cc=\{f(x)=x^\top A x \pmod 2: A\in \mathbb{F}_2^{n\times n}\}.
\end{align}
For this class $\eta_a=\eta_m=O(1)$ and recently it was shown~\cite{arunachalam2022optimal} that  $\textsf{SepExact}=\Theta(n^2)$ and $\textsf{EntExact}=\Theta(n)$. 

\paragraph{A combinatorial parameter to lower bound $\QSQ$ complexity.} A fundamental issue in proving our $\QSQ$ result is, what techniques could one use to prove these lower bounds? Prior to our work, in~\cite{arunachalam2020quantum} they introduced two new techniques based on differential privacy and communication complexity that give lower bounds on $\QSQ$ complexity. However, both these lower bounds are exponentially weak! In particular, the lower bounds that they could prove were linear in $n$ for learning an $n$-bit concept class.  Classically, there have been a sequence of works~\cite{Feldman13,Feldman16,feldman2017statistical} with the goal of proving $\SQ$ lower bounds and finally the notion of \emph{statistical dimension} was used to obtain close-to-optimal bounds for $\SQ$ learning certain concept classes and the breakthrough works of~\cite{feldman2017statistical} used it to settle the complexity of learning the planted $k$-biclique~distribution. 

 In this work, our technical contribution is a combinatorial parameter to lower bound $\QSQ$ complexity akin to the classical parameter. To this end, we follow a three-step approach. 
\begin{enumerate}
    \item We show that an algorithm $\A$ that learns a concept class below error $\varepsilon$ in trace distance using $\Qstat$ queries of tolerance $\tau$ can also be used to solve the following decision problem: for a fixed $\sigma$ such that $\min_{\rho \in \Cc}d_{\Tr}(\rho,\sigma) > 2(\tau+\eps)$, decide if an unknown state is either some $\rho \in \Cc$ or equals $\sigma$. Calling $\QSD$ the complexity of such decision problem, we show that:\footnote{A similar argument appeared in~\cite{sweke23} for diagonal-$\QSQ$ complexity. We want to thank the authors for discussing their work with us during the completion of our work.}
    $$
    \QSQ(\Cc)\geq \max_\sigma \{\QSD(\Cc, \sigma) -1: \min_{\rho \in \Cc}d_{\Tr}(\rho,\sigma) > 2(\tau+\eps)\}.
    $$
    \item Next, we define the notion of \emph{quantum statistical dimension} $\QSDA$: for $\tau >0$, a class of states~$\Cc$ and a $\sigma \notin \Cc$,
the $\QSDA_\tau(\Cc, \sigma)$ is the smallest integer such that there exists a distribution $\nu$ over $\Qstat$ queries $M$ satisfying $\Pr_{M \sim \nu} \left[|\Tr(M(\rho - \sigma))| > \tau \right] \geq 1/d$ for all $\rho \in \Cc$.
From an operational perspective $\QSDA$ is natural, as it can be viewed as the smallest expected number of observables that can distinguish all states in $\Cc$ from $\sigma$. We then show that if the decision algorithm succeeds with probability at least $1-\delta$, we have that: 
$$
\QSD(\Cc, \sigma)\geq (1-2\delta)\QSDA_\tau(\Cc, \sigma).
$$ 
\item  Even with this lower bound, proving bounds on $\QSDA(\Cc, \sigma)$ is non-trivial. To this end, we further give two lower bounding techniques for $\QSDA(\Cc, \sigma)$, one based on the variance of $\Qstat$ queries across $\Cc$ (inspired by the work of Kearns~\cite{kearns1998efficient}) and one based on average correlation (inspired by the work of Feldman~\cite{Feldman16}). We define two combinatorial quantities $\Var(\Cc)$ and $\QAC(\Cc, \sigma)$ which can be associated with every class and use it to lower bound $\QSDA(\Cc)$.

\end{enumerate}
Putting the three points together, the $\QSQ$ complexity of learning can be lower bounded by the variance bound and the average correlation bound as summarized in the figure below.

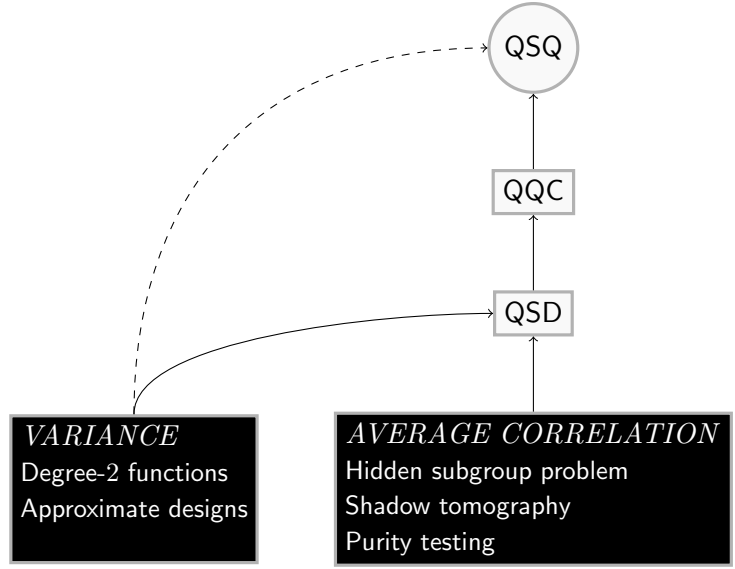
\begin{figure}[t]
\centering
\begin{tikzpicture}[
roundnode/.style={circle, draw=gray!60, fill=gray!5, very thick, minimum size=7mm},
squarednode/.style={rectangle, draw=gray!60, fill=gray!5, very thick, minimum size=5mm},
rootnode/.style={rectangle,  align=left, draw=gray!60, fill=black!100, text=white, very thick, minimum size=5mm},
]

\node[squarednode]      (qsd)                             {$\QSDA$};
\node[squarednode]      (qqc) [above=of qsd]                         {$\QSD$};

\node[roundnode]        (qsq)       [above=of qqc] {$\QSQ$};

\node[rootnode]      (avc) [below=of qsd]                         {\emph{AVERAGE CORRELATION} \\ \small \textsf{Hidden subgroup problem} \\ \small \textsf{Shadow tomography} \\ \small \textsf{Purity testing}
                };

\node[rootnode]      (var) [left=of avc]                         {\emph{VARIANCE} \\ \small \textsf{Degree-$2$ functions} \\ \small \textsf{Approximate designs}  \\ \small 
                };;

\draw[->] (qsd.north) -- (qqc.south);
\draw[->] (var.north) .. controls +(up:10mm) and +(left:15mm) .. (qsd.west);
\draw[->] (avc.north) -- (qsd.south);
\draw[->] (qqc.north) -- (qsq.south);
\draw[->, dashed] (var.north) .. controls +(up:30mm) and +(left:30mm) .. (qsq.west);

\end{tikzpicture}
\caption{We use two proof techniques, the variance bound and average correlation bound, to bound the quantum statistical dimension $\QSDA$, which in turn  lower bounds the decision problem complexity $\QSD$ which (under a set of assumptions) lower bounds $\QSQ$ learning complexity. The variance method can be sometimes used to bound $\QSQ$ directly, which we discuss in Appendix~\ref{app:yao}.}
\end{figure}

We remark that although, our quantum combinatorial parameters are inspired by the classical works of Feldman et al.~\cite{Feldman13,Feldman16,feldman2017statistical}, proving that they lower bound $\QSQ$ complexity and also giving lower bounds for the corresponding concept class using these parameters is non-trivial and is a key technical contribution of our work. Below, we apply these lower bounds to obtain our learning results.

\paragraph{$\QSQ$ versus noisy $\QPAC$.} We now sketch the proof of Result~\ref{res2}. In quantum learning theory, there are a few well-known function classes that are learnable using quantum examples: parities, juntas, DNF formulas, the coupon collector problem, learning codeword states. It was observed in~\cite{arunachalam2020quantum} that the first three classes are learnable in $\QSQ$ already, primarily because a version of Fourier sampling is implementable in $\QSQ$. In this work we first observe that the coupon collector problem and learning codeword states are also learnable in the $\QSQ$ framework. Simultaneously, there have been a few works that have shown exponential lower bounds for learning using separable measurements~\cite{DBLP:conf/focs/ChenCH021,huang2022quantum,moore2008symmetric}, but all these lower bounds correspond to learning classes of {mixed quantum states}. Prior to our work, it was open if there is very simple structured \emph{function class} such that quantum examples corresponding to this function class is hard for $\QSQ$ (in fact given our polynomial relation between entangled and separable learning, it is conceivable that for the small class of function states, $\QSQ$ are $\QPAC$ are polynomially related as well). In this work, we look at the degree-$2$ concept class $\Cc$ defined in Eq.~\eqref{eq:deg2functionclass}.

Recently it was observed that~\cite{arunachalam2022optimal} this class is learnable using $O(n)$ quantum examples with entangled measurements and $O(n^2)$ quantum examples with separable measurements. Our main contribution is in showing that the $\QSQ$ complexity of learning $\Cc$ with tolerance $\tau$ is $\Omega(2^n\cdot \tau^2)$, in particular implying that a tolerance $\tau=1/\poly(n)$ implies an exponential $\Omega(2^n)$ lower bound. The proof of this lower bound uses the variance lower bounding technique to lower bound $\QSDA$ (and in turn $\QSQ$). The essential idea is as follows: let $\ket{\psi_A}=\frac{1}{\sqrt{2^n}}\sum_x |x,x^\top A x\rangle$, then we can show that for every $M$ with $\|M\|\leq 1$, we have that the variance
$$
\Var(M\psi_A):=\Exp_A [\Tr(M\psi_A)^2]-\Big(\Exp_A [\Tr(M\psi_A)]\Big)^2
$$
is at most $O(2^{-n/2})$. Proving this upper bound is fairly combinatorial but crucially it involves understanding the properties of the ensemble $\{\ket{\psi_A}\}_A$ and its moments $A$ picked uniformly at random. Finally, we observe that the concept class can be learned given noisy quantum examples like in Eq.~\eqref{eq:noisyintroexample} using $\poly(n,1/(1-2\eta))$ examples. This gives us the claim separation between $\QSQ$ and noisy-$\QPAC$, the ``quantum analogue" of the seminal result of Blum et al.~\cite{blum2003noise} for a \emph{natural} class and with \emph{non-constant} error rate close to $1/2$.
  
\paragraph{New $\QSQ$ lower bounds.}
Using our lower bounding technique, we consider fundamental problems in quantum computing and prove $\QSQ$ lower bounds for these~tasks. 

{\emph{Approximate designs}} An application of our variance-based lower bounds shows that learning ensembles of states forming approximate Haar $2$-designs requires $\Omega(\tau^2\cdot 2^n)$ $\Qstat$ queries. This includes interesting ensembles such as stabilizer states, which are known to be efficiently learnable with even separable measurements, and $\poly(n)$-depth random circuits~\cite{Harrow_Low_2009}.

{\emph{Hidden subgroup problem.}} Coset states appear often in the hidden subgroup problem $(\textsf{HSP})$~\cite{kitaev1995quantum,simon1997power}, a fundamental problem in quantum computing. It is well-known that coset states of the Abelian \textsf{HSP} can be learned exactly from \emph{separable} measurements in polynomial sample complexity and for non-Abelian groups, it was well-known that separable measurements~\cite{moore03,hallgren2000normal} require \emph{exponential} many copies to learn a coset state. A natural question is, what is the $\QSQ$ complexity of learning coset states? Given that the \emph{standard approach} for $\textsf{HSP}$ is based of Fourier sampling and~\cite{arunachalam2020quantum} showed that a version of Fourier sampling is easy in $\QSQ$, it is natural to expect that $\textsf{HSP}$ is implementable in $\QSQ$. Surprisingly, in this work, we show that, even for \emph{Abelian} groups, the $\QSQ$ sample complexity of learning the unknown coset state is exponentially large. In particular, we show a lower bound of $\Omega(\tau^2\cdot 2^n)$ on the $\QSQ$ complexity of learning using $\Qstat(\tau)$ queries and the proof of this is done using the average correlation method. Thus, the abelian hidden subgroup problem cannot be solved in $\QSQ$ given access to only coset states.

{\emph{Shadow tomography.}} The past few years have seen a lot of works understanding shadow tomography~\cite{aaronson:shadow,huang2022quantum}.  The goal here is, given copies of an unknown quantum state $\rho$, the learner has to predict the expectation value $\Tr[O_i \rho]$ of a collection of known observables $\{O_i\}_{i\in [k]}$ up to error $\varepsilon$. It is well-known to be solvable using $\poly(n,\log k)$ copies of $\rho$. In \cite{DBLP:conf/focs/ChenCH021} the authors show $\Theta(2^n)$ copies of $\rho$ are necessary and sufficient for shadow tomography using \emph{separable measurements}. To prove the lower bounds the authors construct a many-vs-one decision task where~$\sigma = \id/{2^n}$ and 
$    \Cc = \{\rho_i = \frac{\id+3\varepsilon O_i}{2^n}\}$. 
Assuming that $\Tr[O_i] =0 $ and $\Tr[O_i^2]=2^n$ for all $O_i$, then an algorithm which solves the shadow tomography problem  also solves the decision problem. Thus, a lower bound on the latter is also a lower bound on the sample complexity of shadow tomography. Here we give a quadratically \emph{stronger} lower bound of $\Omega(4^n)$ when given access to only $\Qstat$ measurements, which we prove using the average correlation method. Our result shows that even separable measurements and not just statistics play a non-trivial role in shadow tomography.

{\emph{Does tolerance matter?}} A natural question when discussing $\QSQ$ learning is, is there a natural \emph{distribution learning} task that can be solved with tolerance $\tau_Q\geq \tau_C$ such that classical $\Stat(\tau_C)$ queries cannot solve the task but $\Qstat(\tau_Q)$ can solve the task? Here we consider the class of \emph{bi-clique states} introduced in the seminal work of Feldman et al.~\cite{feldman2017statistical}. In their work they showed that for detecting a planted bipartite $k$-clique distributions when the planted $k$-clique has size $n^{1/2-\varepsilon}$ (for constant $\varepsilon>0$), it is necessary and sufficient to make superpolynomial in $n$ many $\Stat(k/n)$ queries. Here we show that one can achieve the same query complexity quantumly but with $\Qstat(\sqrt{k/n})$, i.e., with quadratically larger tolerance we can detect a $k$-biclique. A classical $\SQ$ algorithm cannot solve this task with $\tau_C = \sqrt{k/n}$ queries.

{\emph{A doubly exponential lower bound?}} So far all our lower bounds for learning $n$-qubit quantum states are exponential in $n$. A natural question is, can one prove a doubly exponential lower bound for some task? In this work, we show that the natural problem of \emph{testing purity}, i.e., given a quantum state $\rho$ return an estimate of $\Tr[\rho^2]$, requires $\exp(2^n\tau^2)$ many $\QSQ$ queries to solve. Previous work of~\cite{DBLP:conf/focs/ChenCH021} showed that it is necessary and sufficient to use $\Theta(2^n)$ many copies of $\rho$ to test purity if we were allowed \emph{separable} measurements, but our work considers the weaker $\QSQ$ model and proves a doubly-exponential lower bound. The proof of this uses Levy's lemma and the ensemble of Haar random states to lower bound the quantum statistical dimension in a manner similar to that of the variance based technique.

{\emph{General search problems}.} While it is beyond the scope of this paper's main goals in showing separations in learning complexity, in Appendix~\ref{app:qssd} we give a combinatorial parameter, based on $\QSDA$, \emph{characterizing} the complexity of general \emph{search} problems.  We remark that the problems we considered above can all be cast as a search problem. We show that this combinatorial parameter both upper and lower bounds the number of $\Qstat$ queries needed to solve a search problem.

\subsection{Open questions}
There are a few natural questions that our work opens up: $(i)$ Can we show that for every concept class $\Cc$, we have that $\textsf{SepExact}\leq O(n\cdot \textsf{EntExact})?$,\footnote{ We remark that for distribution-independent approximate learning, this inequality is true. This uses the result of~\cite{chung2018sample} (Proposition~\ref{prop:CLsep} below). They showed how to learn every concept class $\Cc$ using $O\big((\log |C| )/\varepsilon\big)$ many quantum examples. By Sauer's lemma, we know that $\log |\Cc|\leq n\cdot \textsf{VC}(\Cc)$, so the $\textsf{SepApp}$ upper bound is $O\big(n \textsf{VC}(\Cc)/\varepsilon\big)$. In~\cite{arunachalam2018optimal} it was shown that the quantum entangled sample complexity of $\varepsilon$-PAC learning is $\Omega(\textsf{VC}(\Cc)/\varepsilon)$ for every $\Cc$. Putting together both these bounds proves the inequality.}
 $(ii)$ Following~\cite{hinsche2022single,sweke23} what is $\QSQ$ complexity of learning the output distribution of constant-depth circuits assuming we only use \emph{diagonal} operators? $(iii)$ Theoretically our work separates weak and strong error mitigation, but in practice there are often assumptions in the mitigation protocols, can we show theoretical separations even after making these assumptions? $(iv)$ Classically it is well-known that several algorithms can be cast into the $\QSQ$ framework, is the same true quantumly? If so, that would suggest that $\QSQ$ as a unifying framework for designing new learning algorithms. $(v)$ What is the $\QSQ$ complexity of the Hidden subgroup problem when given access to \emph{function} states, instead of coset states (which is the case only in the \emph{standard approach}).

\paragraph{Acknowledgements.} We thank Ryan Sweke for useful discussions and sharing a preprint of their work~\cite{sweke23}. We also thank the Quantum algorithms group at IBM, Eric Chitambar, and Felix Leditzky for discussions. SA and LS were partially supported by IBM through the IBM-Illinois Discovery Accelerator Institute.

\paragraph{Organization.} In Section~\ref{sec:prelim} we introduce a few useful theorems and all the learning models we will be concerned with in this paper, in Section~\ref{sec:entsep} we prove our polynomial relation between entangled and separable measurements, in Section~\ref{sec:qsqupper} we give a few concept class that can be learned in $\QSQ$ and prove our main theorem which gives a lower bounding technique for $\QSQ$ learning, in Section~\ref{sec:deg2sep} we prove our exponential separation between noisy-$\QPAC$ and $\QSQ$, in Section~\ref{sec:examples} we give further examples of states for which one can show an exponential lower bound for $\QSQ$ learning and finally in Section~\ref{sec:app} we discuss our two applications in error mitigation and learning distributions.


\section{Preliminaries}
\label{sec:prelim}


\subsection{Quantum Information Theory}
Qubits are unit vectors in $\C^2$ with a canonical basis given as $\ket{0} = \begin{pmatrix} 1 \\ 0 \end{pmatrix}$ and $\ket{1} = \begin{pmatrix} 0 \\ 1 \end{pmatrix}$. Pure quantum states composed of $n$ qubits are unit vectors in $(\C^2)^{\otimes n} \cong \C^{2^n}$. Following Dirac notation we indicate a state by $\ket{\psi}$ and its conjugate transpose, an element of $(\C^{2^n})^*$, by $\bra{\psi}$. Outer products are indicated by the notation $\ketbra{\psi}{\phi}$. Mixed states $\rho$ are positive semi-definite linear operators on $\C^{2^n}$ such that $\Tr[\rho] = 1$. Via the spectral theorem, any mixed state can be decomposed into a probability distribution over projectors onto pure states $\rho = \sum_i \lambda_i \ketbra{u_i}{u_i}$ where $\lambda_i \geq 0$ and $\sum_i \lambda_i = 1$. Pure states correspond to rank $1$ mixed states. We will often label the mixed state corresponding to a pure state $\ket{\psi}$ by $\psi$ (instead of $\ketbra{\psi}{\psi}$). Note that all states in the orbit of a pure states under phases $e^{i\theta})$ correspond to the same mixed state. Thus, $\ket{\psi}$ and $e^{i\theta}\ket{\psi}$ are considered to be the same state. Positive operator valued measures (POVMs) capture the most general notion of quantum measurements. These are given by ensembles of positive semi-definite operators $\{E_i\}_i$ such that $\sum_i E_i = \id$. The probability of measurement outcome $i$ is given by $\Tr[E_i \rho]$. Observables $M$ are bounded Hermitian operators on $\C^{2^n}$, representing a measurement with values assigned to the outcomes. The expectation value of an observable is given by $\Tr[M\rho]$. Quantum computers operate by applying gates (unitary matrices) to (ideally) pure states which evolve like $\ket{\psi'} = U\ket{\psi}$ where $U$ is some unitary. One important gate we will see is the Hadamard gate, given by $H = \frac{1}{\sqrt{2}}\begin{pmatrix} 1 & 1 \\ 1 & -1 \end{pmatrix}$.

\subsection{Notation}
Throughout for $n\geq 1$, we let $[n]=\{1,\ldots,n\}$. For quantum states $\ket{\psi},\ket{\phi}$, we denote $\TRD(\ket{\phi},\ket{\psi})$ as the trace distance between the states $\ket{\phi}$ and $\ket{\psi}$, which, for pure states, is defined as
$$
\TRD(\ket{\phi},\ket{\psi})=\sqrt{1-|\langle \phi|\psi\rangle|^2},
$$
and for mixed states $\rho,\sigma$, define $\TRD(\rho,\sigma)=\frac{1}{2}\|\rho-\sigma\|_1$ (where $\|\cdot \|_1$ is the matrix Schatten-$1$ norm). For mixed states $\rho,\sigma$, the operational interpretation of the trace distance is given by
$$
\TRD(\rho,\sigma)=\frac{1}{2}\max_{M: \| M \| \leq 1} | \Tr(M (\rho - \sigma))|.
$$
For distributions, $P,Q:\01^n\rightarrow [0,1]$, we say $\Pr_{x\sim P}$ to mean $x$ is sampled from $P$. We indicate sampling $x$ uniformly from some set $\X$ by $x\sim \X$. For example, $x\sim \01^n$ and $\rho \sim \Cc$ respectively denote sampling a bitstring or a state from an ensemble uniformly at random. Similarly, we say $\TVD(P,Q)$ to mean the total-variational distance between $P,Q$ defined as $\TVD(P,Q)=\frac{1}{2}\sum_x |P(x)-Q(x)|$. Similarly, define the Hellinger distance between $P,Q$ as
$$
\HD(P,Q)^2=1-\Big(\sum_x \sqrt{P(x) Q(x)}\Big)^2.
$$

\subsection{Useful theorems}

\begin{theorem}[Levy's lemma]
\label{thm:levy}
Let $f:S^{d-1}\rightarrow \C$ be a function on the $d$-dimensional unit sphere $S^{d-1}$. Let $k$ be such that for every $\ket{\phi},\ket{\psi}\in S^{d-1}$, we have that
$$
|f(\ket{\psi})-f(\ket{\phi})|\leq k\cdot \|\phi-\psi\|_2,
$$
then there exists a constant $C>1$ such that 
$$
\Pr\Big[|f(\psi)-\Exp[f(\psi)]|\geq \varepsilon\Big]\leq 2\exp(-Cd \varepsilon^2/k^2),
$$
where the probability and expectations are over the Haar measure on $S^{d-1}$.
\end{theorem}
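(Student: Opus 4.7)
The plan is to derive Lévy's lemma from the classical spherical isoperimetric inequality combined with the Lipschitz hypothesis on $f$. First I would reduce to the case of a real-valued $f$ by writing $f = f_1 + i f_2$, observing that each of $f_1, f_2$ is $k$-Lipschitz, and applying a union bound at threshold $\varepsilon / \sqrt{2}$; this costs only a factor of $2$ in the prefactor and is absorbed into $C$. Next, identifying $\mathbb{C}^d \cong \mathbb{R}^{2d}$ isometrically realizes the Haar measure on the complex unit sphere in $\mathbb{C}^d$ as the uniform probability measure on the real sphere $S^{2d-1} \subset \mathbb{R}^{2d}$, and the Euclidean Lipschitz constant of $f$ is preserved under this identification. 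So the task reduces to proving Lévy-type concentration on $S^N$ for real-valued $k$-Lipschitz functions, with $N = 2d - 1$; the extra factor between $d$ and $N$ is again absorbed into the constant.

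For the core step I would invoke the spherical isoperimetric inequality: for every measurable $A \subseteq S^N$ with uniform measure $\mu(A) \geq 1/2$, the $r$-neighborhood $A_r = \{x : d(x, A) \leq r\}$ (in either Euclidean or geodesic distance, which differ by an absolute constant) satisfies $\mu(A_r) \geq 1 - \exp(-C' N r^2)$ for an absolute constant $C' > 0$. Let $M_f$ denote a median of the real-valued $f$ and take $A = \{x : f(x) \leq M_f\}$, so that $\mu(A) \geq 1/2$. The Lipschitz hypothesis gives $f(x) \leq M_f + k r$ for every $x \in A_r$, and setting $r = \varepsilon / k$ yields
\begin{equation*}
\mu\bigl( f > M_f + \varepsilon \bigr) \;\leq\; \exp\!\bigl( -C' N \varepsilon^2 / k^2 \bigr).
\end{equation*}
Running the same argument with $-f$ in place of $f$ gives the matching lower tail, which together produce concentration of $f$ around its median $M_f$ at the claimed rate.

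To finish, I would transfer concentration about the median into concentration about the expectation by using the tail bound itself: integrating yields $|\mathbb{E}[f] - M_f| \leq \mathbb{E}|f - M_f| = \int_0^\infty \mu(|f - M_f| > t)\, dt = O(k / \sqrt{N})$. For $\varepsilon$ much larger than this shift, the discrepancy is absorbed by mildly reducing the constant $C'$, and for smaller $\varepsilon$ the claimed inequality is trivial once $C$ is chosen small enough to make the right-hand side exceed $1$. The main obstacle is purely bookkeeping, namely tracking the relationships between complex and real dimension, between Euclidean and geodesic metric, and between median and mean, while producing a single clean constant $C$ in the exponent; the substantive ingredient, the spherical isoperimetric inequality, I would cite from the standard references rather than reprove, since its proof via two-point symmetrization or Brunn--Minkowski arguments is classical.
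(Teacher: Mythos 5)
The paper does not prove this statement: Lévy's lemma is quoted in the preliminaries as a standard fact and used as a black box (e.g.\ in the purity-testing lower bound, where the explicit constant $2/(36\pi^3)$ appears). Your proposal is the classical proof of that standard fact --- reduction to real-valued Lipschitz functions, identification of the Haar measure on the complex unit sphere with the uniform measure on $S^{2d-1}\subset\R^{2d}$, the spherical isoperimetric inequality applied to the sublevel set of a median, and the median-to-mean transfer via $|\Exp[f]-M_f|\le\int_0^\infty \mu(|f-M_f|>t)\,dt = O(k/\sqrt{d})$ --- and it is correct in all essential respects. One small point of friction with the statement as written in the paper: the theorem asserts the existence of a constant $C>1$, but the argument you give (and every standard reference) produces a \emph{small} absolute constant, e.g.\ $2/(9\pi^3)$, and indeed your own closing remark --- that for $\varepsilon \lesssim k/\sqrt{d}$ the bound must be made trivially true by taking $C$ small --- shows that $C>1$ cannot in general be achieved; the paper's later application uses $1/(18\pi^3)$, so the ``$C>1$'' should be read as ``$C>0$''. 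With that reading, your proof is complete modulo the cited isoperimetric inequality, which is reasonable to import rather than reprove.
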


\begin{fact}
\label{fact:2statedistinguishingbound}
Let binary random variable $\mathbf{b}\in\01$ be uniformly distributed. Suppose an algorithm is given $\ket{\psi_{\mathbf{b}}}$ (for unknown $b$) and is required to guess whether $\mathbf{b}=0$ or $\mathbf{b}=1$. It will guess correctly with probability at most $\frac{1}{2}+\frac{1}{2}\sqrt{1-|\braket{\psi_0}{\psi_1}|^2}$.

Note that if we can distinguish $\ket{\psi_0},\ket{\psi_1}$ with probability $\geq 1-\delta$, then $|\langle\psi_0,\psi_1\rangle|\leq 2\sqrt{\delta(1-\delta)}$.
\end{fact}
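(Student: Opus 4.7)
The plan is to reduce an arbitrary distinguishing strategy to a two-outcome POVM, then apply the Helstrom bound specialized to pure states. Without loss of generality, any algorithm that outputs a guess in $\{0,1\}$ can be modeled by a POVM $\{E_0, E_1\}$ with $E_0 + E_1 = \Id$ and $E_0, E_1 \succeq 0$: on receiving $\ket{\psi_{\mathbf b}}$, the algorithm measures and outputs the observed index. The success probability is
\begin{equation*}
P_{\text{succ}} \;=\; \tfrac{1}{2}\Tr(E_0 \,\psi_0) + \tfrac{1}{2}\Tr(E_1 \,\psi_1) \;=\; \tfrac{1}{2} + \tfrac{1}{2}\Tr\bigl(E_1(\psi_1 - \psi_0)\bigr),
\end{equation*}
using $E_0 = \Id - E_1$ and $\Tr(\psi_0) = 1$.

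Next I would bound $\Tr\bigl(E_1(\psi_1 - \psi_0)\bigr)$. Since $\psi_1 - \psi_0$ is traceless and Hermitian, its positive and negative eigenvalues sum to $+\tfrac{1}{2}\|\psi_1-\psi_0\|_1$ and $-\tfrac{1}{2}\|\psi_1-\psi_0\|_1$ respectively, so $0 \preceq E_1 \preceq \Id$ implies $\Tr(E_1(\psi_1-\psi_0)) \leq \tfrac{1}{2}\|\psi_1-\psi_0\|_1 = \TRD(\psi_0,\psi_1)$ (the maximum being attained by the projector onto the positive eigenspace of $\psi_1-\psi_0$). For two rank-one pure states, a direct spectral computation on the $2$-dimensional span of $\ket{\psi_0}, \ket{\psi_1}$ yields $\TRD(\psi_0, \psi_1) = \sqrt{1 - |\langle \psi_0|\psi_1\rangle|^2}$, which matches the definition recalled earlier in the notation subsection. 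Substituting gives the claimed upper bound
\begin{equation*}
P_{\text{succ}} \;\leq\; \tfrac{1}{2} + \tfrac{1}{2}\sqrt{1 - |\langle \psi_0|\psi_1\rangle|^2}.
\end{equation*}

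For the second assertion, I would simply rearrange: if $P_{\text{succ}} \geq 1-\delta$, then $1 - 2\delta \leq \sqrt{1-|\langle \psi_0|\psi_1\rangle|^2}$, so $|\langle \psi_0|\psi_1\rangle|^2 \leq 1 - (1-2\delta)^2 = 4\delta(1-\delta)$, giving $|\langle \psi_0|\psi_1\rangle| \leq 2\sqrt{\delta(1-\delta)}$.

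The only real step requiring any care is the optimization of $\Tr(E_1 (\psi_1-\psi_0))$ over $0 \preceq E_1 \preceq \Id$, which is the standard Helstrom argument and uses nothing beyond the Jordan decomposition of a Hermitian operator; this is the identification I would emphasize, since everything else is a rearrangement. No obstacle beyond this elementary optimization is expected.
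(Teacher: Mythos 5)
Your proof is correct. The paper states this fact without proof (it is the standard Helstrom bound for two equiprobable pure states), and your argument — reducing to a two-outcome POVM, optimizing $\Tr(E_1(\psi_1-\psi_0))$ via the Jordan decomposition, and using $\frac{1}{2}\|\psi_0-\psi_1\|_1=\sqrt{1-|\braket{\psi_0}{\psi_1}|^2}$ for pure states — is exactly the canonical derivation one would supply; the rearrangement for the second assertion is also fine (implicitly using $\delta\leq 1/2$ so that squaring $1-2\delta\leq\sqrt{1-|\braket{\psi_0}{\psi_1}|^2}$ is valid).
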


\begin{fact}
\label{fact:deg2}
    The class of degree-$2$ phase states $\{\frac{1}{\sqrt{2^n}}\sum_x (-1)^{x^\top A x}\ket{x}:A\in \mathbb{F}_2^{n\times n}\}$ can be learned using $O(n)$ entangled measurements in time $O(n^3)$.
\end{fact}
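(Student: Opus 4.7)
The plan is a Bell-sampling strategy on pairs of copies of $\ket{\psi_A}$. Starting from $\ket{\psi_A}\otimes\ket{\psi_A} = \frac{1}{2^n}\sum_{x,y}(-1)^{x^\top Ax + y^\top Ay}\ket{x,y}$, apply transversal CNOTs from the first register into the second and change variables $z := x\oplus y$; the $\mathbb{F}_2$ identity $(x\oplus z)^\top A(x\oplus z) = x^\top Ax + x^\top(A+A^\top)z + z^\top Az$ then turns the state into $\frac{1}{2^n}\sum_{x,z}(-1)^{x^\top(A+A^\top)z + z^\top Az}\ket{x,z}$. Hadamarding the first register and summing out $x$ collapses this to $\frac{1}{\sqrt{2^n}}\sum_z (-1)^{z^\top Az}\ket{(A+A^\top)z,\, z}$, so a computational-basis measurement on both registers returns a uniformly random $z\in\mathbb{F}_2^n$ together with $Sz$, where $S:=A+A^\top$ is symmetric with zero diagonal over $\mathbb{F}_2$ and the overall phase $(-1)^{z^\top Az}$ is discarded by the measurement.

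Collecting $m = n + O(\log(1/\delta))$ such Bell samples gives $mn$ linear equations on the $\binom{n}{2}$ unknowns $\{S_{ij}\}_{i<j}$, and since $m$ uniformly random vectors in $\mathbb{F}_2^n$ span the whole space except with probability $2^{-\Omega(m-n)}$, Gaussian elimination recovers $S$ uniquely with failure probability at most $\delta$ in time $O(n^3)$. This pins down every off-diagonal coefficient in the canonical representation $x^\top Ax \equiv d\cdot x + \sum_{i<j}S_{ij}x_ix_j \pmod{2}$, where only the diagonal $d := (A_{11},\ldots,A_{nn})$ remains unknown. To learn $d$, consume one additional copy of $\ket{\psi_A}$ and apply the diagonal unitary $U_S:\ket{x}\mapsto (-1)^{\sum_{i<j}S_{ij}x_ix_j}\ket{x}$, implementable with $O(n^2)$ controlled-$Z$ gates now that $S$ is known; this strips the quadratic phase and produces the Bernstein--Vazirani state $\frac{1}{\sqrt{2^n}}\sum_x (-1)^{d\cdot x}\ket{x}$, from which a transversal Hadamard layer followed by a computational-basis measurement returns $d$ deterministically.

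The total sample cost is $2m+1 = O(n)$, the only entangled measurements are Bell measurements across disjoint pairs of copies, and the running time is dominated by the $O(n^3)$ Gaussian elimination. The only conceptually non-routine ingredient is the $\mathbb{F}_2$ identity that turns the Bell circuit into a linear query on the symmetric part $S$ rather than on $A$ itself, together with the observation that the residual phase $(-1)^{z^\top Az}$ is irrelevant in the computational basis, so no diagonal information is destroyed by the Bell step and the single extra copy suffices to recover $d$ via Bernstein--Vazirani. Once these two observations are in place, the rest of the argument is standard linear algebra plus a single round of Fourier sampling.
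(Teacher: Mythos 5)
Your proposal is correct and follows essentially the same route as the paper's proof: Bell sampling via transversal CNOTs on pairs of copies to extract $(A+A^\top)z$ for uniformly random $z$, Gaussian elimination to recover the off-diagonal part, and then stripping the known quadratic phase with a diagonal unitary so that a Hadamard layer reveals the diagonal. The only cosmetic difference is the order of the Hadamard and the measurement of the second register, which does not change the argument.
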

\begin{proof}
The learning algorithm uses the Bell-sampling procedure: given two copies of $\ket{\phi_A}=\frac{1}{\sqrt{2^n}}\sum_{x} (-1)^{x^\top A x}\ket{x}$, perform $n$ CNOTs between the first copy and  second copy and measure the second register to obtain a uniformly random $y\in \mathbb{F}_2^n$.  The resulting quantum state is
$$
\frac{1}{\sqrt{2^n}}\sum_x (-1)^{x^\top A x+(x+y)^\top A(x+y)}\ket{x}=\frac{(-1)^{y^\top Ay}}{\sqrt{2^n}}\sum_x (-1)^{x^\top(A+A^\top)\cdot y}\ket{x}.
$$
The learning algorithm then applies the $n$-qubit Hadamard transform and measures to obtain bit string $(A+A^\top)\cdot y$. Repeating this process $O(n)$ many times, one can learn $n$ linearly independent constraints about $A$. Using Gaussian elimination, this procedure allows the learner to learn the off-diagonal elements of $A$. {In order to learn the diagonal elements of $A$ a learning algorithm applies the operation $\ket{x}\rightarrow (-1)^{x_{i}\cdot x_j}\ket{x}$  if $A_{ij}=1$ for every $i\neq j$. The resulting quantum state is $\sum_x (-1)^{\sum_i x_i A_{ii}}\ket{x}$ and the learner can apply the $n$-qubit Hadamard transform to learn the diagonal elements of $A$.}  
\end{proof}
\begin{fact}
\label{fact:traceandhellinger}
    For distributions $p,q:\mathcal{X}\rightarrow [0,1]$, define $\ket{\psi_p}=\sum_{x\in \mathcal{X}} \sqrt{p(x)}\ket{x}$ and $\ket{\psi_q}$ similarly.~Then $$\TRD(\ket{\psi_p},\ket{\psi_q})^2 \leq 2\TVD(p,q).
    $$
\end{fact}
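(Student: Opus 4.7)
The plan is to reduce the trace distance between the two pure states to the fidelity (Bhattacharyya coefficient) between $p$ and $q$, then use the standard sandwich inequality that relates Hellinger distance to total variation distance.

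First I would unpack the trace distance. Since $\ket{\psi_p}$ and $\ket{\psi_q}$ are pure states, by the formula $\TRD(\ket{\phi},\ket{\psi})^2 = 1-|\langle\phi|\psi\rangle|^2$ given in the preliminaries, and since
\[
\langle \psi_p|\psi_q\rangle \;=\; \sum_{x\in \mathcal{X}} \sqrt{p(x)q(x)} \;=:\; F(p,q)
\]
is real and nonnegative, we have $\TRD(\ket{\psi_p},\ket{\psi_q})^2 = 1 - F(p,q)^2$. This also matches the definition $\HD(p,q)^2 = 1 - F(p,q)^2$ recorded just above in the preliminaries, so the statement is equivalent to $\HD(p,q)^2 \le 2\,\TVD(p,q)$.

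Next I would factor $1 - F^2 = (1-F)(1+F) \le 2(1-F)$, using only that $F \le 1$ (which follows from Cauchy–Schwarz applied to $\sqrt{p(x)}$ and $\sqrt{q(x)}$, together with normalization of $p$ and $q$). So it suffices to prove $1 - F(p,q) \le \TVD(p,q)$.

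For the last step I would rewrite $1 - F(p,q) = \tfrac{1}{2}\sum_x (\sqrt{p(x)} - \sqrt{q(x)})^2$ by expanding and using $\sum_x p(x) = \sum_x q(x) = 1$. Then I would apply the elementary bound $(\sqrt{a}-\sqrt{b})^2 \le |a-b|$ for $a,b \ge 0$ (which follows from multiplying and dividing by $\sqrt{a}+\sqrt{b}$ and noting $|\sqrt{a}-\sqrt{b}| \le \sqrt{a}+\sqrt{b}$) to conclude
\[
1 - F(p,q) \;\le\; \tfrac{1}{2}\sum_x |p(x)-q(x)| \;=\; \TVD(p,q).
\]
Chaining the inequalities gives $\TRD(\ket{\psi_p},\ket{\psi_q})^2 = 1-F^2 \le 2(1-F) \le 2\,\TVD(p,q)$, as desired.

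The proof is entirely elementary; there is no real obstacle beyond identifying the correct one-line inequality $(\sqrt{a}-\sqrt{b})^2 \le |a-b|$. The only place where one must be mildly careful is ensuring the inner product $\langle \psi_p|\psi_q\rangle$ is real and nonnegative so that $|\langle\psi_p|\psi_q\rangle|^2 = F(p,q)^2$, which is immediate from the definition of $\ket{\psi_p}$ with nonnegative amplitudes $\sqrt{p(x)}$.
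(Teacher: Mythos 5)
Your proof is correct and follows essentially the same route as the paper: both reduce $\TRD(\ket{\psi_p},\ket{\psi_q})^2$ to $1-F^2$ with $F=\sum_x\sqrt{p(x)q(x)}$, bound $1-F^2\le 2(1-F)$, and then bound $1-F$ (the squared Hellinger distance) by $\TVD(p,q)$. The only difference is that the paper cites an external proposition for the last inequality, whereas you prove it directly via $(\sqrt{a}-\sqrt{b})^2\le|a-b|$, which makes your argument self-contained.
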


\begin{proof}
In order to see the fact, first we have that
    $$
\TRD(\ket{\psi_p},\ket{\psi_q})^2 =1-\langle\psi_p|\psi_q\rangle^2=1-\Big(\sum_x \sqrt{p(x)q(x)}\Big)^2.
$$
By the definition of the Hellinger distance, we have that $d_H(p,q)^2 =1-\sum_x \sqrt{p(x)q(x)}$, so we have
\begin{align*}
\TRD(\ket{\psi_p},\ket{\psi_q})^2 & =2\big(1- \sum_x \sqrt{p(x)q(x)} \big)-\big(1-\sum_x \sqrt{p(x)q(x)}\big)^2=2d_H(p,q)^2-d_H(p,q)^4\\
& \leq 2d_H(p,q)^2\leq 2\TVD(p,q), 
\end{align*}

where the final inequality used~\cite[Proposition~1]{daskalakis2018distribution}.
\end{proof}

\begin{fact}
    \label{lem:statetodistribution}
    For a distribution $p:\01^n\rightarrow [0,1]$, let $\ket{\psi_p}=\sum_x \sqrt{p(x)}\ket{x}$.  Suppose there exists an algorithm that makes $t$ $\Qstat$  queries and learns $p$ up to total variation distance $\varepsilon^2$, then there exists an algorithm  that makes $t$ $\Qstat$ queries and learns $\ket{\psi_p}$ up to trace distance $\sqrt{2}\varepsilon$.
\end{fact}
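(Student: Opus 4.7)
The plan is to exhibit the state-learning algorithm as a black-box wrapper around the distribution-learning algorithm $\mathcal{A}_p$ that is assumed to exist. The key observation is that both algorithms are permitted the same oracle, namely $\Qstat$ queries, and a $\Qstat$ query is specified solely by an observable $M$ and tolerance $\tau$; the oracle returns a $\tau$-approximation to $\Tr[M\ket{\psi_p}\bra{\psi_p}]$ regardless of whether we think of the underlying object as the distribution $p$ or as the associated pure state $\ket{\psi_p}$. In particular, restricting $M$ to diagonal observables already gives classical $\Stat$ access to $p$, but for our reduction we do not even need to distinguish the two types of queries: the state's $\Qstat$ oracle serves verbatim as the distribution's $\Qstat$ oracle.

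Concretely, the state-learning algorithm $\mathcal{A}_\psi$ proceeds as follows. It simulates $\mathcal{A}_p$ query-for-query: each $\Qstat(\tau)$ query $M$ that $\mathcal{A}_p$ issues is forwarded to $\mathcal{A}_\psi$'s own $\Qstat(\tau)$ oracle, and the returned scalar is passed back to $\mathcal{A}_p$. After at most $t$ such queries, $\mathcal{A}_p$ halts and outputs a classical description of a distribution $q$ satisfying $\TVD(p,q) \leq \varepsilon^2$ by hypothesis. The algorithm $\mathcal{A}_\psi$ then outputs the classical description of the associated state $\ket{\psi_q} = \sum_x \sqrt{q(x)} \ket{x}$. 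By construction, $\mathcal{A}_\psi$ uses exactly the same number $t$ of $\Qstat$ queries as $\mathcal{A}_p$.

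To bound the accuracy, I would apply Fact~\ref{fact:traceandhellinger} directly to the pair $(p,q)$, which gives
$$
\TRD(\ket{\psi_p},\ket{\psi_q})^2 \leq 2\,\TVD(p,q) \leq 2\varepsilon^2,
$$
and hence $\TRD(\ket{\psi_p},\ket{\psi_q}) \leq \sqrt{2}\,\varepsilon$, which is the claimed bound. There is no real obstacle here: the reduction is essentially definitional, and all the quantitative content is already packaged inside Fact~\ref{fact:traceandhellinger} (which in turn routes through the Hellinger-to-TV inequality). The only point worth making explicit in the write-up is that ``learning'' on both sides is understood in the query model rather than the sample model, so that oracle calls transfer losslessly and the query counts are preserved exactly.
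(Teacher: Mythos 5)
Your proposal is correct and follows exactly the paper's own route: forward the $\Qstat$ queries unchanged, take the output distribution $q$, and apply Fact~\ref{fact:traceandhellinger} to get $\TRD(\ket{\psi_p},\ket{\psi_q})\leq\sqrt{2}\varepsilon$. You merely spell out the query-forwarding step a bit more explicitly than the paper does; the substance is identical.
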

\begin{proof}
By Fact~\ref{fact:traceandhellinger}, first observe that
$
\TRD(\ket{\psi_p},\ket{\psi_q})^2_{\Tr}\leq 2\TVD(p,q). 
$
  Now the lemma statement follows immediately: suppose there exists an algorithm that makes $\Qstat$ queries to $\ket{\psi_p}$ and outputs a $q$ such that $\TVD(p,q)\leq \varepsilon^2$, then that implies that $\TRD(\ket{\psi_p}, \ket{\psi_q}) \leq \sqrt{2}\varepsilon$.
\end{proof}

\begin{fact}[Discriminating coherent encodings of distributions]
\label{Lemma:norms}
For distributions $D, D_0$ over some domain $X$,  let $\ket{\psi} =  \sum_{x \in X} \sqrt{D(x)} \ket{x}$ and $\ket{\psi_0} = \sum_{x \in D_0} \sqrt{D_0(x)}\ket{x}$. We have that
$$
\max_{\phi: X \rightarrow [-1,1] } \left|\sum_x (D(x) - D_0(x))\phi(x)\right| = 2\TVD(D, D_0).
$$ 
\end{fact}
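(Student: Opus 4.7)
The plan is to prove the identity directly from the definition of total variation distance, $\TVD(D,D_0)=\tfrac{1}{2}\sum_x |D(x)-D_0(x)|$, by proving matching upper and lower bounds on the maximum. Note that the quantum states $\ket{\psi},\ket{\psi_0}$ do not appear on either side of the displayed equation, so the statement reduces to a purely classical fact about distributions and bounded test functions; I will simply prove it at that level.

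For the upper bound, I would take any $\phi:X\to[-1,1]$ and apply the triangle inequality together with the pointwise bound $|\phi(x)|\leq 1$:
\[
\left|\sum_x (D(x)-D_0(x))\phi(x)\right|
\leq \sum_x |D(x)-D_0(x)|\,|\phi(x)|
\leq \sum_x |D(x)-D_0(x)|
= 2\TVD(D,D_0).
\]
Taking the maximum over $\phi$ gives $\max_\phi |\sum_x (D(x)-D_0(x))\phi(x)|\leq 2\TVD(D,D_0)$.

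For the matching lower bound, I would exhibit an explicit maximizer. Define $\phi^\star(x)=\sign(D(x)-D_0(x))$, taking any value in $[-1,1]$ (e.g.\ $0$) where $D(x)=D_0(x)$. Then $\phi^\star$ lies in $[-1,1]$ and
\[
\sum_x (D(x)-D_0(x))\phi^\star(x) = \sum_x |D(x)-D_0(x)| = 2\TVD(D,D_0),
\]
so the maximum is at least $2\TVD(D,D_0)$. Combining the two inequalities proves the claimed equality.

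There is essentially no obstacle here; the only subtlety worth mentioning is handling ties where $D(x)=D_0(x)$, which is resolved by noting that $\sign$ can be chosen arbitrarily on that set since those terms contribute zero. The statement is the standard $L^1$/$L^\infty$ duality expression for TVD, so the proof is a short two-line argument, and the surrounding context in the paper is what makes it useful (connecting $\Qstat$ queries with observable $\sum_x \phi(x)\ketbra{x}{x}$ applied to coherent encodings of $D$ and $D_0$).
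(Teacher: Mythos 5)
Your proof is correct and follows essentially the same route as the paper: the paper's proof exhibits the same maximizer $\phi = \sign(D(x)-D_0(x))$ (written via indicators) to achieve $2\TVD(D,D_0)$, and you additionally spell out the easy upper-bound direction, which the paper leaves implicit. No gaps.
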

\begin{proof}
 Choose $\phi = \delta(D(x) > D_0(x)) - \delta (D_0(x) \leq D(x))$, where $\delta (\cdot)$ is an indicator function from Boolean clauses to $\lbrace 1, 0 \rbrace$ which evaluates to $1$ if its argument evaluates to true and evaluates to $0$ if its argument is false.
    \begin{equation}
    \begin{aligned}
     \left|\sum_{x \in X} (D(x) - D_0(x)) \phi(x) \right| &= \sum_{x \in X; D(x) > D_0(x)} (D(x) - D_0(x)) + \sum_{x \in X; D(x) \leq D_0(x)} (D_0(x) - D(x))\\
     &= \sum_{x \in X} |D(x) - D_0(x)| = 2 \TVD(D, D_0),
     \end{aligned}
    \end{equation}
    hence proving the fact.
\end{proof}

 \begin{fact}
 \label{fact:zippel}
        For distinct $A,B\in \mathbb{F}_2^{n\times n}$, we have that
        $
            \Pr_{x\sim \01^n}[x^\top Ax \neq x^\top Bx] \geq 1/4.
        $
    \end{fact}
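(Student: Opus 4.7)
The plan is to reduce to a Schwartz--Zippel-type bound for a single quadratic polynomial over $\mathbb{F}_2$. Setting $C = A + B$ (mod $2$) and using $x_i^2 = x_i$, the polynomial
$$Q(x) := x^\top C x \pmod 2 \;=\; \sum_i C_{ii}\, x_i + \sum_{i<j} (C_{ij} + C_{ji})\, x_i x_j$$
satisfies $\Pr_x[x^\top A x \neq x^\top B x] = \Pr_x[Q(x) = 1]$. Under the reading that $A,B$ are distinct as elements of the concept class~\eqref{eq:deg2functionclass} (i.e., they induce distinct functions on $\{0,1\}^n$), $Q \not\equiv 0$, so at least one of the coefficients $C_{ii}$ or $C_{ij}+C_{ji}$ is nonzero. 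The goal reduces to showing $\Pr_x[Q(x) = 1] \geq 1/4$ for every such $Q$.

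I would split on whether $Q$ contains a quadratic monomial. If $C_{ij} + C_{ji} = 0$ for every $i < j$, then $Q$ is a nonzero $\mathbb{F}_2$-linear form, hence equals $1$ on exactly half of $\{0,1\}^n$, giving probability $1/2 \geq 1/4$. Otherwise, fix indices $i<j$ (WLOG $i=1,\, j=2$) with $C_{ij}+C_{ji} = 1$, and condition on $x_3, \ldots, x_n$. The conditional polynomial takes the form
$$Q_{x_3,\dots,x_n}(x_1, x_2) \;=\; \alpha + \beta x_1 + \gamma x_2 + x_1 x_2$$
for some $\alpha, \beta, \gamma \in \mathbb{F}_2$ depending on the fixed coordinates. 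Summing the four values over $(x_1, x_2) \in \{0,1\}^2$ in $\mathbb{F}_2$ gives $4\alpha + 2\beta + 2\gamma + 1 \equiv 1$, so an odd (hence positive) number of them equals $1$. This yields $\Pr_{x_1,x_2}[Q_{x_3,\dots,x_n}(x_1,x_2) = 1] \geq 1/4$, and averaging over $x_3, \ldots, x_n$ preserves the bound.

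The main subtlety is not algebraic but in the interpretation of the hypothesis: two matrices $A \neq B$ can induce the same quadratic form over $\mathbb{F}_2$ (e.g., the elementary matrices $E_{12}$ and $E_{21}$ both encode $x_1 x_2$), so ``distinct'' must be read as inducing distinct functions --- equivalently, $C = A+B$ has either a nonzero diagonal entry or an asymmetric off-diagonal pair. Once this is granted, the proof is just the degree-$2$ case of Schwartz--Zippel over $\mathbb{F}_2$, and the one-line $\mathbb{F}_2$-summation above is the crux.
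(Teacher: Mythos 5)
Your proof is correct. The paper's own proof is a one-liner: it makes the same reduction to $\Pr_x[x^\top(A+B)x \neq 0]$ and then simply cites the Schwartz--Zippel lemma for Boolean functions from \cite{nisan_szegedy_1994}, which gives that a nonzero degree-$d$ multilinear polynomial over $\mathbb{F}_2$ is nonzero with probability at least $2^{-d}$. You instead prove the $d=2$ case from scratch: after multilinearizing via $x_i^2 = x_i$, you split on whether a genuine quadratic monomial survives, and in that case condition on the remaining variables and observe that $\alpha + \beta x_1 + \gamma x_2 + x_1x_2$ sums to $1$ over the four assignments, so it takes the value $1$ an odd (hence positive) number of times. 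This is a clean, self-contained replacement for the citation and buys nothing beyond eliminating the external reference, at the cost of a few extra lines. More valuable is the subtlety you flag that the paper's proof silently skips: distinct matrices $A \neq B$ can induce the \emph{same} quadratic form (e.g.\ $E_{12}$ and $E_{21}$ both encode $x_1x_2$), in which case $x^\top(A+B)x \equiv 0$ and the stated bound fails; the fact is only correct when ``distinct'' is read as inducing distinct functions, i.e.\ $C = A+B$ has a nonzero diagonal entry or an asymmetric off-diagonal pair. That reading is the one actually used elsewhere in the paper (the concept class is really indexed by the induced functions), but your proposal is the more careful of the two on this point.
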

    \begin{proof}
        $Pr_{x\sim \01^n}[x^\top Ax \neq x^\top Bx]  = \Pr_{x\sim \01^n}[x^\top Ax \oplus x^\top Bx \neq 0] \geq \frac{1}{4},$ where the inequality follows from the Schwartz-Zippel lemma for Boolean functions \cite{nisan_szegedy_1994}.
    \end{proof}

\subsection{Learning models}
In this section we first describe the learning models we will be concerned with in this paper.

\paragraph{Classical $\PAC$ learning.} Valiant~\cite{DBLP:journals/cacm/Valiant84} introduced the classical Probably Approximately Correct ($\PAC$) learning model. In this model,  a \emph{concept class} $\Cc\subseteq \{c:\01^n\rightarrow \01\}$ is a collection of Boolean functions. The learning algorithm $\A$ obtains \emph{labelled examples} $(x,c(x))$ where $x\in \01^n$ is uniformly random and $c\in \Cc$ is the \emph{unknown} {target} function.\footnote{More generally in $\PAC$ learning, there is an unknown distribution $D:\01^n\rightarrow [0,1]$ from which $x$ is drawn.  Throughout this paper we will be concerned with uniform-distribution $\PAC$ learning, i.e., $D$ is the uniform distribution, so we describe the learning model for the uniform distribution for simplicity.} The goal of an $(\varepsilon,\delta)$-learning algorithm $\A$ is the following: for every $c\in \Cc$, given labelled examples $\{(x^i,c(x^i))\}_i$, with probability $\geq 1-\delta$ (over the randomness of the labelled examples and the internal randomness of the algorithm), output a hypothesis $h:\01^n\rightarrow \01$ such that $\Pr_x [c(x)=h(x)]\geq 1-\varepsilon$. The $(\varepsilon,\delta)$-sample complexity of a learning algorithm $\A$ is the maximal number of labelled examples used, maximized over all $c\in \Cc$. The $(\varepsilon,\delta)$-sample complexity of learning $\Cc$ is the minimal sample complexity  over all $(\varepsilon,\delta)$-learners for $\Cc$. Similarly the $(\varepsilon,\delta)$-time complexity   of learning $\Cc$ is the total number of time steps used by an optimal $(\varepsilon,\delta)$-learner for $\Cc$.

\paragraph{Quantum $\PAC$ learning.}  The quantum $\PAC$ was introduced by Bshouty and Jackson~\cite{DBLP:conf/colt/BshoutyJ95} wherein, they allowed the learner access to quantum examples of the form
$$
\ket{\psi_c}=\frac{1}{\sqrt{2^n}}\sum_{x\in \01^n}\ket{x,c(x)}.
$$
Note that measuring $\ket{\psi_c}$ in the computational basis produces a classical labelled example, so quantum examples are at least as strong as classical examples. Understanding their strength and weakness has been looked at by several works (we refer an interested reader to the survey~\cite{arunachalam2017guest}). Like the classical complexities, one can similarly define the $(\varepsilon,\delta)$-sample and time complexity for learning $\Cc$ as the quantum sample complexity (i.e., number of quantum examples $\ket{\psi_c}$) used and quantum time complexity (i.e., number of quantum gates used in the algorithm) of an optimal $(\varepsilon,\delta)$-learner for $\Cc$.

\paragraph{Quantum $\PAC$ learning with classification noise.}  Classically, the $\eta$-classification noise model is defined as follows: for an unknown $c\in \Cc$, a learning algorithm is given uniformly random $x\in \01^n$ and $b\in \01$ where, $b=c(x)$ with probability $1-\eta$ and $b=\overline{c(x)}$ with probability $\eta$. In the same work, Bshouty and Jackson~\cite{DBLP:conf/colt/BshoutyJ95} defined quantum learning with classification noise, wherein a learning algorithm is given access to 
$$
\ket{\psi^n_c}=\frac{1}{\sqrt{2^n}}\sum_{x\in \01^n}\ket{x} \otimes (\sqrt{1-\eta}\ket{c(x)}+\sqrt{\eta}\ket{\overline{c(x)}}).
$$
Such quantum examples have been investigated in prior works~\cite{DBLP:conf/colt/BshoutyJ95,arunachalam2018optimal,grilo2019learning}.

\paragraph{Learning with entangled and separable measurements.} Observe that in the usual definition of $\QPAC$ above, a learning algorithm is given access to $\ket{\psi_c}^{\otimes T}$ and needs to learn the unknown $c\in \Cc$. In this paper we make the distinction between the case where the learner uses entangled measurements, i.e., perform an arbitrary operation on copies of $\ket{\psi_c}$ versus the setting where the learner uses separable measurements, i.e., performs a single-copy measurement on every copy of $\ket{\psi_c}$ in the learning algorithm. When discussing learning with entangled and separable measurements, in this paper we will be concerned with \emph{exact} learning, i.e., with probability $\geq 2/3$, the learner needs to \emph{identify} $c$. We denote $\textsf{EntExact}$ as the  sample complexity of learning with entangled measurements and $\textsf{SepExact}$ as the  sample complexity of learning with separable measurements.
  
\paragraph{Quantum statistical query learning.}
We now discuss the $\QSQ$ model, following the definitions given in~\cite{arunachalam2020quantum}.  We first discuss the classical statistical query ($\SQ$) model for learning an unknown concept  $c^\star \in \Cc$. Classically, the learner has access to a \emph{statistical query oracle} $\Stat$, that on input a function $\phi:\01^{n+1}\rightarrow [0,1]$ and a \emph{tolerance} $\tau$ and returns a number $\alpha$ satisfying
	\[  
	\Big |\alpha - \mathop{\Exp}_{x\sim \01^n}[\phi(x,c^*(x))] \Big | \leq \tau\;.
	\]
A classical $\SQ$ algorithm can adaptively make a sequence of $\Stat$ queries $\{(\phi_i,\tau_i)\}_i$ and based on the responses $\{\alpha_i\}$, with probability $\geq 1-\delta$ it outputs a hypothesis $h:\01^n\rightarrow \01$. The goal of the classical $\SQ$ algorithm is to  output an $h$ such that $\Pr_x [h(x)=c(x)]\geq 1-\varepsilon$. The query complexity of a classical $\SQ$ algorithm is the number of $\Stat$ queries the algorithm makes and the time complexity is the total number of gates used by the algorithm and in the description of the~hypothesis. 

 A natural way to extend the learning model is to allow the algorithm \emph{quantum statistical queries}.  In the classical case, one can think of the input $\phi$ to the $\Stat$ oracle as a specification of a \emph{statistic} about the distribution of examples $(x,c^*(x))$, and the output of the $\Stat$ oracle is an  {\em estimation} of $\phi$: one can imagine that the oracle receives i.i.d.~labeled examples $(x,c^*(x))$ and empirically computes an estimate of $\phi$, which is then forwarded to the learning algorithm.	In the quantum setting, one can imagine the analogous situation where the oracle receives copies of the quantum example state $\ket{\psi_{c^*}}$, and performs a measurement  indicated by the {\em observable} $M$ on each copy and outputs an estimate of $\langle \psi_{c^*}| M | \psi_{c^*}\rangle$. 

 Relaxing the assumption of example states, in order to learn an unknown (mixed) quantum state $\rho$ in the $\QSQ$ model the learner makes $\Qstat$ queries that takes as input an operator $M\in \C^{2^{n+1}\times 2^{n+1}}$ and tolerance $\tau$ and outputs a $\tau$-approximation of $\Tr(M\rho)$, i.e.,
$$
\Qstat: (M,\tau)\mapsto \alpha \in [\Tr(M\rho)+\tau,\Tr(M\rho)-\tau].
$$
In order to learn the concept class using quantum examples, we define $\rho=\ketbra{\psi_c}{\psi_c}$, so the action of the $\Qstat$ oracle is defined as
$$
\Qstat: (M,\tau)\mapsto \alpha \in [\langle \psi_c|M|\psi_c\rangle+\tau, \langle\psi_c|M|\psi_c\rangle-\tau].
$$

In this case, the goal of a $\QSQ$ learner is to output a hypothesis quantum state $\sigma$ that  satisfies $d_{\Tr}(\rho,\sigma)\leq \varepsilon$. If $\rho = \psi_f$ and $\sigma = \psi_h$ this translates to $\Pr_{x \sim \01^n}[f(x) = h(x) ] \geq 1 - \sqrt{\eps}$. Thus, without loss of generality we will often talking about learning states with respect to trace distance, even for learning example states. Our results generally do not depend on assuming that the learner outputs an example state even when the concept class is composed of example states. Clearly, if the learning problem is hard without such a restriction, it is no easier with such a restriction.

We emphasize that the learning algorithm is still a \emph{classical} randomized algorithm and only receives statistical estimates of measurements on quantum~examples. The  quantum query complexity of the $\QSQ$ algorithm is the number of $\Qstat$ queries the algorithm makes and the quantum time complexity is the total number of gates used by the algorithm and in the description of the hypothesis. There are three ways to motivate the $\QSQ$ model
\begin{enumerate}
    \item Clearly any binary measurement $\{M, \Id -M\}$ can be simulated with a $\Qstat$ query to $M$ or $\Id - M$. In the opposite direction, any observable $M$ such that $\Vert M \Vert \leq 1$ can be converted into the POVM $\{\frac{\Id+M}{2}, \frac{\Id - M}{2}\}$. Thus, a $\Qstat$ query is essentially the same model as approximately sampling from a binary POVM up to total variational distance $\Theta(\tau)$. One can think of $\QSQ$ as a sort of noisy variant of binary measurements. From a theoretical perspective, performing noisy $2$-outcomes separable measurements are weak (and easier to implement) than arbitrary separable measurements, which are in turn weaker (and easier to implement) than entangled measurements. So, it is useful to understand the power of such noisy measurements in quantum learning theory and  $\QSQ$ captures this question in a theoretical~framework.

    \item One could envision a situation where quantum states $\rho$ are prepared in the ``cloud" and the \emph{classical} learning algorithm needs to only interact with the cloud \emph{classically}. An efficient $\QSQ$ model allows a quantum advantage in learning in this framework.
    \item The $\QSQ$ model naturally extends recent works~\cite{quek2022exponentially,hinsche2022single,sweke23} wherein they consider the limitations of classical $\SQ$ algorithms for learning a quantum state $\psi_U=U\ket{0^n}$, i.e. they consider  the model where $M$ is diagonal specifiable as $M=\sum_x \phi(x)\ketbra{x}{x}$, then 
    $$
    \langle \psi_U|M|\psi_U\rangle=\sum_x \phi(x)\langle x|U|0^n\rangle^2=\sum_x \phi(x)P_U(x)=\E_{x\sim P_U}[\phi(x)],
    $$
    which is precisely $\alpha_\phi$ they assume access to, in order to learn the unknown $U$.
\end{enumerate}
\emph{Throughout this paper}, for notational convenience we use the following notation: $(i)$ for an $n$-bit problem, when we do not specify a tolerance for the $\Qstat$ oracle, we implicitly assume that the tolerance is $\tau=1/\poly(n)$, $(ii)$  we always make $\Qstat$ queries with an operator $M$ that satisfies $\|M\|\leq 1$, so we do not explicitly state this when discussing $\Qstat$ queries, $(iii)$ We say a $n$-bit concept class $\Cc$ is $\QSQ$ learnable if $\Cc$ can be learned using $\poly(n)$ many $\Qstat$ queries, each with tolerance $\tau=1/\poly(n)$ and observable $M$ which is implementable using $\poly(n)$ many gates.


\section{Relating separable and entangled measurements}
\label{sec:entsep}

Instead of considering just function states over the uniform distribution, we now take the quantum algorithm to have access to states of the form
\begin{align}
    \ket{\psi_c} & = \sum_x \sqrt{D(x)}\ket{x,f(x)}\ ,
\end{align}
where $D$ is some distribution. We further assume that the support of $D$ is the full space $\{0,1\}^n$, to ensure that $\ket{\psi_{c_1}} = \ket{\psi_{c_2}}$ implies that $c_1 = c_2$.

Before proving our main theorem, we will use the following proposition, which was proven earlier in~\cite{chung2018sample} in the context of learning quantum channels. We restate their proposition in the context of learning pure states using parameters that suit our application.

\begin{proposition}
\label{prop:CLsep}
    Let $\mathcal{C}\subseteq \{c:\01^n\rightarrow \01\}$ and $\varepsilon>0$. Given
    \begin{align}
        T=O\left(\frac{\log \vert \mathcal{C} \vert + \log 1/\delta}{\varepsilon}\right)
    \end{align}
    copies of $\ket{\psi_c} = \sum_x \sqrt{D(x)} \ket{x, c(x)}$ for an unknown $c\in \Cc$, there exists an algorithm that uses separable measurements and, with probability $\geq 1-\delta$, outputs a $c'\in \Cc$ such that $\Pr_x [c(x)=c'(x)]\geq 1-\varepsilon$.
\end{proposition}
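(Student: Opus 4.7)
The plan is to reduce the problem to a finite pure-state identification task and invoke a random-basis measurement theorem of Sen. The key overlap identity is
\[ \langle \psi_c | \psi_{c'}\rangle = \Pr_x[c(x)=c'(x)], \]
so two concepts disagreeing on an $\varepsilon$-fraction of inputs have $\langle \psi_c|\psi_{c'}\rangle \leq 1-\varepsilon$, and hence trace distance $\TRD(\ket{\psi_c},\ket{\psi_{c'}}) = \sqrt{1-(1-\varepsilon)^2} = \Omega(\sqrt{\varepsilon})$. This quadratic loss is precisely what turns the usual $1/\delta^2$ in pure-state identification into the $1/\varepsilon$ appearing in the copy bound.

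First I would extract a maximal $\varepsilon$-packing $\mathcal{C}_\varepsilon \subseteq \mathcal{C}$ under Hamming distance, so that every two distinct $c_1,c_2 \in \mathcal{C}_\varepsilon$ satisfy $\Pr_x[c_1(x) \neq c_2(x)] \geq \varepsilon$, while every $c \in \mathcal{C}$ is within Hamming distance $\varepsilon$ of some $c^\ast \in \mathcal{C}_\varepsilon$. Since $|\mathcal{C}_\varepsilon| \leq |\mathcal{C}|$, it suffices to identify such a $c^\ast$; the output then automatically satisfies $\Pr_x[c(x)=c^\ast(x)] \geq 1-\varepsilon$. By the overlap identity, $\{\ket{\psi_{c^\ast}} : c^\ast \in \mathcal{C}_\varepsilon\}$ is an ensemble of $N \leq |\mathcal{C}|$ pure states whose pairwise trace distance is $\Omega(\sqrt{\varepsilon})$.

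Next I would invoke Sen's random-basis measurement theorem, which states that given copies of one of $N$ pure states with pairwise trace distance at least $\delta$, measuring each copy in an independent Haar-random basis and applying maximum-likelihood decoding identifies the state correctly with probability $\geq 1-\delta'$ provided the number of copies is $T = O((\log N + \log(1/\delta'))/\delta^2)$. Each measurement acts on a single copy, so the overall procedure is a separable measurement followed by classical post-processing. Substituting $N \leq |\mathcal{C}|$ and $\delta = \Omega(\sqrt{\varepsilon})$ gives the advertised bound $T = O((\log|\mathcal{C}| + \log(1/\delta))/\varepsilon)$.

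The main subtlety is that the true state $\ket{\psi_c}$ is not literally one of the ensemble elements, only $O(\sqrt{\varepsilon})$-close in trace distance to the nearest $\ket{\psi_{c^\ast}}$. To handle this I would argue by data processing: the distribution of outcomes from $T$ independent single-copy random-basis measurements on $\ket{\psi_c}^{\otimes T}$ is within total variation $O(T\sqrt{\varepsilon})$ of the corresponding distribution on $\ket{\psi_{c^\ast}}^{\otimes T}$, so shrinking the packing constant and the target failure probability $\delta'$ by a constant factor absorbs this error. This robustness step is the one place where a little care is needed; once it is in place, Sen's theorem closes the argument and the algorithm outputs the nearest packing element $c^\ast \in \mathcal{C}_\varepsilon \subseteq \mathcal{C}$, which is $\varepsilon$-close to $c$ as required.
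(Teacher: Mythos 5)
Your overall route matches the paper's: the paper does not prove this proposition from scratch but cites Chung and Lin's result (itself built on Sen's random-measurement theorem) for general state ensembles, which gives $T = O\big((\log|\Cc| + \log 1/\delta)/\eps'^2\big)$ copies to learn to trace distance $\eps'$ using single-copy measurements, and then observes that for function states a trace-distance error of $\eps' = \Theta(\sqrt{\eps})$ already forces $\Pr_x[c(x)\neq c'(x)] = O(\eps)$, which converts the $1/\eps'^2$ into $1/\eps$. You correctly identify both ingredients (single-copy random-basis measurements and the quadratic relation between overlap and Hamming disagreement), so the skeleton and the parameter accounting are right.

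However, your robustness step contains a genuine gap. By restricting the candidate set to an $\eps$-packing $\Cc_\eps$, the true state $\ket{\psi_c}$ need no longer belong to the ensemble, and you propose to absorb this by bounding the total-variation distance between the $T$-fold outcome distributions of $\ket{\psi_c}$ and of the nearest packing element $\ket{\psi_{c^*}}$ by $O(T\sqrt{\eps})$. With $T = \Theta\big((\log|\Cc|)/\eps\big)$ this bound is $\Theta\big((\log|\Cc|)/\sqrt{\eps}\big) \gg 1$, i.e., vacuous, and it cannot be rescued by shrinking constants. The clean fix is to drop the packing entirely: run the identification procedure over all of $\Cc$ with the true state as a member of the ensemble, and use the guarantee (as in Chung--Lin) that the output is $\eps'$-close to the input in trace distance, rather than that the exact state is identified; taking $\eps' = \Theta(\sqrt{\eps})$ then yields $\Pr_x[c(x)\neq c'(x)] \leq \eps$ directly. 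If you insist on keeping the packing, you would instead need a hypothesis-selection guarantee of the multiplicative form $d(p,q^*) \leq 3\min_i d(p,q_i) + \eps''$ applied to the single-copy outcome distributions, not a union bound over the $T$ copies.
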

Note that the proposition in \cite{chung2018sample} deals with general states and they require $T = O\Big(\frac{\log \vert \mathcal{C} \vert + \log 1/\delta}{\eps^2}\Big)$ copies of $\ket{\psi_c}$, where $\eps$ is now error with respect to trace distance. For our purposes the factor of $1/{\eps^2}$ is improved to $1/{\eps}$ by the fact that $\TRD(\psi_f,\psi_h) = \sqrt{\eps}$ implies that $\Pr_x[f(x)\neq h(x)] = \Theta(\eps)$.

\begin{theorem}
    Let $\Cc$ be a concept class $\Cc\subseteq \{c:\01^n\rightarrow \01\}$ and 
    $$
    \eta_{\textsf{m}}=\min_{c,c'\in \Cc}\Pr_x[c(x)\neq c'(x)], \quad    \eta_{\textsf{a}}=\mathop{\Exp}_{c,c'\in \Cc}\Pr_x[c(x)\neq c'(x)].
    $$Then we have that 
$$
\textsf{SepExact}(\Cc)\leq  O\Big(n\cdot \textsf{EntExact}(\Cc)\cdot \min\big\{\eta_{\textsf{a}}/\eta_{\textsf{m}}\ , \ \textsf{EntExact}(\Cc)\Big\}\Big).
$$ Furthermore, there exists $\Cc$ for which this inequality is tight.
\end{theorem}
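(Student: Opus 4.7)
The plan is to combine an approximate-learning upper bound for separable measurements (via Proposition~\ref{prop:CLsep}) with two information-theoretic lower bounds for entangled learning, and then to verify tightness on the degree-$2$ function class. For the separable upper bound, I will invoke Proposition~\ref{prop:CLsep} with any error parameter strictly below $\eta_{\mathsf{m}}$, say $\varepsilon = \eta_{\mathsf{m}}/3$. Since $\eta_{\mathsf{m}}$ is by definition the minimum distance between distinct concepts in $\Cc$, any hypothesis $h \in \Cc$ satisfying $\Pr_x[h(x) \neq c(x)] < \eta_{\mathsf{m}}$ must equal the target $c$, so approximate learning becomes exact learning. This gives
\[
\textsf{SepExact}(\Cc) \leq O\!\left(\tfrac{\log |\Cc|}{\eta_{\mathsf{m}}}\right).
\]

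For $\textsf{EntExact}(\Cc)$ I will establish two lower bounds. First, pick $c,c'\in \Cc$ achieving $\Pr_x[c(x)\neq c'(x)] = \eta_{\mathsf{m}}$; then $\langle\psi_c|\psi_{c'}\rangle = 1 - \eta_{\mathsf{m}}$, so by Fact~\ref{fact:2statedistinguishingbound} distinguishing $\psi_c^{\otimes T}$ from $\psi_{c'}^{\otimes T}$ with constant bias forces $(1-\eta_{\mathsf{m}})^{2T} = O(1)$, and hence $\textsf{EntExact}(\Cc) = \Omega(1/\eta_{\mathsf{m}})$. Second, I will use a Holevo/Fano argument in the style of~\cite{arunachalam2018optimal}: with $c$ drawn uniformly from $\Cc$, exact identification requires the Holevo quantity of the ensemble $\{\psi_c^{\otimes T}\}$ to exceed $\log|\Cc| - O(1)$; carefully bounding this quantity in terms of the average pair overlap $\mathbb{E}_{c,c'}|\langle\psi_c|\psi_{c'}\rangle|^2 = \mathbb{E}(1-X)^2$ with $\mathbb{E} X = \eta_{\mathsf{a}}$ refines the trivial dimension bound $\log|\Cc|/(n+1)$ to
\[
\textsf{EntExact}(\Cc) \geq \Omega\!\left(\tfrac{\log|\Cc|}{n\,\eta_{\mathsf{a}}}\right).
\]

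Combining the three inequalities produces both terms of the $\min$: the second lower bound gives $\log|\Cc|/\eta_{\mathsf{m}} \leq O(n\,\textsf{EntExact}(\Cc)\cdot \eta_{\mathsf{a}}/\eta_{\mathsf{m}})$, while multiplying the first lower bound with the trivial $\textsf{EntExact}(\Cc) \geq \log|\Cc|/(n+1)$ gives $\log|\Cc|/\eta_{\mathsf{m}} \leq O(n\cdot \textsf{EntExact}(\Cc)^2)$. For tightness I will instantiate the degree-$2$ class from Eq.~\eqref{eq:deg2functionclass}: Fact~\ref{fact:zippel} gives $\eta_{\mathsf{m}} \geq 1/4$; a short calculation using that $x^\top B x$ is uniform in $\{0,1\}$ for uniform random $B$ and fixed $x\neq 0$ yields $\eta_{\mathsf{a}} = \Theta(1)$; and combining Fact~\ref{fact:deg2} with~\cite{arunachalam2022optimal} gives $\textsf{EntExact}(\Cc) = \Theta(n)$ and $\textsf{SepExact}(\Cc) = \Theta(n^2)$, matching the upper bound $O(n\cdot \textsf{EntExact}(\Cc)\cdot \min\{\eta_{\mathsf{a}}/\eta_{\mathsf{m}},\textsf{EntExact}(\Cc)\}) = O(n^2)$ up to constants.

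The main obstacle will be the refined entangled lower bound $\textsf{EntExact}(\Cc) = \Omega(\log|\Cc|/(n\eta_{\mathsf{a}}))$: the trivial dimension bound is immediate from Holevo's inequality, but extracting the $1/\eta_{\mathsf{a}}$ improvement requires carefully controlling the entropy of the ensemble-average state $\mathbb{E}_c\,\psi_c^{\otimes T}$ using only second-moment pair-overlap statistics rather than the crude rank bound. The remaining pieces (separable upper bound via Proposition~\ref{prop:CLsep}, two-state distinguishing via Fact~\ref{fact:2statedistinguishingbound}, and verification of tightness on the degree-$2$ example) are fairly mechanical.
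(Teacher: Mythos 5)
Your proposal is correct and follows essentially the same route as the paper: the separable upper bound $O(\log|\Cc|/\eta_{\mathsf{m}})$ from Proposition~\ref{prop:CLsep} at scale $\varepsilon=\Theta(\eta_{\mathsf{m}})$, the two entangled lower bounds $\Omega(1/\eta_{\mathsf{m}})$ (via Fact~\ref{fact:2statedistinguishingbound}) and $\Omega(\log|\Cc|/(n\eta_{\mathsf{a}}))$ (via the mutual-information/Holevo argument reduced to a single copy), and tightness on the degree-$2$ class. The only cosmetic difference is in the key entropy step: you control $S(\Exp_c\psi_c)$ through the purity $\Tr[(\Exp_c\psi_c)^2]=\Exp_{c,c'}|\braket{\psi_c}{\psi_{c'}}|^2\geq(1-\eta_{\mathsf{a}})^2$, while the paper exhibits the test vector $u=\frac{1}{|\Cc|}\sum_{c'}\ket{\psi_{c'}}$ and applies Chebyshev's sum inequality; both yield a top eigenvalue at least $1-2\eta_{\mathsf{a}}$ and the same $O(\eta_{\mathsf{a}}(n+\log(1/\eta_{\mathsf{a}})))$ entropy bound.
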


\begin{proof}
First observe that
    \begin{align}
    \label{eq:seplowerbound1}
\textsf{SepExact}\leq 2/\eta_m\cdot \log |\Cc|.
    \end{align}
This is easy to see: fix $\varepsilon=\eta_m/2$ in Proposition~\ref{prop:CLsep} and consider a separable approximate algorithm that, given copies of $\ket{\psi_c}$,  an approximate learning algorithm outputs $c'$ such that $\Pr_x [c(x)\neq c'(x)]\leq \varepsilon$, then $c=c'$ by definition of $\eta_m$, hence this algorithm is a separable \emph{exact} learner.

Next, we prove two lower bounds on $\textsf{EntExact}$: 
\begin{align}
\label{eq:entlowerbound}
\textsf{EntExact}\geq \max\left\{\frac{1}{\eta_m},\frac{\log|\Cc|}{n\eta_a}\right\}
\end{align}
To see the first lower bound in Eq.~\eqref{eq:entlowerbound}, observe the following: consider the $c,c'\in \Cc$ for which $\Pr_x[c(x)\neq c'(x)]=\eta_m$, then every exact learning algorithm  needs to distinguish between $c,c'$. Since $\langle\psi_c |\psi_{c'}\rangle=1-\eta_m$, by Fact~\ref{fact:2statedistinguishingbound}, this implies a lower bound of $T=\Omega(1/\eta_m)$ many quantum examples to distinguish between $c,c'$ with bias $\Omega(1)$.

To see the second lower bound in Eq.~\eqref{eq:entlowerbound}, first note that $1-\eta_a=\Exp_{c,c'\in \Cc}\Exp_x[c(x)= c'(x)]$. Next, observe that 
 \begin{align}
 \label{eq:entupperbound1}
 \textsf{EntExact}\geq \frac{\log|\Cc|}{n\eta_a}.
 \end{align}
 The proof of this is similar to the information-theoretic proof in~\cite{arunachalam2018optimal}. 		We prove the lower bound for $\Cc$ using a three-step information-theoretic technique.   Let $\mathbf{A}$ be a random variable that is uniformly distributed over $\Cc$. Suppose $\mathbf{A}=c_V$, and let $\mathbf{B}=\mathbf{B}_1\ldots\mathbf{B}_T$ be $T$ copies of the quantum example 
 $$
 \ket{\psi_c}= \sum_{x\in \01^{n}}\sqrt{D(x)} \\ket{x,c(x)}
 $$ 
 for $c\in \Cc$. The random variable $\mathbf{B}$ is a function of the random variable~$\mathbf{A}$. 
		The following upper and lower bounds on $I(\mathbf{A}:\mathbf{B})$ are similar to~\cite[Theorem~12]{arunachalam2018optimal} and we omit the details of the first two steps here.
		\begin{enumerate}
			\item $I(\mathbf{A}:\mathbf{B})\geq \Omega(\log|\Cc|)$ because $\mathbf{B}$ allows one to recover $\mathbf{A}$ with high probability.
			\item $I(\mathbf{A}:\mathbf{B})\leq T\cdot I(\mathbf{A}:\mathbf{B}_1)$ using a chain rule for mutual information.
			
			\item $I(\mathbf{A}:\mathbf{B}_1)\leq O(n\cdot \eta_a)$.\\[1mm]
			\emph{Proof (of 3).} Since $\mathbf{A}\mathbf{B}$ is a classical-quantum state, we have 
			$$
			I(\mathbf{A}:\mathbf{B}_1)= S(\mathbf{A})+S(\mathbf{B}_1)-S(\mathbf{A}\mathbf{B}_1)=S(\mathbf{B}_1),
			$$ 
			where the first equality is by definition and the second equality uses $S(\mathbf{A})=\log |\Cc|$ since $\mathbf{A}$ is uniformly distributed over~$\Cc$, and $S(\mathbf{A}\mathbf{B}_1)=\log |\Cc|$ since the matrix 
			$$
			\sigma=\frac{1}{|\Cc|} \sum_{c\in \Cc} \ketbra{c}{c}\otimes \ketbra{\psi_c}{\psi_c}
			$$ is block-diagonal with $|\Cc|$ rank-1 blocks on the diagonal. It thus suffices to bound the entropy of the (vector of singular values of the) reduced state of $\mathbf{B}_1$, which~is
			$$
			\rho=\frac{1}{|\Cc|}\sum_{c\in \Cc}\ketbra{\psi_c}{\psi_c}.
			$$
			Let $\sigma_0\geq \sigma_1\geq\cdots\geq \sigma_{2^{n+1}-1}\geq 0$ be the singular values of $\rho$. Since~$\rho$ is a density matrix, these form a probability distribution. Now observe that $\sigma_0\geq 1-\eta_a$: consider the vector $u=\frac{1}{|\Cc|}\sum_{c'\in \Cc}\ket{\psi_{c'}}$  and observe that
   \begin{align*}
  u^\top \rho u &=\frac{1}{|\Cc|^3}\sum_{c,c',c''\in \Cc}\langle \psi_c|\psi_{c'}\rangle\langle \psi_c|\psi_{c''}\rangle\\
  &=\Exp_{c} \Big[\Exp_{c'}[\langle\psi_c|\psi_{c'}\rangle]\Big]\cdot\Big[\Exp_{c''}[\langle \psi_c|\psi_{c''}\rangle]\Big]\\
  &\geq \Big(\mathop{\Exp}_{c,c'}[\langle\psi_c|\psi_{c'}\rangle]\Big)\cdot \Big(\mathop{\Exp}_{c,c''}[\langle \psi_c|\psi_{c''}\rangle]\Big)=\big(\mathop{\Exp}_{c,c'\in \Cc}\Pr_x[c(x)=c'(x)]\big)^2\geq 1-2\eta_a,
   \end{align*}
where the first inequality is by Chebyshev's sum inequality (since all the inner products are non-negative) and the second  inequality  follows from the definition of $\eta_a$. Hence we have that $\sigma_0=\max_{u}\{u^\top \rho u / u^\top u\} \geq 1-2\eta_a$ (where we used that $\|u\|_2\leq 1$). 

   Let $\mathbf{N}\in\{0,1,\ldots,2^{n+1}-1\}$ be a random variable with probabilities $\sigma_0,\sigma_1,\ldots,\sigma_{2^{n+1}-1}$, and $\mathbf{Z}$ an indicator for the event ``$\mathbf{N}\neq 0$.'' Note that $\mathbf{Z}=0$ with probability $\sigma_0\geq 1-2\eta_a$, and $H(\mathbf{N}\mid \mathbf{Z}=0)=0$. By a similar argument as in~\cite[Theorem~15]{arunachalam2018optimal}, we~have 
			\begin{align*}
			S(\rho) & =H(\mathbf{N})=H(\mathbf{N},\mathbf{Z})=H(\mathbf{Z})+H(\mathbf{N}\mid\mathbf{Z})\\
			& =H(\sigma_0)+\sigma_0\cdot H(\mathbf{N}\mid \mathbf{Z}=0) + (1-\sigma_0)\cdot H(\mathbf{N}\mid \mathbf{Z}=1) \\
			& \leq H(\eta_a) + \eta_a(n+1)\\
   &\leq O(\eta_a(n+\log (1/\eta_a)) 
			\end{align*}
			using $H(\alpha)\leq O(\alpha\log (1/\alpha))$.
		\end{enumerate}
		Combining these three steps implies $T=\Omega(\log |\Cc| / (n\eta_a))$.  Now putting the relations between $\textsf{EntExact},\textsf{SepExact}$ together we get
$$
\textsf{SepExact}\leq n\cdot \eta_a/\eta_m\cdot \textsf{EntExact}\leq  n\cdot \textsf{EntExact}^2,
$$
hence we have the desired upper bound as in the theorem statement\footnote{We state the theorem as below, since it is apriori unclear as to why $1/\eta_m$ is a lower bound on $\textsf{EntExact}$.}
$$
\textsf{SepExact}\leq  O\Big(n\cdot \textsf{EntExact}\cdot \min\left\{\eta_{\textsf{a}}/\eta_{\textsf{m}},\textsf{EntExact}\right\}\Big).
$$
 To show that this inequality is optimal, observe that:  if $\Cc$ is the class of degree-$2$ phase states, i.e., $\Cc=\{f_A(x)=x^\top A x:A\in \01^{n\times n}\}$, then $\eta_m=\eta_a= \Theta(1)$ by Fact~\ref{fact:zippel}. We saw in Fact~\ref{fact:deg2} that this class can be learned using $\Theta(n)$ entangled measurements, so $\textsf{EntExact}= \Theta(n)$ and the above upper bound implies  $\textsf{SepExact}=O(n^2)$, which was shown to be optimal in~\cite{arunachalam2022optimal}.
\end{proof}


\section{Lower bounds for Quantum statistical query learning}
\label{sec:qsqupper}

Here we prove our main theorem which provides combinatorial quantities one can use to lower bound the $\QSQ$ complexity of various tasks. The techniques and parameters used  in this section are inspired by several seminal classical works on the classical $\SQ$ model~\cite{kearns1998efficient,Feldman13,Feldman16,feldman2017statistical}. We first define \emph{statistical decision problems}, then define the quantum statistical dimension ($\QSDA$) which lower bounds the decision problem complexity and finally discuss the variance and average correlation lower bounds on $\QSDA$. For notational convenience we adopt the following shorthand: we will let $\Cc$ be a collection of $n$-qubit quantum states. We let $\QSQ_\tau^{\eps,\delta}(\Cc)$ be the complexity of learning $\Cc$ to accuracy $\eps$ in trace distance using $\Qstat$ queries of tolerance $\tau$ and succeeding with probability at least $1-\delta$. Next, $\QSD_\tau^\delta(\Cc, \sigma)$ is the complexity of \emph{deciding} if $\rho \in \Cc$ or $\rho = \sigma$ given $\Qstat(\tau)$ access to $\rho$, $\QSDA$ is the quantum statistical dimension, and $\QAC$ is the average correlation~bound.

\subsection{Learning is as hard as deciding}

\begin{definition}[Quantum (many-vs-one) Decision Problem]
     Let $\tau \in [0,1]$ and let $\sigma \not\in \Cc$. A quantum statistical decision problem for $(\Cc, \sigma)$ is defined as: for an unknown state $\rho$, given $\Qstat(\tau)$ access to $\rho$ decide if $\rho\in \Cc$ or $\rho=\sigma$. Let $\QSD_\tau^\delta(\Cc, \sigma)$ be the number of $\Qstat(\tau)$ queries made by the best algorithm for the decision problem that succeeds with probability at least $1-\delta$.
     \label{Dfn:statistical_decision_problem}
\end{definition}

We now prove our first lemma that $\QSD$ is actually a lower bound on $\QSQ$ learning the concept class $\Cc$. We remark that a similar lemma appears for \emph{classical} $\SQ$ in~\cite{sweke23}, we want to thank the authors for discussing and sharing their manuscript during the completion of our work.

\begin{lemma}[Learning is at least as hard as deciding]
\label{Lemma:LearningToDeciding}
Let $\varepsilon \geq \tau > 0$ and  $\sigma\notin \Cc$ be such that $\min_{\rho \in \Cc} [\TRD(\rho, \sigma)] > 2(\tau + \varepsilon)$.   Let $\QSQ_\tau^{\varepsilon, \delta}(\Cc)$ be the number of $\Qstat(\tau)$ queries made by a $\QSQ$ algorithm that on input $\rho$ outputs $\pi$, such that $\TRD(\pi, \rho) \leq \varepsilon$ with probability $\geq 1-\delta$. Then 
    $$
    \QSQ_\tau^{\varepsilon, \delta}(\Cc) \geq \QSD_\tau^\delta(\Cc,\sigma) - 1.
    $$ 
\end{lemma}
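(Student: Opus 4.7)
The natural strategy is the standard learning-to-deciding reduction: given any $\QSQ$ learner $\mathcal{L}$ for $\Cc$, build a decider for $(\Cc,\sigma)$ that runs $\mathcal{L}$ once and then makes a single extra $\Qstat(\tau)$ query to check the output. The reduction will cost only one additional query, giving $\QSD_\tau^\delta(\Cc,\sigma) \leq \QSQ_\tau^{\varepsilon,\delta}(\Cc) + 1$.

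Concretely, on input a state $\rho$ promised to be either some element of $\Cc$ or $\sigma$, first run $\mathcal{L}$ on $\rho$ and obtain a hypothesis state $\pi$; by assumption, when $\rho \in \Cc$, with probability at least $1-\delta$ the output satisfies $d_{\Tr}(\pi,\rho)\leq \varepsilon$. Using the operational definition of trace distance, classically compute an observable $M^\star$ with $\|M^\star\|\leq 1$ achieving $\Tr(M^\star(\pi-\sigma)) = 2 d_{\Tr}(\pi,\sigma)$ (recall $\sigma$ is a fixed known state and $\pi$ is fully specified after the learner halts, so no queries are required for this step). Then make one further $\Qstat(\tau)$ query with $M^\star$ to receive $\alpha$ with $|\alpha-\Tr(M^\star\rho)|\leq \tau$, and accept ``$\rho=\sigma$'' iff $|\alpha-\Tr(M^\star\sigma)|\leq \tau$.

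To verify correctness, note that if $\rho=\sigma$ then $\Tr(M^\star\rho)=\Tr(M^\star\sigma)$, so the decision rule is correct deterministically. If $\rho\in\Cc$ and the learner succeeds, then by the triangle inequality
\[
d_{\Tr}(\pi,\sigma) \geq d_{\Tr}(\rho,\sigma) - d_{\Tr}(\pi,\rho) > 2(\tau+\varepsilon) - \varepsilon = 2\tau+\varepsilon,
\]
and hence $|\Tr(M^\star(\rho-\sigma))| \geq |\Tr(M^\star(\pi-\sigma))| - 2 d_{\Tr}(\pi,\rho) > 2(2\tau+\varepsilon) - 2\varepsilon = 4\tau$. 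Combined with the $\Qstat(\tau)$ guarantee this gives $|\alpha-\Tr(M^\star\sigma)| > 3\tau > \tau$, so the decider correctly outputs ``$\rho\in\Cc$''. Thus the overall success probability is at least $1-\delta$, matching the requirement in the definition of $\QSD_\tau^\delta(\Cc,\sigma)$, and the query count is $\QSQ_\tau^{\varepsilon,\delta}(\Cc)+1$, yielding the claimed bound.

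I do not expect a genuine obstacle here; the only point that needs care is choosing the right ``test observable'' and verifying that the separation hypothesis $\min_{\rho\in\Cc} d_{\Tr}(\rho,\sigma) > 2(\tau+\varepsilon)$ leaves enough slack so that the learner's $\varepsilon$-error plus the oracle's $\tau$-tolerance cannot confuse the two cases. The chosen $M^\star$ (a maximizer of $\Tr(M(\pi-\sigma))$) is the natural witness, and the triangle-inequality computation above confirms the slack. One minor subtlety worth stating explicitly is that on input $\rho=\sigma$ the learner has no correctness guarantee and $\pi$ may be arbitrary; the argument above sidesteps this because correctness in the $\rho=\sigma$ branch depends only on $\Tr(M^\star\rho)=\Tr(M^\star\sigma)$, independent of $\pi$.
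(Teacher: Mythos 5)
Your proposal is correct and follows essentially the same route as the paper: run the learner, spend one extra $\Qstat(\tau)$ query on an optimal distinguishing observable, and separate the two cases via the triangle inequality together with the hypothesis $\min_{\rho\in\Cc}\TRD(\rho,\sigma)>2(\tau+\varepsilon)$. The only (harmless) difference is that the paper first rounds the learner's output $\pi$ to the nearest concept $\nu\in\Cc$ and queries the Helstrom projector $\Pi_+$ of $\nu-\sigma$, whereas you test $\pi$ against $\sigma$ directly, which skips the rounding step and leaves slightly more slack ($>3\tau$ rather than $>\tau$).
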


\begin{proof} We show this by solving the statistical quantum decision problem by querying a $\QSQ_\tau^{\varepsilon, \delta}(\Cc)$ learning algorithm $\A$. For $\rho \in \mathcal{C}$, $\A$ outputs, with probability $\geq 1-\delta$, a classical description of quantum state $\pi$ such that $\TRD(\rho, \pi) \leq \varepsilon$. Note that for $\sigma \notin \mathcal{C}$, the output of $\A$ is not well-defined and we assume that $\A$ can output anything. 

Let the output of $\A$ be $\pi$. If $\pi$ is not a valid quantum state, return ``$\rho = \sigma$''.  We then check if $\min_{\rho \in \Cc} \TRD(\pi, \rho) > \varepsilon$. If yes, return ``$\rho = \sigma$''. At this point, we know the classical description of both $\pi$ and $\sigma$ and also know that there exists some $\nu \in \Cc$, such that $\TRD(\pi, \nu) \leq \varepsilon$. We can find such $\nu$ that is closest to $\pi$, as well as an operator $\Pi_+$ which is a projector onto the positive part of the spectrum of the hermitian operator $\nu-\sigma$. Finding this may be computationally difficult, but does not require additional $\Qstat(\tau)$ queries. We then query $\Qstat(\tau)$ with $\Pi_+$ to obtain a response $R$. If $| R - \Tr(\Pi_+ \sigma) | \leq \tau$, return ``$\rho = \sigma$''. Return ``$\rho \in \Cc$'' otherwise. 

The algorithm outputs ``$\rho = \sigma$" on all inputs $\rho = \sigma$ with certainty. On input $\rho \in \Cc$, the algorithm $\A$ returns, with probability at least $(1-\delta)$ a description of a state $\pi$ that is $\varepsilon$ close to the input. Our algorithm then uses this information to find a state $\nu \in \Cc$, such that $d_\Tr(\pi, \nu) \leq \varepsilon$. We have from reverse triangle inequality that:
\begin{align}
    |\Tr(\Pi_+ (\rho - \sigma) )| \geq  |\underbrace{|\Tr(\Pi_+ (\sigma - \nu) )|}_{\TRD(\sigma, \nu)} - \underbrace{|\Tr(\Pi_+ (\nu-\rho))|}_{< 2\varepsilon} | \geq \TRD(\nu, \sigma) - 2\varepsilon > 2 \tau,
\end{align}
where we used that $|\Tr(\Pi_+ (\nu-\rho))| \leq \TRD(\nu, \rho) \leq  \TRD(\nu, \pi) + \TRD(\rho, \pi) < 2\varepsilon$.\footnote{Note that this inequality is maximized if $\nu \neq \rho$. This can happen if the input state $\rho \in \Cc$ is less than $2\varepsilon$ far from another state $\rho' \in \Cc$ and the learning algorithm outputs $\pi$ that is closer to $\rho' \in \Cc$. } It follows that: 
\begin{align}
    |R- \Tr(\Pi_+ \sigma)| &\geq |\Tr(\Pi_+(\rho-\sigma))|-\underbrace{|R-\Tr(\Pi_+ \rho)|}_{\leq \tau}  > \tau,
\end{align}
 The algorithm outputs ``$\rho \in \Cc$'' with probability at least $(1-\delta)$, as expected. 
 \end{proof}

 For completeness, we also include a proof of a lower bound on the learning complexity by a decision problem hidden completely (that is, $\sigma \in \Cc$) inside of $\Cc$.
 \begin{lemma}[Learning is as hard as deciding, alternative take]\label{lem:learn_decide_2}
     Let $\D \subset \Cc$ and let $\sigma \in \Cc$, $\sigma \notin \D$, $\TRD(\D, \sigma) > \varepsilon$ and $\varepsilon \geq \tau > 0$. Then: 
     \begin{align}
         \QSQ^{\varepsilon, \delta}_{\tau}(\Cc) \geq \QSD_\tau^\delta(\D, \sigma). 
     \end{align}
 \end{lemma}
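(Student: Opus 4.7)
The plan is to reduce the decision problem on $(\D, \sigma)$ to the learning task on $\Cc$, consuming no extra $\Qstat$ queries beyond those made by the learner. The crucial structural difference from Lemma~\ref{Lemma:LearningToDeciding} is that here $\sigma \in \Cc$: every candidate input $\rho$ (whether $\sigma$ or some element of $\D$) already belongs to the learning class, so the learner is obliged to return a useful output in both cases. This is precisely why the bound is the clean $\QSQ \geq \QSD$ with no ``$-1$'' correction---we do not need to spend an extra $\Qstat$ query to handle an out-of-class hypothesis.

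Concretely, given any $\QSQ^{\varepsilon, \delta}_{\tau}(\Cc)$ algorithm $\A$, I first run $\A$ on the unknown $\rho$. With probability at least $1 - \delta$, it returns a classical description of a state $\pi$ satisfying $\TRD(\pi, \rho) \leq \varepsilon$. Everything after this is purely classical post-processing on $\pi$, $\sigma$, and the set $\D$: I output ``$\rho \in \D$'' if $\min_{\nu \in \D} \TRD(\pi, \nu) < \TRD(\pi, \sigma)$, and ``$\rho = \sigma$'' otherwise. No additional $\Qstat(\tau)$ access to $\rho$ is used, so the overall query count is exactly that of $\A$.

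Correctness follows from a triangle inequality argument together with the separation hypothesis $\TRD(\D, \sigma) > \varepsilon$. If $\rho = \sigma$ then $\TRD(\pi, \sigma) \leq \varepsilon$, while for any $\nu \in \D$ we have
$$
\TRD(\pi, \nu) \;\geq\; \TRD(\sigma, \nu) - \TRD(\pi, \sigma) \;>\; \TRD(\D, \sigma) - \varepsilon,
$$
so the comparison picks $\sigma$ whenever the gap is strictly larger than $\varepsilon$. The case $\rho \in \D$ is symmetric: $\TRD(\pi, \rho) \leq \varepsilon$ forces $\min_{\nu \in \D} \TRD(\pi, \nu) \leq \varepsilon$, whereas $\TRD(\pi, \sigma) \geq \TRD(\rho, \sigma) - \varepsilon > 0$, again favouring the correct branch.

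The only subtle point, and the main obstacle to worry about, is the boundary regime where $\TRD(\D, \sigma)$ is only slightly above $\varepsilon$: the two relevant distances can both fall into $[0, 2\varepsilon]$ and the naive comparison rule may be ambiguous. This is handled cleanly either by reading the separation hypothesis with a factor-of-two slack (analogous to the $2(\tau+\varepsilon)$ gap in Lemma~\ref{Lemma:LearningToDeciding}), or by invoking $\A$ with accuracy $\varepsilon/2$; either fix preserves the $\QSQ$ query count, and hence the stated inequality $\QSQ^{\varepsilon, \delta}_{\tau}(\Cc) \geq \QSD_\tau^\delta(\D, \sigma)$.
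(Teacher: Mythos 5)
Your proof is correct and follows essentially the same route as the paper's: run the learner once, then classify by whether its output lies closer to $\sigma$ or to $\D$, using no additional $\Qstat$ queries. The only difference is cosmetic---the paper thresholds $\TRD(\pi,\sigma)$ against $\varepsilon$ rather than comparing it to $\min_{\nu\in\D}\TRD(\pi,\nu)$---and your observation that the separation hypothesis $\TRD(\D,\sigma)>\varepsilon$ effectively needs to be read with a factor-of-two slack for either decision rule to be unambiguous is a genuine point that the paper's own one-line argument glosses over (though note that your alternative fix of invoking $\A$ at accuracy $\varepsilon/2$ would lower-bound $\QSQ^{\varepsilon/2,\delta}_\tau$ rather than $\QSQ^{\varepsilon,\delta}_\tau$, so it changes the statement rather than preserving it).
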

 \begin{proof}
    Let $\A$ be a statistical $\varepsilon, \delta$ learning algorithm for $\Cc$ that uses $\Qstat(\tau)$ queries.
    On input $\rho \in \Cc$, the algorithm $\A$ outputs (with probability at least $1-\delta$) a state $\nu$, such that $\TRD(\nu, \rho) \leq \varepsilon$ and uses $\QSQ_\tau^{\varepsilon, \delta}(\Cc)$ many queries. Output ``$\rho = \sigma$'' if $\TRD(\nu, \sigma) < \varepsilon$, otherwise output ``$\rho \in \D$''. The algorithm clearly succeeds with probability at least $1-\delta$.
 \end{proof}

\subsection{Quantum statistical dimension to bound the decision problem}
 
With this lemma, in order to lower bound $\QSQ$ learning it suffices to lower bound $\QSD$, which we do by the quantum statistical dimension that we define now. 

\begin{definition}[Quantum Statistical Dimension]
Let $\tau \in [0,1]$ and $\mu$ be a distribution over a set of $n$-qubit quantum states $\Cc$ and $\sigma \notin \Cc$ be an $n$-qubit  state. Define the maximum covered~fraction:
\begin{align}
    \kappa_\tau\textsf{-\textsf{frac}}(\mu, \sigma) &= \max_{M: \| M \| \leq 1} \lbrace \Pr_{\rho \sim \mu} \left[|\Tr(M(\rho - \sigma))| > \tau \right]\rbrace.
\end{align}
The quantum statistical dimension is: 
\begin{align}
    \QSDA_\tau^\delta(\Cc, \sigma)) = \sup_{\mu} \left[\kappa_\tau\textsf{-\textsf{frac}}(\mu, \sigma) \right]^{-1},
    \label{eq:qsd}
\end{align}
where the supremum is over distibutions over $\Cc$.
\end{definition}
This definition is essentially the same as Feldman's definition of randomized statistical dimension in \cite{Feldman16}, but uses the difference between the expectation values of quantum observables. Sections~4, 5 and 6 of our work show that this has several interesting consquences. The following lemma, following similarly from Feldman's work \cite[Lemma 3.8]{Feldman16}, will be convenient later: 
\begin{lemma}\label{lem:vonneumman}
    Let $\tau > 0$, $\Cc$ be a set of quantum states and $\sigma \notin \Cc$ be another quantum state. Let $d$ be the smallest integer such that there exists a distribution $\nu$ over $\Qstat$ queries $M$ satisfying
    $$
    \forall \hspace{1mm} \rho \in \Cc: \quad \Pr_{M \sim \nu} \left[|\Tr(M(\rho - \sigma))| > \tau \right] \geq 1/d,
    $$
    then $d=\QSDA_\tau(\Cc,\sigma)$.
    \label{Lemma:RandomCover}
\end{lemma}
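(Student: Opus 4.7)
The statement asks us to identify two seemingly different quantities: the infimum-over-$\mu$ definition of $\QSDA_\tau(\Cc,\sigma)$, versus the smallest $d$ for which some distribution $\nu$ over observables witnesses an $1/d$ distinguishing advantage \emph{uniformly} over $\Cc$. I would prove this via a von Neumann style minimax argument, exactly as in Feldman's classical lemma that inspired this notion. Define the payoff $V(\rho, M) = \mathbb{1}\bigl[|\Tr(M(\rho-\sigma))|>\tau\bigr]$, extended bilinearly to distributions $\mu$ on $\Cc$ and $\nu$ on the observable ball $\{M : \|M\|\leq 1\}$ by $V(\mu,\nu)=\mathbb{E}_{\rho\sim\mu,M\sim\nu}[V(\rho,M)]$. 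In this notation, the definition of $\QSDA_\tau(\Cc,\sigma)$ becomes $[\inf_\mu \sup_M V(\mu,M)]^{-1}$, while $d$ in the hypothesis is $[\sup_\nu \inf_\rho V(\rho,\nu)]^{-1}$. The lemma is therefore the statement that these two minimax values coincide.

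\textbf{Easy direction ($d \geq \QSDA_\tau$).} I would start with weak duality, which requires no topological hypotheses. Fix any $\nu$ witnessing the $d$ side, and let $\mu$ be an arbitrary distribution over $\Cc$. Averaging the pointwise bound gives
\[
\mathbb{E}_{M\sim\nu}\!\bigl[\Pr_{\rho\sim\mu}[V(\rho,M)=1]\bigr]
= \mathbb{E}_{\rho\sim\mu}\!\bigl[\Pr_{M\sim\nu}[V(\rho,M)=1]\bigr]\geq 1/d,
\]
so some $M$ in the support of $\nu$ realizes $\Pr_{\rho\sim\mu}[|\Tr(M(\rho-\sigma))|>\tau]\geq 1/d$. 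Taking the supremum over $M$ and then the infimum over $\mu$ yields $\QSDA_\tau(\Cc,\sigma)^{-1}\leq 1/d$, i.e.\ $\QSDA_\tau(\Cc,\sigma)\leq d$.

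\textbf{Hard direction ($d \leq \QSDA_\tau$).} This is where the minimax theorem is needed. I would apply Sion's minimax theorem to the bilinear form $V(\mu,\nu)$ on the product of the two compact convex sets of distributions. Both marginal maps $\mu\mapsto V(\mu,\nu)$ and $\nu\mapsto V(\mu,\nu)$ are linear, hence concave and convex in each argument, and the observable ball is compact in the weak operator topology; the set of distributions on $\Cc$ is also compact after a standard reduction. Concluding
\[
\sup_\nu \inf_\mu V(\mu,\nu) \;=\; \inf_\mu \sup_\nu V(\mu,\nu) \;=\; \inf_\mu \sup_M V(\mu,M) \;=\; \QSDA_\tau(\Cc,\sigma)^{-1},
\]
where the second equality uses that the supremum of a linear functional over distributions is attained at a point mass. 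Since $\inf_\mu V(\mu,\nu) = \inf_\rho V(\rho,\nu)$ by linearity in $\mu$, the left hand side equals $1/d$, giving $d\leq \QSDA_\tau(\Cc,\sigma)$.

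\textbf{Main obstacle.} The delicate point is that $V(\rho,M)$ is an \emph{indicator} function and hence discontinuous in $M$, so applying Sion's theorem requires care. The standard workaround, which I would follow, is to first restrict to a finite subclass $\Cc_0\subseteq\Cc$ (turning both sides into a linear program over finite simplices) and invoke the classical minimax theorem, obtaining the equality on $\Cc_0$. Then the result on $\Cc$ follows by taking a monotone limit over an increasing chain of finite subclasses, using compactness of the observable ball to extract a weakly convergent subsequence of optimizers $\nu$, combined with the fact that the inequality $|\Tr(M(\rho-\sigma))|>\tau$ is an open condition and therefore lower semicontinuous in $M$, which preserves the $\geq 1/d$ bound in the limit. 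Finally, the ``smallest integer'' wording is handled by the observation that the right-hand side of the witness inequality uses only $1/d$, so one may always round $\QSDA_\tau$ up to its ceiling without loss.
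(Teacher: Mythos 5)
Your proposal follows essentially the same route as the paper: both define the indicator payoff $G(\rho,M)=\mathbb{1}[|\Tr(M(\rho-\sigma))|>\tau]$, bilinearize over distributions $\mu$ on $\Cc$ and $\nu$ on the observable ball, and invoke Sion's minimax theorem on the resulting compact convex spaces to equate $\inf_\mu\sup_M$ with $\sup_\nu\inf_\rho$. If anything you are more careful than the paper, which applies Sion directly to the bilinear form without addressing the discontinuity of the indicator in $M$ (and hence the semicontinuity hypotheses of the theorem), whereas you explicitly flag this and sketch the standard finite-subclass-plus-limit workaround.
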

\begin{proof}
See also \cite[Lemma 3.8.]{Feldman16}, which we generalize here. Suppose that the $\Qstat$ tolerance is fixed to $\tau$ and let $\mathcal{M}$ be the set of all valid $\Qstat$ queries. Define $G: \Cc \times \mathcal{M} \rightarrow \lbrace 0, 1 \rbrace$ as $G(\rho, M) = \delta [|\Tr(M(\rho-\sigma))| > \tau]$, where $\delta [\cdot]$ is the indicator function. Let $\mu$ be a distribution over $\Cc$ and let $\nu$ be a distribution over $\mathcal{M}$. Consider the bilinear function of $\mu, \nu$: \begin{align} F(\mu, \nu) = \int_{\mathcal{M}} d \nu(M) \int_{\Cc} d \mu(\rho)  G(M, \rho)  = \text{Pr}_{\rho \sim \mu} \text{Pr}_{M \sim \nu}[|\Tr(M(\rho-\sigma))| > \tau]. \end{align} 
Note that $\mathcal{M}$ forms a compact subset of $\C^{d \times d }$. We further assume that $\Cc$ is closed and thus also forms a compact subset of $\C^{d \times d}$. Then, the spaces of probability distributions on $\mathcal{M}$ and $\Cc$ form compact, convex spaces with respect to the weak-* topology. It follows by Sion's minimax theorem that:
\begin{align}
    \min_{\mu} \max_{\nu} F(\mu, \nu) = \max_{\nu} \min_{\mu} F(\mu, \nu) =: 1/d,
\end{align}
where the optimization is over possible distributions $\mu$ over $\Cc$ and distributions $\nu$ over $\mathcal{M}$.
For a distribution $\mu$ over $\Cc$, there exists an optimal distinguishing measurement $M \in \mathcal{M}$, from which:
\begin{align}
     \min_{\mu} \max_{\nu} F(\mu, \nu) &= \min_{\mu} \max_{M \in \mathcal{M}} \text{Pr}_{\rho \sim \mu} [|\Tr(M(\rho-\sigma))| > \tau].
\end{align}
Observe that:
\begin{align}
    d = \sup_{\mu} (\max_{M \in \mathcal{M}} \text{Pr}_{\rho \sim \mu} [|\Tr(M(\rho-\sigma))| > \tau])^{-1},
\end{align}
which is the definition of $\QSDA_\tau$ by Eq.~\ref{eq:qsd}. Similarly, we have that:
\begin{align}
    d = ( \max_{\nu}\min_{\mu} F(\mu, \nu))^{-1} = \inf_{\nu} (\min_{\rho \in \Cc} \text{Pr}_{\rho \sim \mu} [|\Tr(M(\rho-\sigma))| > \tau])^{-1}.
\end{align}
For a given distribution $\nu$ over $\mathcal{M}$, it then holds for all $\rho \in \Cc$ that $\Pr_{M \sim \nu}[|\Tr(M(\rho-\sigma))| > \tau] \geq 1/d$. This is the definition in Lemma \ref{Lemma:RandomCover}.
\end{proof}

We now show that the $\QSD$ complexity is lower bounded by $\QSDA$. 
\begin{lemma}
    For every $\sigma\notin\Cc$ and $\tau \in [0,1]$, we have that
    \begin{align} 
    \QSD_\tau^\delta(\Cc,\sigma)\geq (1-2\delta) \QSDA_\tau(\Cc,\sigma).
    \end{align}
\end{lemma}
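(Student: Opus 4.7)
The plan is to convert a decision algorithm $\A$ for $(\Cc,\sigma)$ into a distribution over $\Qstat$ queries to which Lemma~\ref{Lemma:RandomCover} applies. Write $q := \QSD_\tau^\delta(\Cc,\sigma)$ for the query complexity of $\A$. For each internal random string $r$, first consider the execution of $\A$ on input $\sigma$ against the ``honest'' oracle that answers each query $M$ with the exact value $\Tr(M\sigma)$; this is a valid $\Qstat(\tau)$ response. Since this oracle is deterministic, once $r$ is fixed the adaptive query sequence is deterministic; call it $(M_1^r,\dots,M_q^r)$. Define the distribution $\nu$ over single observables by sampling $r$ according to $\A$'s internal randomness, sampling $i \in [q]$ uniformly, and outputting $M_i^r$.

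Fix any $\rho \in \Cc$ and consider the ``indistinguishable'' event $E_\rho = \{r : |\Tr(M_i^r(\rho-\sigma))| \leq \tau \text{ for all } i \in [q]\}$. The key step is to couple the $\sigma$-execution under $r$ to a $\rho$-execution under the same $r$ by running $\A$ against the oracle that returns $\Tr(M\sigma)$ whenever that value is within $\tau$ of $\Tr(M\rho)$ and returns $\Tr(M\rho)$ otherwise; this is always a valid oracle for $\rho$. For $r \in E_\rho$ the coupled executions agree step by step, producing the same queries $M_i^r$ and the same output. In particular, if $\A$ correctly outputs ``$\sigma$'' on $\sigma$ under $r \in E_\rho$, then the coupled execution on $\rho$ also outputs ``$\sigma$'', which is an error. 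Letting $S$ and $T_\rho$ denote the success events of these two executions (each of probability at least $1-\delta$ by correctness of $\A$), we obtain $E_\rho \cap S \cap T_\rho = \emptyset$, and hence by a union bound $\Pr_r[E_\rho] \leq \Pr_r[S^c] + \Pr_r[T_\rho^c] \leq 2\delta$.

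Consequently $\Pr_r[\exists i : |\Tr(M_i^r(\rho-\sigma))| > \tau] \geq 1-2\delta$ for every $\rho \in \Cc$, and by construction of $\nu$,
\[
\Pr_{M\sim\nu}\left[|\Tr(M(\rho-\sigma))| > \tau\right] \;\geq\; \frac{1}{q}\cdot \Pr_r\left[\exists i : |\Tr(M_i^r(\rho-\sigma))| > \tau\right] \;\geq\; \frac{1-2\delta}{q}.
\]
Applying Lemma~\ref{Lemma:RandomCover} with $d = q/(1-2\delta)$ yields $\QSDA_\tau(\Cc,\sigma) \leq q/(1-2\delta)$, which rearranges to the claim. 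The main subtlety I anticipate is exactly the coupling step: because $\A$ is adaptive, one must verify that whenever $r \in E_\rho$ the very same oracle answers are simultaneously valid for both $\sigma$ and $\rho$, so that the two query sequences and outputs genuinely coincide; once this coupling is in place, the rest is a union bound and an appeal to the minimax characterization already established in Lemma~\ref{Lemma:RandomCover}.
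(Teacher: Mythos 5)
Your proposal is correct and follows essentially the same route as the paper: feed the decision algorithm the adversarial responses $\Tr(M\sigma)$, argue that for every $\rho\in\Cc$ these responses can be simultaneously valid for $\rho$ with probability at most $2\delta$ (the paper does this by contradiction, you do it by a direct union bound over the two success events), and then take the uniform mixture over the $q$ queries and invoke Lemma~\ref{Lemma:RandomCover}. Your explicit step-by-step coupling of the adaptive executions is a careful spelling-out of what the paper leaves implicit, but it is the same argument.
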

\begin{proof}
    Let $\A$ be the best algorithm that solves $(\Cc, \sigma)$ with probability at least $1-\delta$ using $q$ $\Qstat(\tau)$ queries $M_1, \ldots M_q$ chosen according to the internal randomness of $\A$ (and based on the received $\alpha_i$). Let $p_\rho = \Pr_\A \left[ \exists i \in [q] | |\Tr(M_i (\rho-\sigma)| > \tau \right]$ be the probability that $\rho \in \Cc$ can be distinguished from $\sigma$ by at least one  of queries. By the correctness of the algorithm, if $\A$ receives $\{\Tr[M_i\sigma]\}_i$ for all qureries, which happens with probability $1-p_\rho$, then $\A$ can output "$\rho\in\Cc$" with probability at most $\delta$. Then, again by the correctness of $\A$, we have that
    \begin{align}
        1-\delta \leq p_\rho + (1-p_\rho)\cdot \delta\ .
    \end{align}
    Since $\delta \geq 1/2$ without loss of generality, this implies that $p_\rho \geq 1-2\delta$ for all $\rho \in \Cc$. Running $\A$ on the responses $\{\Tr[M_i\sigma]\}_i$ and picking one of its queries uniformly randomly then gives $\Pr_{M} \left[ |\Tr(M(\rho-\sigma)| > \tau \right] \geq \frac{1-2\delta}{q}$. Lemma~\ref{Lemma:RandomCover} then implies that $q \geq (1-2\delta)\QSDA_\tau(\Cc,\sigma)$.

\end{proof}

\subsection{Variance and correlation lower bound on quantum statistical dimension}
We now present our main lower bound theorem, wherein we show that there are two combinatorial parameters that can be used to lower bound $\QSDA$, which in turn lower bounds sample complexity in the $\QSQ$ model. Throughout the paper we will use these two two parameters to prove our $\QSQ$ lower bounds. 

\begin{theorem}[Lower bounds]
\label{thm:mainlowerboundonqsq}
Let $\tau > 0$ and $\Cc$ be the class of $n$-qubit states.  Then
\begin{enumerate}
    \item Variance bound: Let $\mu$ be a distribution over $\Cc$, such that $\mathbb{E}_{\rho \sim \mu}[\rho] \notin \Cc$.
    Then: 
\begin{align}
    \QSDA_{\tau}(\Cc,  \Exp_{\rho \sim \mu}[\rho]) &\geq  {\tau^2}\cdot \min_{M, \|M\| \leq 1}  \Big(\mathop{\Var}_{\rho\sim\mu} [\Tr[\rho M]]\Big)^{-1},
\end{align}
where $$
\mathop{\Var}_{\rho \sim \mu}[\Tr(\rho M)]=\mathop{\Exp}_{\rho \sim \mu}[\Tr(\rho M)^2]-\Big(\mathop{\Exp}_{\rho \sim \mu}[\Tr(\rho M)]\Big)^2.
$$
\item Average correlation: For a full-rank quantum state $\sigma \notin \Cc$, define $\hat{\rho} := (\rho \sigma^{-1} - \id)$ and:
\begin{align}
     \gamma(\Cc, \sigma) = \frac{1}{|\Cc|^2}\sum_{\rho_1, \rho_2 \in \Cc} |\Tr(\hat{\rho}_1 \hat{\rho}_2 \sigma)|, \quad \kappa^\gamma_\tau\textsf{-\textsf{frac}}(\Cc_0, \sigma)  := \max_{\Cc' \subseteq \Cc_0} \left\lbrace \frac{|\Cc'|}{|\Cc_0|} : \gamma(\Cc', \sigma) > \tau \right\rbrace.
\end{align} Let $\QAC_\tau(\Cc, \sigma) = \sup_{\Cc_0 \subseteq \Cc} (\kappa^\gamma_\tau\textsf{-\textsf{frac}}(\Cc_0, \sigma))^{-1}$. Then,
$\QSDA_{\tau}(\Cc, \sigma)  \geq \QAC_{\tau^2}(\Cc, \sigma).$
\end{enumerate}
    
\end{theorem}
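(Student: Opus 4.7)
The plan is to prove each bound by exhibiting a distribution $\mu$ over $\Cc$ (possibly the uniform distribution on a subset $\Cc_0\subseteq\Cc$) that witnesses the $\QSDA$ lower bound, i.e., one for which no observable $M$ with $\|M\|\leq 1$ can separate a large fraction of $\mu$ from $\sigma$ by more than $\tau$ in expectation; then $\QSDA_\tau(\Cc,\sigma)\geq\kappa_\tau\textsf{-\textsf{frac}}(\mu,\sigma)^{-1}$ since $\mu$ is a valid choice in the supremum defining $\QSDA$.

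\emph{Variance bound.} My approach here is a direct application of Chebyshev's inequality. Since $\sigma=\Exp_\mu[\rho]$, for any Hermitian $M$ the real-valued random variable $X_M(\rho):=\Tr(M\rho)$ has mean $\Tr(M\sigma)$, so $\Pr_{\rho\sim\mu}[\,|\Tr(M(\rho-\sigma))|>\tau\,]\leq\Var_\mu[X_M]/\tau^2$. Maximising over $M$ upper bounds $\kappa_\tau\textsf{-\textsf{frac}}(\mu,\sigma)$ by $\tau^{-2}\max_M\Var_\mu[\Tr(M\rho)]$, and inverting yields the claimed lower bound on $\QSDA_\tau(\Cc,\sigma)$.

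\emph{Average correlation bound.} Mirroring Feldman's classical argument~\cite{Feldman16}, I would introduce the $\sigma$-weighted inner product $\langle A,B\rangle_\sigma:=\Tr(A^\dagger B\sigma)$ on $d\times d$ matrices, which is a positive-definite sesquilinear form whenever $\sigma\succ 0$. Two identities, both obtained by expanding $\hat\rho=\rho\sigma^{-1}-\id$ and using cyclicity of trace, will do the heavy lifting: (i) for Hermitian $M$, $\Tr(M(\rho-\sigma))=\langle M,\hat\rho\rangle_\sigma$; (ii) $\langle\hat\rho_1,\hat\rho_2\rangle_\sigma=\Tr(\sigma^{-1}\rho_1\rho_2)-1$ and $\Tr(\hat\rho_1\hat\rho_2\sigma)=\Tr(\rho_1\sigma^{-1}\rho_2)-1$ are complex conjugates of one another, so $|\langle\hat\rho_1,\hat\rho_2\rangle_\sigma|=|\Tr(\hat\rho_1\hat\rho_2\sigma)|$ and the average correlation $\gamma(\Cc',\sigma)$ equals $|\Cc'|^{-2}\sum_{\rho_1,\rho_2\in\Cc'}|\langle\hat\rho_1,\hat\rho_2\rangle_\sigma|$. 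With these in hand, I fix $\Cc_0\subseteq\Cc$, take $\mu$ uniform on $\Cc_0$, and let $\Cc'\subseteq\Cc_0$ be the subset achieving $\kappa_\tau\textsf{-\textsf{frac}}(\mu,\sigma)$ via some $M$. Setting $s_\rho=\sign(\Tr(M(\rho-\sigma)))\in\{\pm 1\}$ and $Y:=\sum_{\rho\in\Cc'}s_\rho\hat\rho$, Cauchy-Schwarz in $\langle\cdot,\cdot\rangle_\sigma$ gives $\bigl(\sum_{\rho\in\Cc'}|\Tr(M(\rho-\sigma))|\bigr)^2=|\langle M,Y\rangle_\sigma|^2\leq\langle M,M\rangle_\sigma\cdot\langle Y,Y\rangle_\sigma$; the first factor is $\Tr(M^2\sigma)\leq 1$ since $\|M\|\leq 1$, and expanding $\langle Y,Y\rangle_\sigma=\sum_{\rho_1,\rho_2}s_{\rho_1}s_{\rho_2}\langle\hat\rho_1,\hat\rho_2\rangle_\sigma$ together with the triangle inequality bounds it by $|\Cc'|^2\gamma(\Cc',\sigma)$. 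Combined with $|\Tr(M(\rho-\sigma))|>\tau$ on $\Cc'$, this forces $\gamma(\Cc',\sigma)>\tau^2$, so every $M$-distinguishable subset of $\Cc_0$ is automatically a $\tau^2$-correlated subset. Therefore $\kappa_\tau\textsf{-\textsf{frac}}(\mu,\sigma)\leq\kappa^\gamma_{\tau^2}\textsf{-\textsf{frac}}(\Cc_0,\sigma)$, and taking $\sup_{\Cc_0\subseteq\Cc}$ after inverting yields $\QSDA_\tau(\Cc,\sigma)\geq\QAC_{\tau^2}(\Cc,\sigma)$.

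The main non-routine step is locating the correct inner product: the naive choice $\Tr(AB\sigma)$ is bilinear but fails to be symmetric, while $\Tr(A^\dagger B\sigma)$ is sesquilinear and positive-definite, and a short verification is needed to check that its absolute values still reproduce $|\Tr(\hat\rho_1\hat\rho_2\sigma)|$ -- i.e., that the ``dagger'' one puts on the first argument is invisible inside an absolute value around $\gamma$. After that observation the rest reduces to Chebyshev's inequality and a single Cauchy-Schwarz step.
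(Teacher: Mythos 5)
Your proposal is correct and follows essentially the same route as the paper: part~1 is the identical Chebyshev argument, and part~2 is the same Cauchy--Schwarz bound, which the paper writes out by inserting $\sqrt{\sigma}$ factors into the trace --- i.e., exactly the Cauchy--Schwarz inequality for your form $\langle A,B\rangle_\sigma=\Tr(A^\dagger B\sigma)$. Your explicit packaging via this sesquilinear form, and the check that $|\Tr(\hat\rho_1^\dagger\hat\rho_2\sigma)|=|\Tr(\hat\rho_1\hat\rho_2\sigma)|$, is a slightly more careful treatment of the non-Hermiticity of $\hat\rho$ than the paper's $\big[\sum_\rho \sign(\cdot)\hat\rho\big]^2$ notation, but it is the same proof.
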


 \begin{proof}

1. Let $\mu$ be a distribution over $\Cc$ and $M$ a hermitian operator satisfying $\Vert M \Vert \leq 1$ and let $\sigma = \E_{\rho \sim \mu}[\rho]$. By Chebyshev's inequality, we have that
\begin{align}
    \Pr_{\rho \sim \mu} \left[ | \Tr(M(\rho-\sigma) \vert| \geq \tau \right]  \leq {\mathop{\Var}_{\rho \sim \mu}[\Tr(\rho M)]}\cdot {\tau^{-2}}, 
\end{align}
where 
$$
\mathop{\Var}_{\rho \sim \mu}[\Tr(\rho M)]=\mathop{\Exp}_{\rho \sim \mu}[\Tr(\rho M)^2]-\Big(\mathop{\Exp}_{\rho \sim \mu}[\Tr(\rho M)]\Big)^2.
$$

Now let $\nu$ be a distribution over the queries $M$ that are made by a randomized algorithm for the many-one decision problem $(\Cc, \E_{\rho\sim \mu}[\rho])$ and let $d(\nu)$ be the smallest integer such that
$$
    \forall \rho \in \Cc: 1/d(\nu) \leq \Pr_{M\sim\nu}\left[ \vert \Tr[M(\rho-\sigma)] \vert > \tau \right]\ .
$$
We then have that
\begin{align}
    1/d(\nu) & \leq \Pr_{M\sim\nu,\rho\sim\mu}\left[ \vert \Tr[M(\rho-\sigma)] \vert > \tau \right]\ ,\\
    & \leq \E_{M\sim \nu}\left[ {\mathop{\Var}_{\rho \sim \mu}[\Tr(\rho M)]}\cdot {\tau^{-2}} \right]\ ,\\
    & \leq \tau^{-2}\cdot\max_M \left[ {\mathop{\Var}_{\rho \sim \mu}[\Tr(\rho M)]}\right]\ .
\end{align}
Minimizing $d(\nu)$ over all $\nu$ and using lemma~\ref{Lemma:RandomCover} completes the proof.

2.  We now prove the average correlation bound. Let $\Cc'\subseteq \Cc$. Let $\hat{\rho} := (\rho \sigma^{-1} - \id)$ and define:
    \begin{align}
        \gamma(\Cc', \sigma) &= \frac{1}{|\Cc'|^2} \sum_{\rho_i, \rho_j \in \Cc'} |\Tr[\hat{\rho_i}\hat{\rho_j}\sigma]|
    \end{align}
    We will first show that for any such $\Cc'$ and any observable $M, \|M\| \leq 1$, we have that:
    \begin{align}
        \left( \sum_{\rho \in \Cc'} |\Tr(M(\rho-\sigma)| \right)^2 \leq |\Cc'|^2 \gamma(\Cc', \sigma).
    \end{align}
     To that end, observe that:
     \begin{equation}
    \begin{aligned}
        \left( \sum_{\rho \in \Cc'} | \Tr(M(\rho-\sigma)| \right)^2 &=  \left( \sum_{\rho \in \Cc'} | \Tr(M\hat{\rho}\sigma)| \right)^2 = \left[  \Tr\left(\sqrt{\sigma} M \sum_{\rho \in \Cc'} \sign (\Tr(M\hat{\rho}\sigma) \hat{\rho}\sqrt{\sigma}\right)\right]^2 \\
        &\leq \Tr(\sigma M^2) \Tr \left(\left[ \sum_{\rho \in \Cc'} \sign (\Tr(M\hat{\rho}\sigma)\hat{\rho}  \right]^2 \sigma \right),
    \end{aligned}
    \end{equation}
    where the above follows from Cauchy-Schwartz inequality.
    Since $\| M \| \leq 1$, we have that $\Tr(\sigma M^2) \leq 1$ and also that:
    \begin{equation}
    \begin{aligned}
        \Tr \left(\left[\sum_{\rho \in \Cc'} \sign (\Tr(M\hat{\rho}\sigma)\hat{\rho}  \right]^2 \sigma \right) &= \sum_{\rho_1, \rho_2 \in \Cc'} \sign (\Tr(M\hat{\rho_1}\sigma)) \sign (\Tr(M\hat{\rho_2}\sigma))  \Tr \left[\hat{\rho}_1 \hat{\rho}_2 \sigma \right] \\
        &\leq |\Cc'|^2 \gamma(\Cc', \sigma).
    \end{aligned}
    \end{equation}
    
 We show the claim by upper-bounding the $\kappa_\tau\textsf{-\textsf{frac}}(\mu, \sigma)$ for $\mu$ uniform over some subset $\Cc_0 \subseteq \Cc$ by $\kappa_\tau^\gamma\textsf{-\textsf{frac}}(\Cc_0, \sigma)$.  Recall that for a distribution $\mu$ over quantum states, we have that:
     \begin{align}
        \kappa_\tau(\mu, \sigma) &= \max_{M, \|M\|\leq 1} \left\lbrace \Pr_{\rho \sim \mu} \left[ |\Tr(M(\rho-\sigma)| > \tau \right]\right\rbrace
    \end{align}
    For a uniform distribution $\mu_{\Cc_0}$ over $\Cc_0 \subseteq \Cc$, this gives:
    \begin{align}
        \kappa_\tau(\mu_{\Cc_0}, \sigma) &= \max_{M, \|M\|\leq 1}  \frac{1}{|\Cc_0|} \sum_{\rho \in \Cc_0} \delta \left[ |\Tr(M(\rho-\sigma)| > \tau \right],
        \label{Eq:uniform}
    \end{align}
    where $\delta[x] = 1$ if the clause $x$ is true, and $0$ otherwise. From here onwards, fix $M$ to be the operator that maximizes the above expression. Let $\Cc' \subseteq \Cc_0$ to be the largest subset of $\Cc_0$, such that $|\Tr(M(\rho - \sigma))| > \tau$ for all $\rho \in \Cc'$. Then $\sum_{\rho \in \Cc'} |\Tr(M(\rho - \sigma)| > |\Cc'|\tau$. Along with $\sum_{\rho \in \Cc'} \delta[|\Tr(M(\rho - \sigma)| > \tau] = |\Cc'|$ this implies that: 
    \begin{align}
        \sum_{\rho \in \Cc_0} \delta \left[ |\Tr(M(\rho-\sigma)| > \tau \right] \leq \max_{\Cc' \subseteq \Cc_0}  \left[ |\Cc'| \delta\left(\sum_{\rho \in \Cc'} |\Tr(M \rho - \sigma)| > |\Cc'| \tau \right)\right].
    \end{align}
    Combining this with Eq.~\eqref{Eq:uniform}, this gives:
    \begin{align}
        \kappa_\tau(\mu_{\Cc_0}, \sigma) &\leq \max_{C' \subseteq C_0} \left\lbrace \frac{|\Cc'|}{|\Cc_0|} \Bigg\vert \sum_{\rho \in \Cc'} | \Tr(M(\rho-\sigma)| > |\Cc'| \tau \right\rbrace.
    \end{align}
    Using $\left(\sum_{\rho \in \Cc'} | \Tr(M(\rho-\sigma)|\right)^2 \leq |\Cc'|^2 \gamma(\Cc', \sigma)$ implies
    \begin{align}
        \kappa_\tau{-\textsf{frac}}(\mu_{\Cc_0}, \sigma) \leq \min_{\Cc' \subseteq \Cc} \kappa_{\tau^2}^\gamma{-\textsf{frac}}(\Cc', \sigma).
    \end{align}
Hence we have that
    \begin{align*}
        \QSDA_\tau(\Cc, \sigma) &= \sup_{\mu}(\kappa_\tau(\mu, \sigma)^{-1}) \geq (\kappa_\tau(\mu_{\Cc_0}, \sigma)^{-1}) \geq 
        \max_{\Cc' \subseteq \Cc}(\kappa_{\tau^2}^\gamma{-\textsf{frac}}(\Cc', \sigma)^{-1}) = \QAC_{\tau^2}(\Cc, \sigma).
    \end{align*}
    This proves the lower bounds in the theorem statement.
\end{proof}

In many of the bounds to be proved in the following sections we consider converting a learning problem to a decision problem $\Cc$ versus $\sigma$ where $\min_{\rho \in \Cc}\TRD(\rho,\sigma) \geq \zeta$, where $\zeta$ is some constant. For large enough $\tau$ Lemma~\ref{Lemma:LearningToDeciding} may no longer hold. Fixing an approximation error $\eps$, Lemma~\ref{Lemma:LearningToDeciding} then holds if $\zeta - 2\eps > 2\tau$. Note that the left hand side is some constant and we implicitly assume this upper bound on $\tau$ in the following proofs. This is without loss of generality: the existence of a $\QSQ$ algorithm with tolerance $\tau$ greater than or equal to $\zeta - 2\eps$ then further implies that one exists for all tolerances of smaller value. That is, smaller tolerance cannot increase the query/time complexity. As many of the results are asymptotic, requiring that $\tau$ be at most some constant does not change the results even when $\tau$ appears in the lower bound.


\section{Separations between statistical and entangled measurements}
\label{sec:deg2sep}
In this section we prove our main theorem separating noisy entangled $\QPAC$ learning and $\QSQ$ learning, and next show that for a ``small" circuit one can witness such an exponential separation.
\subsection{Separation between $\QSQ$ and $\QPAC$ with classification noise}
In this section we prove our main theorem. Consider the class of function states 
    $$
    \Cc=\Big\{\ket{\psi_A}=\frac{1}{\sqrt{2^n}}\sum_{x\in \01^n} \ket{x,x^\top Ax \text{ (mod 2} )}:A\in \mathbb{F}_2^{n\times n}\Big\}.
    $$ 
    The sample complexity of learning this class in the following models is given as follows
    \begin{enumerate}
        \item  Entangled measurements: $\Theta(n)$
        \item  Separable measurements: $\Theta(n^2)$
        \item Statistical query learning: $\Omega(\tau^2 \cdot 2^{n/2})$ making $\Qstat(\tau)$ queries.
         \item Entangled $\eta$-random classification noise: $O(\frac{n}{(1-2\eta)^2})$. Algorithm runs in time $O(n^3/(1-2\eta)^2)$.
    \end{enumerate}

Points $(1),(2)$ above were proved in~\cite{arunachalam2022optimal} and we do not prove it here. In the following two theorems we prove points $(3),(4)$ above. We remark that this result partially resolves an open question in~\cite[Question~18]{anshu23}, who asked if separable $\eta$-random classification noise and $\QSQ$ complexity can be separated in sample complexity.

\begin{theorem}\label{thm:mainsep}
 The concept class
    $$
    \Cc=\Big\{\ket{\psi_A}=\frac{1}{\sqrt{2^n}}\sum_{x\in \01^n} \ket{x,x^\top Ax \text{ (mod 2} )}:A\in \mathbb{F}_2^{n\times n}\Big\}
    $$ 
    requires $2^{\Omega(n)}$ many $\Qstat$ queries of tolerance $\tau=1/\poly(n)$ to learn below trace distance $0.05$ with high probability.
\end{theorem}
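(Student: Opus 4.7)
The plan is to invoke the framework built in Section~\ref{sec:qsqupper}. Chaining Lemma~\ref{Lemma:LearningToDeciding} with the $\QSD \geq (1-2\delta)\,\QSDA$ reduction and the variance bound of Theorem~\ref{thm:mainlowerboundonqsq}, it suffices to exhibit a distribution $\mu$ on $\Cc$ and a state $\sigma \notin \Cc$ (with $\min_{A} \TRD(\psi_A,\sigma) > 2(\tau+0.05)$) such that, for every observable $M$ with $\|M\|\leq 1$, the variance $\Var_{A\sim\mu}[\Tr(\psi_A M)]$ is $2^{-\Omega(n)}$. With $\tau = 1/\poly(n)$ this immediately yields $\QSDA_\tau(\Cc,\sigma) \geq \tau^2 \cdot 2^{\Omega(n)} = 2^{\Omega(n)}$.

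The natural choice is $\mu$ uniform over $A\in\mathbb{F}_2^{n\times n}$ and $\sigma = \Exp_A[\psi_A]$. Since each $\psi_A$ is pure but $\sigma$ has rank much greater than one, $\sigma\notin\Cc$. Moreover, one can check via a short calculation (analogous to the one computing $\Exp_{A}[|\langle\psi_A|\psi_{A^\star}\rangle|^2]$ using Fact~\ref{fact:zippel}) that $\langle\psi_A|\sigma|\psi_A\rangle \leq 1/4 + O(2^{-n})$ for every $A$, which via $\TRD(\psi_A,\sigma) \geq \sqrt{1-\langle\psi_A|\sigma|\psi_A\rangle}$ gives $\TRD(\psi_A,\sigma) > 0.8$, comfortably above $2(\tau+0.05)$ for small $\tau$.

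The technical heart of the proof is the variance estimate. Writing
\[
\Var_A[\Tr(\psi_A M)] \;=\; \Tr\!\Big[\big(\Exp_A[\psi_A^{\otimes 2}] - \Exp_A[\psi_A]^{\otimes 2}\big)\, M^{\otimes 2}\Big]
\]
and applying $|\Tr[XY]| \leq \|X\|_1\,\|Y\|_\infty$ together with $\|M^{\otimes 2}\|_\infty \leq 1$, the task reduces to showing $\|\Delta\|_1 \leq 2^{-\Omega(n)}$ where $\Delta := \Exp_A[\psi_A^{\otimes 2}] - \Exp_A[\psi_A]^{\otimes 2}$. The key structural observation is that the matrix entry of $\Delta$ indexed by $((x_1,b_1,x_2,b_2),(y_1,c_1,y_2,c_2))$ is driven by the joint distribution of $(f_A(x_1),f_A(y_1),f_A(x_2),f_A(y_2))$ over uniform $A$. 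Because $f_A(u)=\langle uu^\top, A\rangle \pmod 2$ is $\mathbb{F}_2$-linear in $A$, this joint distribution factors into the marginals of the two pairs precisely when the $\mathbb{F}_2$-spans $V_1=\mathrm{span}(x_1x_1^\top,y_1y_1^\top)$ and $V_2=\mathrm{span}(x_2x_2^\top,y_2y_2^\top)$ inside $\mathbb{F}_2^{n\times n}$ satisfy $V_1\cap V_2=\{0\}$. Hence $\Delta$ is supported only on those indices where a cross-pair linear dependency exists among the four rank-$1$ matrices $x_1x_1^\top, y_1y_1^\top, x_2x_2^\top, y_2y_2^\top$.

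The main obstacle is then a combinatorial count: one must classify all possible cross-pair dependencies (e.g.\ simple coincidences $x_1=x_2$, pair-swaps $x_1x_1^\top = y_2y_2^\top$, or higher-rank relations like $x_1x_1^\top+y_1y_1^\top = x_2x_2^\top+y_2y_2^\top$), and show that for each type the number of bad $4$-tuples $(x_1,y_1,x_2,y_2)$ is at most a $2^{-\Omega(n)}$ fraction of the total $2^{4n}$ tuples. This relies on the fact that the variety $\{uu^\top : u\in\mathbb{F}_2^n\}$ inside $\mathbb{F}_2^{n\times n}$ is highly non-degenerate: each equation $\sum_i u_iu_i^\top=0$ is a codimension-$\Omega(n)$ condition on the tuple, and fixing one vector's value via such a relation pins down $\Omega(n)$ bits of another. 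Combined with the trivial bound $|\Delta_{((x_1,b_1,x_2,b_2),(y_1,c_1,y_2,c_2))}|\leq 4^{-n}$ on each entry and the $O(2^{2n})$ summation over the free label bits $(b_i,c_i)$, counting the supported tuples delivers $\|\Delta\|_1 \leq 2^{-\Omega(n)}$ as claimed, after which the reductions in the first paragraph complete the proof.
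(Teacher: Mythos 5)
Your reduction chain (Lemma~\ref{Lemma:LearningToDeciding}, then $\QSD\geq(1-2\delta)\QSDA$, then the variance bound of Theorem~\ref{thm:mainlowerboundonqsq}, with $\mu$ uniform and $\sigma=\Exp_A[\psi_A]$) is exactly the paper's setup, and the trace-distance check is fine in spirit (though your overlap bound should read $\Exp_B[|\braket{\psi_A}{\psi_B}|^2]\leq 9/16+o(1)$, not $1/4+O(2^{-n})$, since Fact~\ref{fact:zippel} only gives $\Pr_x[f_A(x)=f_B(x)]\leq 3/4$; the resulting constant still comfortably exceeds $2(\tau+0.05)$). The problem is the technical heart. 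The step
\[
\Var_A[\Tr(\psi_A M)]=\Tr\big[\Delta\, M^{\otimes 2}\big]\leq \|\Delta\|_1\,\|M^{\otimes 2}\|_\infty,\qquad \Delta:=\Exp_A[\psi_A^{\otimes 2}]-\Exp_A[\psi_A]^{\otimes 2},
\]
followed by the claim $\|\Delta\|_1\leq 2^{-\Omega(n)}$, does not work: $\|\Delta\|_1=\Theta(1)$. To see this, test $\Delta$ against the unitary $\SWAP$ of the two registers: $\Tr[\psi_A^{\otimes 2}\,\SWAP]=\Tr[\psi_A^2]=1$ for every $A$, while $\Tr[\Exp_A[\psi_A]^{\otimes 2}\,\SWAP]=\Tr[\Exp_A[\psi_A]^2]=O(2^{-n})$, so $\|\Delta\|_1\geq|\Tr[\Delta\,\SWAP]|=1-o(1)$. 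In your combinatorial language, the single cross-coincidence $x_1=x_2$ alone already accounts for $\Theta(2^{3n})$ supported $4$-tuples whose $\Delta$-entries are genuinely of magnitude $\Theta(4^{-n})$ (e.g.\ $\pm\frac{1}{16}\cdot 4^{-n}$ when the other vectors are generic), giving entrywise $\ell_1$ mass $\Theta(2^n)$; even your own stated count (a $2^{-\Omega(n)}$ \emph{fraction} of $2^{4n}$ tuples, i.e.\ up to $2^{3n}$ tuples, times $4^{-n}$ per entry) yields $2^{n}$, not $2^{-\Omega(n)}$.

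The point you are discarding by applying H\"older is precisely what makes the variance small: the test operator is not an arbitrary norm-one operator on the doubled space but has the product form $M\otimes M$. Against that structure the offending $\frac{1}{4^n}\SWAP$-type component of $\Exp_A[\psi_A^{\otimes 2}]$ contributes only $\frac{1}{4^n}\Tr[M^2]\leq 2^{-n}$, and the EPR-type component $\frac{1}{2^n}\kb{\Phi^+}$ contributes $\frac{1}{2^n}\bra{\Phi^+}M\otimes M\ket{\Phi^+}\leq 2^{-n}$. This is how the paper proceeds: it conjugates by a Hadamard on the label qubit to write $\psi_A$ as a signed combination of four blocks built from the phase state $\ket{\phi_A}$ and the uniform state $\ket{u}$, computes the first and second moments of the phase-state ensemble explicitly (Fact~\ref{fact:phasestatedecomp}), and bounds each of the resulting cross-terms $\Tr[(M_i\otimes M_j)\Exp_A[\rho_i^A\otimes\rho_j^A]]$ individually using identities such as $\Tr[(M\otimes M)\SWAP]=\Tr[M^2]$. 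To repair your argument you must keep the $M\otimes M$ structure throughout (or, equivalently, subtract off the $\SWAP$ and $\kb{\Phi^+}$ components of $\Delta$ and handle them separately before taking any trace-norm bound on the remainder).
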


\begin{proof}
    We prove the hardness for algorithms using $\Qstat$ queries using the variance lower bounding technique in Theorem~\ref{thm:mainlowerboundonqsq}. In particular, we show an exponentially small upper bound for the variance for any observable: for every $n+1$ qubit operator $M$ such that $\Vert M \Vert \leq 1$ we have~that
        \begin{align}
        \label{eq:variancebound}
            \Var_{A}\left(\Tr[M\psi_A]\right) = 2^{-\Omega(n)},
        \end{align}
        where we let $\psi_A=\ketbra{\psi_A}{\psi_A}$ for notational simplicity. To apply our results linking learning and decision problems we note that
        \begin{align*}
            \TRD(\psi_A, \E_B[\rho_B]) & \geq 1-\sqrt{\E_B[\vert \braket{\psi_A}{\psi_B}]\vert^2} \geq 1-\sqrt{\frac{(2^{n(n+1)/2}-1)\cdot 9/16 + 1}{2^{n(n+1)/2}}} \geq 1 - \sqrt{17/32}\ ,
        \end{align*}
        where the first inequality follows from the lower bound on trace distance by fidelity \cite{Fuchs2006Cryptographic} and the second by Fact~\ref{fact:zippel} and that $\braket{\psi_A}{\psi_B} = \Pr_{x}[f_A(x) = f_B(x)]$. Fix $\eps = 0.05$. Then Lemma~\ref{Lemma:LearningToDeciding} holds if $\tau < 0.085$, which we assume without loss of generality as previously discussed\footnote{The choice of $\eps = 0.05$ is arbitrarily and done for readability. A similar result holds for any $\eps < 1/2(1-\sqrt{17/32})$ by the same argument.} Along with Theorem~\ref{thm:mainlowerboundonqsq}, we obtain our lower bound on the $\QSQ$ complexity of learning $\Cc$.

        It remains to establish Eq.~\eqref{eq:variancebound}. To this end, we need to understand
    \begin{align}
        \text{Var}_A(\Tr[M\psi_A]) & = \E_A[\Tr[M\psi_A]^2] - (\E_A[\Tr[M\psi_A]])^2
    \end{align}
    To do so, we decompose $\psi_A$ as follows. For every $f_A: \{0,1\}^n \rightarrow \{0,1\}$ given by $f_A(x) = x^\top A x$ let $\ket{\psi_A} = \frac{1}{\sqrt{2^n}}\sum_x \ket{x,f_A(x)}$ and $\ket{\phi_A} = \sum_x (-1)^{f_A(x)} \ket{x}$. For convenience we let $\ket{u} = \frac{1}{\sqrt{2^n}} \sum_x \ket{x}$. Then we see that
    \begin{align*}
        (\id \otimes H)\psi_A (\id \otimes H) & = \frac{1}{2}\sum_{x,y,a,b} (-1)^{a\cdot f_A(x)+b\cdot f_A(y)}\ketbra{x,a}{y,b}\\
        & = \frac{1}{2}\Big(\ketbra{\phi_A}{\phi_A}\otimes \ketbra{1}{1}-\frac{1}{2}\ketbra{\phi_A}{u}\otimes \ketbra{1}{0}-\frac{1}{2}\ketbra{u}{\phi_A}\otimes \ketbra{0}{1}+\frac{1}{2}\ketbra{u}{u}\otimes \ketbra{0}{0}\Big)
    \end{align*}
    hence we have that
\begin{align}
\ketbra{\psi_A}{\psi_A}=\frac{1}{2}\Big(\underbrace{\ketbra{\phi_A}{\phi_A}\otimes \ketbra{-}{-}}_{\rho^A_1}-\underbrace{\ketbra{\phi_A}{u}\otimes \ketbra{-}{+}}_{\rho^A_2}-\underbrace{\ketbra{u}{\phi_A}\otimes \ketbra{+}{-}}_{\rho^A_3}+\underbrace{\ketbra{u}{u}\otimes \ketbra{+}{+}}_{\rho^A_4}\Big).
\end{align}
Any $n+1$ qubit observable $M$ can be decomposed as $M = \sum_{a,b} M_{a,b}\otimes \ketbra{a}{b}$ where $a,b \in \{+,-\}$. Since $\Vert M \Vert \leq 1$ we also have that $\Vert M_{a,b} \Vert \leq 1$, however the off-diagonal blocks now no longer need be Hermitian. Instead, $M_{+,-} = M_{-,+}^\dagger$. In an abuse of notation we now discard the last qubit of $\rho^A_i$ and denote the resulting state also as $\rho^A_i$. For convenience we further introduce the notation $M_1 = M_{-,-}$, $M_2 = M_{+,-}$, $M_3 = M_{-,+}$, and $M_4 = M_{+,+}$ Thus, we see that $\Tr[M\psi_A] = \frac{1}{2}\sum_i\Tr[M_i\rho^A_i]$ and further the variance can be written as 
\begin{align}
    \text{Var}_A(\Tr[M\psi_A]) & = \frac{1}{4}\Big(\E_A\Big[\sum_{i,j} \Tr\left[M_i\rho^A_i\right]\Tr\left[M_j\rho^A_j\right]\Big] - \Big(\E_A\Big[\sum_i\Tr\left[M_i\rho^A_i\right]\Big]\Big)^2\Big)\\
    & = \frac{1}{4}\sum_{i,j}\Big(\Exp_A\left[\Tr\big[M_i \rho^A_i\big]\cdot \Tr\big[M_j \rho^A_j\big]\right]- \Tr\left[ M_i\Exp_A\left[\rho^A_i\right]\right]\cdot\Tr\left[M_j \Exp_A\left[\rho^A_j\right]\right]\Big)\\
    & = \frac{1}{4}\sum_{i,j}\Big( \Tr\left[ (M_i\otimes M_j)\E_A\left[ \rho_i^A \otimes \rho_j^A\right]\right] -\Tr\left[ M_i\Exp_A\left[\rho^A_i\right]\right]\cdot\Tr\left[M_j \Exp_A\left[\rho^A_j\right]\right]\Big).
\end{align}
Below we drop the factor of $1/4$ and bound the magnitude of each term for $i,j \in [4]$. We show that each term is exponentially small. To do so, we use the following fact which will be proven later.
\begin{fact}
\label{fact:phasestatedecomp}
 We have the following
 \begin{align*}
 &\Exp_A[\ket{\phi_A}]=\ket{0^n}/\sqrt{2^n},\qquad \Exp_A[\ket{\phi_A}^{\otimes 2}]=\ket{\Phi^+}/\sqrt{2^n},\qquad \Exp_A[\ketbra{\phi_A}{\phi_A}]=\id/2^n,\\
 &  \Exp_A[\ket{\phi_A}\otimes \bra{\phi_A}] = \frac{1}{2^n}\sum_x \ket{x}\otimes \bra{x},\quad \Exp_A[\ket{\phi_A}\otimes \ketbra{\phi_A}{\phi_A}] = \frac{1}{2^{3n/2}}\ket{0}\otimes\ketbra{0}{0}, \\
 &\Exp_A[\ketbra{\phi_A}{\phi_A}\otimes \ket{\phi_A}] = \frac{1}{2^{3n/2}}\Big(\sum_x \ketbra{x}{x}\otimes \ket{0} + \ketbra{x}{0}\otimes \ket{x} + \ketbra{0}{x}\otimes\ket{x} - 2\ketbra{0}{0}\otimes \ket{0}\Big),\\
 &\Exp_A [\ketbra{\phi_A}{\phi_A}^{\otimes 2}] = 
\frac1{4^n} (\id+\SWAP) + \frac{1}{2^n} \ketbra{\Phi^+}{ \Phi^+} - \frac{2}{4^n} \sum_x \ketbra{x,x}{x,x},
\end{align*}
where $\SWAP$ swaps two $n$-qubit registers, i.e., $\SWAP \ket{\psi}\otimes \ket{\phi} = \ket{\phi}\otimes\ket{\psi}$ and
$\ket{\Phi^+} = 2^{-n/2} \sum_x \ket{x,x}$.
is the EPR state of $2n$ qubits.
\end{fact}
First note that $\rho^A_4$ does not depend on $A$. Thus, for all $i \in [4]$ we have that 
\begin{align}
    \E_A\left[\Tr\left[(M_i\otimes M_4)(\rho_i^A \otimes \rho_4^A)\right]\right] & = \Tr\left[M_i \E_A[\rho_i^A]\right]\cdot\Tr\left[M_4\E_A[\rho_4^A]\right]\ .
\end{align}
The contribution to the variance from these cases equals $0$. We are left with analyzing $i,j\in [3]$ and do so separately below. 

\textbf{Case $i=j=1$.} Using Fact~\ref{fact:phasestatedecomp} above, we get that 
$$
\Tr\big[ M_1 \Exp_A[\rho^A_1]\big]=\Tr\big[M_1 \cdot \Exp_A\left[\ketbra{\phi_A}{\phi_A}\right]\big]=\Tr[M_1]/2^n.
$$
Next, observe that
\begin{align}
    \Exp_A \big[\Tr(M_1\rho^A_1)^2\big]&=\Tr\left[M_1\otimes M_1\cdot \Exp_A[\rho^A_1\otimes \rho^A_1]\right]\\
    &=\Tr\Big[M_1\otimes M_1\cdot \Big(\frac1{4^n} (\id+\SWAP) + \frac{1}{2^n} \ketbra{\Phi^+}{ \Phi^+} - \frac{2}{4^n} \sum_x \ketbra{x,x}{x,x}\Big)\Big]\\
    &=\frac1{4^n}\big(\Tr [M_1]\big)^2 +\frac1{4^n}\Tr\left[M_1^2\right] + \frac{1}{2^n} \bra{\Phi^+} M_1 \ket{\Phi^+} - \frac{2}{4^n} \sum_x M_1(x,x)^2\\
    &\leq\frac1{4^n}\left(\Tr [M_1]\right)^2 +\frac1{2^n} + \frac{1}{2^n},
\end{align}
where the third equality used that $\Tr[M_1\otimes M_1\cdot \SWAP]=\Tr\left[M_1^2\right]$, the fourth equality used that $\Tr\left[M_1^2\right]\leq 2^n$ and $\Vert M_1^{\otimes 2}\Vert \leq 1 $. Implicitly we have used that $M_1$ is Hermitian and $\Vert M_1 \Vert \leq 1 $. Hence we have that variance term contribution is
\begin{align*}
&\Exp_A \left[\Tr\big[M_1 \rho^A_1\big]\cdot \Tr\big[M_1 \rho^A_1\big]\right]- \Tr\big[ M_1\Exp_A[\rho^A_1]\big]\cdot\Tr\big[ M_1 \Exp_A[\rho^A_1]\big]\\
&\leq \frac1{4^n}\big(\Tr [M_1]\big)^2 +\frac2{2^n}-\Big(\frac{\Tr[M_1]}{2^n}\Big)^2= 2/2^n.
\end{align*}
Since this term must be non-negative, the norm is bounded by $2/2^n$ as well.

\textbf{Case $i=j=2$.} Using Fact~\ref{fact:phasestatedecomp} above, we get that 
$$
\Tr\left[ M_2 \Exp_A[\rho^A_2] \right]=\Tr\left[ M_2 \Exp_A[\ketbra{\phi_A}{u}]\right]=\langle 0|M_2| u\rangle/\sqrt{2^n}.
$$
Next note that
\begin{align}
    \E_A\left[\Tr[M_2\rho_2^A]^2\right] & = \Tr\left[(M_2 \otimes M_2) (\E_A[\rho_2^A \otimes \rho_2^A])\right]\\
    & = \Tr\left[(M_2 \otimes M_2)  (\E_A[\ket{\phi_A}^{\otimes 2}]\bra{u}^{\otimes 2})\right]\\
    & = \frac{1}{\sqrt{2^n}}\bra{u,u }M_2 \otimes M_2\ket{\Phi^+} = \frac{1}{2^n}\sum_x \bra{u} M_2 \ket{x}^2 = \frac{1}{4^n}\sum_x \left(\sum_y M_2(y,x)\right)^2
\end{align}
We now bound the norm of each of these terms individual as then, by triangle inequality, the norm of the contribution from the case is exponentially small as well.
\begin{align}
    \left\vert (\langle 0|M_2| u\rangle/\sqrt{2^n}) \right\vert & \leq \frac{1}{\sqrt{2^n}} \sqrt{\Vert \ket{0} \Vert^2 \ \Vert M_2 \ket{u} \Vert^2} \leq \frac{1}{\sqrt{2^n}}\ ,
\end{align}
and thus $\left\vert \Tr\left[M_2 \Exp_A[\rho^A_2]\right]^2 \right\vert \leq \frac{1}{2^n}$. For the other term we use that 
\begin{align}
    \left\vert \sum_x \left(\sum_y M_2(y,x)\right)^2 \right\vert  \leq \sum_x \left\vert \sum_y M_2(y,x) \right\vert^2.
\end{align}
Then we can rewrite this as $\frac{1}{2^n}\Vert M_2 \ket{u} \Vert_2^2 \leq \frac{1}{2^n} \Vert M_2 \Vert^2 \leq \frac{1}{2^n}$. Thus, the norm of the contribution from this case is upper bounded by $2/2^n$.

 \textbf{Case $i=j=3$.}  Since $M_3 = M_2^\dagger$ and $\rho_3^A = (\rho_2^A)^\dagger$, this is the same as the case $i=j=2$, thus the norm of this case is upper bounded by $\frac{2}{2^n}$ as well.

\textbf{Case $i=2$ and $j=3$} Using Fact~\ref{fact:phasestatedecomp} note that
\begin{align}
    \Tr[(M_2 \otimes M_3) \E_f[\rho_2^A \otimes \rho_3^A]] & = \Tr[(M_2\otimes M_3) (\id \otimes \ket{u})\E_A[\ket{\phi_A}\otimes \bra{\phi_A}] (\bra{u} \otimes \id)]\\
    & = \frac{1}{2^n}\sum_x \Tr[M_2 \otimes M_3 (\ketbra{x}{u}\otimes \ketbra{u}{x})
\end{align}
 Now we use that $M_3 = M_2^\dagger$ to rewrite this as $\frac{1}{2^n}\sum_x \vert \bra{u}M_2\ket{x}\vert^2$
 \begin{align}
    \frac{1}{4^n} \sum_x \vert \bra{u}M_2\ket{x}\vert^2 & = \frac{1}{2^n} \Vert M_2 \ket{u} \Vert_2^2  \leq \frac{1}{2^n}
 \end{align}
 We have already shown the subtracted terms to be exponentially small in magnitude and thus the magnitude of this case must be upper bounded by $2/2^n$ as well.
 
 \textbf{Case $i=1$ and $j=2,3$.} Here we work out $j=2$ as the result then holds similarly for $j=3$. 
 \begin{align}
    \left\vert \Tr[(M_1 \otimes M_2) \E_A[\rho_1^A \otimes \rho_2^A]] \right\vert & = \left\vert \Tr[(M_1 \otimes M_2) \E_A[\ketbra{\phi_A}{\phi_A}\otimes \ket{\phi_A}](\id \otimes \bra{u})]\right\vert \\
    \quad = \Big\vert \frac{1}{2^{3n/2}}\Tr\Big[(M_1 \otimes M_2) & \Big(\sum_x \ketbra{x}{x}\otimes \ketbra{0}{u}+ \ketbra{x}{0}\otimes\ketbra{x}{u}+ \ketbra{0}{x}\otimes\ketbra{x}{u} - 2\ketbra{0}{0}\otimes\ketbra{0}{u}\Big)\Big] \Big\vert
 \end{align}
We deal with each term in the trace above separately. First:
\begin{align}
    \frac{1}{2^{3n/2}}\left\vert \Tr\left[M_1\otimes M_2 \left(\sum_x \ketbra{x}{x} \otimes \ketbra{0}{u}\right)\right]\right| & = \frac{1}{2^{3n/2}}\vert \Tr[M_1]\cdot \langle u \vert M_2 \vert 0 \rangle \vert \leq \frac{1}{2^{n/2}}\ .
\end{align}
The next two terms are similar and we bound the first here (which implies the same upper bound for the second using nearly identical steps).
\begin{align}
    \frac{1}{2^{3n/2}}\left\vert \Tr[M_1\otimes M_2 \left(\sum_x \ketbra{x}{0}\otimes \ketbra{x}{u}\right)\right] & = \frac{1}{2^n}\vert \langle 0,u \vert M_1 \otimes M_2 \vert \Phi^+ \rangle\vert \leq \frac{1}{2^n} \ .
\end{align}
The last remaining term is bounded as follows:
\begin{align}
    \frac{1}{2^{3n/2}}\left\vert \Tr[M_1 \otimes M_2 \ketbra{0}{0}\otimes \ketbra{0}{u}]\right\vert & =  \frac{1}{2^{3n/2}} \left\vert \Tr[M_1 \ketbra{0}{0}]\right\vert \cdot \left\vert \Tr[M_2 \ketbra{0}{u}] \right\vert  \leq \frac{1}{2^{3n/2}}\ .
\end{align}
Thus, the contribution from the second moments is of magnitude at most $O(2^{-n/2})$. While $\left\vert \Tr[M_1 \E_A[\rho_1^A]] \right\vert$ may be large (up to $1$), we also have that $\left\vert \Tr[M_2 \E_A[\rho_2^A]] \right\vert \leq \frac{1}{\sqrt{2^n}}$ and thus all terms in the case are exponentially small in norm as well. We have thus shown that the norms of the contributions for each case are all exponentially small. Thus, the variance must be exponentially small as well. It remains to prove Fact~\ref{fact:phasestatedecomp} which we do now.

\begin{proof}[Proof of Fact~\ref{fact:phasestatedecomp}]
Most of the desired expectation values stem from the following observation: for the uniform distribution over upper triangular matrices $A$, we have that
    \begin{align} 
    \E_A[(-1)^{x^\top A x + y^\top A y + z^\top A z}] & = \E_A[(-1)^{\langle X + Y + Z, A\rangle}]\\
    & = \delta_{X+Y+Z,\textbf{0}} = \delta_{x,y}\delta_{z,0} + \delta_{x,z}\delta_{y,0} + \delta_{y,z}\delta_{x,0} - 2\delta_{x,0}\delta_{y,0}\delta_{z,0}
    \end{align}\,
where $X$, $Y$, and $Z$ are defined as $xx^T$, $yy^T$, and $zz^T$ respectively. The second equality follows from $E_z[(-1)^{\langle x,z \rangle}]=\delta_{x,0}$ and the third from the fact that $X+Y$ is rank $2$ over $\mathbb{F}_2$ unless $x=y$. Now the first three equalities are now easy to see: set $y=z=0$ and
$$
    \E_A[\ket{\phi_A}] = \E_A\left[\frac{1}{\sqrt{2^n}}\sum_x (-1)^{x^\top A x}\ket{x}\right]=\frac{1}{\sqrt{2^n}}\sum_x \E_A\left[(-1)^{x^\top A x}\right]\ket{x} =\ket{0^n}/\sqrt{2^n}\ .
$$
Similarly, setting $z=0$ yields
$$
    \E_A\left[\ket{\phi_A}^{\otimes 2}\right]=\E_A\left[\frac{1}{2^n}\sum_{x,y} (-1)^{x^\top A x+y^\top Ay}\ket{x,y}\right]=\frac{1}{2^n}\sum_{x,y} \E_A[(-1)^{x^\top A x+y^\top Ay}]\ket{x,y}\\
    =\frac{1}{2^n}\sum_{x}\ket{x,x}\ .
$$
Similar reasoning implies $\E_A[\ketbra{\phi_A}{\phi_A}]=\id/2^n$, $\E_A[\ket{\phi_A}\otimes \bra{\phi_A}] = \frac{1}{2^n}\sum_x \ket{x} \otimes \bra{x}$, and
$$
    \E_A[\ketbra{\phi_A}{\phi_A}\otimes \ket{\phi_A}] = \frac{1}{2^{3n/2}}\left(\sum_x \ketbra{x}{x}\otimes \ket{0} + \ketbra{x}{0}\otimes \ket{x} + \ketbra{0}{x}\otimes\ket{x} - 2\ketbra{0}{0}\otimes \ket{0}\right)\ .
$$

The final decomposition of $\E_A[\ketbra{\phi_A}{\phi_A}^{\otimes 2}]$ follows from~\cite[Proposition~2]{arunachalam2022optimal}.
    \end{proof}

The proof of the fact concludes the proof of the theorem.
\end{proof}

\begin{theorem}
\label{thm:noisypacupperbound}
 The concept class
    $$
    \Cc=\Big\{\ket{\psi_A}=\frac{1}{\sqrt{2^n}}\sum_{x\in \01^n} \ket{x,x^\top Ax \text{ (mod 2} )}:A\in \mathbb{F}_2^{n\times n}\Big\}
    $$ 
can be learned in the $\eta$-random classification model, using $O(n/(1-2\eta)^2)$ copies of the noisy state and time $O(n^3/(1-2\eta)^2)$.
\end{theorem}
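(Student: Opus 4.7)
\begin{proofof}{Proof plan for Theorem~\ref{thm:noisypacupperbound}}
The plan is to reduce to the noiseless learner from Fact~\ref{fact:deg2} by a simple preprocessing step that probabilistically distills a clean phase state $\ket{\phi_A}=\frac{1}{\sqrt{2^n}}\sum_x (-1)^{x^\top A x}\ket{x}$ out of each noisy example $\ket{\psi^n_A}=\frac{1}{\sqrt{2^n}}\sum_x \ket{x}\otimes(\sqrt{1-\eta}\ket{f_A(x)}+\sqrt{\eta}\ket{\overline{f_A(x)}})$. The preprocessing is to apply a Hadamard gate to the last (label) qubit and then measure it in the computational basis; depending on the outcome we either keep the remaining $n$-qubit state or discard it.

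First, I will explicitly compute the effect of $H$ on the label qubit. Since $H(\sqrt{1-\eta}\ket{b}+\sqrt{\eta}\ket{\bar b})=\tfrac{1}{\sqrt{2}}\bigl((\sqrt{1-\eta}+\sqrt{\eta})\ket{0}+(-1)^b(\sqrt{1-\eta}-\sqrt{\eta})\ket{1}\bigr)$, measuring the label qubit yields outcome $1$ with probability $p:=\tfrac{1}{2}(1-2\sqrt{\eta(1-\eta)})$, and conditioned on this outcome the remaining register collapses exactly to $\ket{\phi_A}$. (Conditioned on outcome $0$ the remaining register is the uninformative uniform superposition, which we discard.) A routine algebraic check using $4\eta(1-\eta)=1-(1-2\eta)^2$ shows $p\ge (1-2\eta)^2/4$.

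Next, I will apply a standard Chernoff bound. Given $T=C\cdot n/(1-2\eta)^2$ noisy copies for a suitable constant $C$, the expected number of distilled clean copies of $\ket{\phi_A}$ is at least $Cn/4$, and with probability at least $1-2^{-\Omega(n)}$ this number is $\Omega(n)$ (in fact at least the $O(n)$ copies required by Fact~\ref{fact:deg2}). On these clean phase states we then invoke Fact~\ref{fact:deg2} verbatim to recover $A$: Bell-sample to obtain $O(n)$ linear constraints of the form $(A+A^\top)y$, Gaussian-eliminate to read off the off-diagonal entries, and apply controlled phases followed by an $n$-qubit Hadamard to recover the diagonal. The sample count is $T=O(n/(1-2\eta)^2)$ as claimed, and the total time is $O(n^2/(1-2\eta)^2)$ for preprocessing plus $O(n^3)$ for the Bell sampling and Gaussian elimination, which is absorbed into $O(n^3/(1-2\eta)^2)$.

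There is no real obstacle: the only place one must be slightly careful is in lower bounding the distillation probability $p$ by $(1-2\eta)^2/4$ (equivalently showing $1-\sqrt{1-u}\ge u/2$ for $u=(1-2\eta)^2\in[0,1]$, which follows from squaring both sides of $1-u/2\ge\sqrt{1-u}$), and in verifying that the conditional post-measurement state is literally $\ket{\phi_A}$ rather than some mixed or biased state, which the computation above makes clear.
\end{proofof}
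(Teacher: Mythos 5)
Your proposal is correct and follows essentially the same route as the paper's proof: Hadamard the label qubit, measure, condition on outcome $1$ to collapse to the clean phase state $\ket{\phi_A}$ with probability $p=\tfrac12(1-2\sqrt{\eta(1-\eta)})=\Theta((1-2\eta)^2)$, and then run the Bell-sampling learner of Fact~\ref{fact:deg2}. Your explicit lower bound $p\ge(1-2\eta)^2/4$ and the Chernoff-bound concentration step are slightly more careful than the paper's write-up (which states the less useful direction $p\le(1-2\eta)^2$), but the argument is the same.
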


\begin{proof}
    Below, let $\ket{\phi_f}=\frac{1}{\sqrt{2^n}}\sum_x (-1)^{f(x)}\ket{x}$. In the classification noise model, we are given copies~of
$$
\ket{\psi_f}=\frac{1}{\sqrt{2^n}}\sum_x \ket{x} \otimes \big(\sqrt{1-\eta}\ket{f(x)}+\sqrt{\eta}\ket{\overline{f(x)}}\big).
$$
We first show that we can convert $\ket{\psi_n}$ to $\ket{\phi_f}$. In order to do so, observe the following
\begin{align*}
    (\id \otimes H)\ket{\psi_f}&=\frac{1}{\sqrt{2^{n+1}}}\sum_{x,b}\ket{x} \big( (-1)^{b\cdot f(x)}\ket{b}+\sqrt{\eta}(-1)^{b\cdot \overline{f(x)}}\ket{b}\big)\\
    &=\frac{1}{\sqrt{2^{n+1}}}\sum_{x,b} \big(\sqrt{1-\eta} (-1)^{b\cdot f(x)}+\sqrt{\eta}(-1)^{b\cdot \overline{f(x)}}\big) \ket{x,b}.
\end{align*}
Now, measuring the last qubit, the probability of seeing $b=1$ is given by 
\begin{align*}
    &\Big\| \frac{1}{\sqrt{2^{n+1}}}\sum_{x} \big(\sqrt{1-\eta} (-1)^{f(x)}-\sqrt{\eta}(-1)^{f(x)}\big) \ket{x}\Big\|_2^2\\
    &=\frac{1}{2^{n+1}}\sum_x \big(\sqrt{1-\eta} (-1)^{f(x)}-\sqrt{\eta}(-1)^{f(x)}\big)^2=\frac{1}{2} (\sqrt{1-\eta}-\sqrt{\eta})^2:=p,
\end{align*}
and the post-measurement state is given by $\ket{\phi_f}$. Hence with probability exactly $p=\frac{1}{2}(1-2\sqrt{\eta(1-\eta)})\geq \frac{1}{4}(1-2\eta)^2$, we can convert $\ket{\psi_f}$ to $\ket{\phi_f}$. 

Now we focus on the concept class $\{f_A(x)=x^\top A x\}_A$. The learning algorithm first takes $O(1/(1-2\eta)^2)$ copies of $\ket{\psi_A}$ to produce two copies of $\ket{\phi_A}$. Note that the algorithm knows when it succeeded, i.e., when the measurement of the last qubit is $1$, the algorithm knows that the above procedure performed the transformation $\ket{\psi_A}^{\otimes 2}\rightarrow \ket{\phi_A}^{\otimes 2}$. Now using Fact~\ref{fact:deg2} we can learn $f_A$ given $O(n)$ copies of $\ket{\phi_A}$ and $O(n^3)$ time.  Overall, the sample complexity and time complexity of the procedure is $O(n/(1-2\eta)^2)$ and $O(n^3/(1-2\eta)^2)$ respectively.
\end{proof}

\subsection{Smallest circuit class witnessing separation}
In the previous section we saw that the concept class of quadratic functions separated $\QPAC$ from $\QSQ$. Observe that states in this concept class can be prepared by circuits of size $O(n^2)$ and depth $O(n)$ consisting of $\{\textsf{Had},\textsf{X},\textsf{CX}\}$ gates. A natural question is, can states prepared by \emph{smaller} circuits also witness such a separation between $\QPAC$ and $\QSQ$? Below we answer this in the positive, by using a simple padding argument inspired by a prior work of Hinshe et al.~\cite{hinsche2022single}.

\begin{theorem}\label{thm:smallckt}
    Let $\alpha \in (0,1)$ there exists a family of $n$ qubit Clifford circuits of depth $d=(\log n)^{1/\alpha}$ and size $d^2$ that requires $2^{\Omega(d)}$ $\Qstat$ queries to learn the state to error $\leq 0.05$ in trace~distance.
\end{theorem}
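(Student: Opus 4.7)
The plan is a padding argument: take a $d$-qubit concept class preparable by a shallow Clifford circuit that requires $2^{\Omega(d)}$ $\Qstat$ queries to learn, and pad with $n-d$ ancilla qubits in $\ket{0}$ to obtain the desired $n$-qubit family, with $d = (\log n)^{1/\alpha}$.

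First, I would isolate a Clifford-preparable hard class on $d$ qubits. The function states $\ket{\psi_A}$ of Theorem~\ref{thm:mainsep} are not stabilizer states in general (e.g., for $n=2$ and $A=\big(\begin{smallmatrix}0 & 1\\ 0 & 0\end{smallmatrix}\big)$, the reduced state on the function qubit has non-dyadic eigenvalues $3/4$ and $1/4$), so I would instead use the degree-$2$ phase states $\Cc_d := \{\ket{\phi_A} = 2^{-d/2}\sum_x (-1)^{x^\top A x}\ket{x} : A \in \mathbb{F}_2^{d \times d}\}$. Each $\ket{\phi_A}$ is a stabilizer state prepared from $\ket{0}^{\otimes d}$ by $H^{\otimes d}$, followed by $\mathrm{CZ}_{ij}$ for each off-diagonal entry with $(A+A^\top)_{ij}=1$ and a $Z$ on qubit $i$ for each $A_{ii}=1$; scheduling the disjoint two-qubit gates in parallel yields depth $O(d)$ and size $O(d^2)$. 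Using $\E_A[\phi_A]=\id/2^d$ and $\E_A[\phi_A^{\otimes 2}]$ from Fact~\ref{fact:phasestatedecomp}, the same computation as Case $i=j=1$ inside the proof of Theorem~\ref{thm:mainsep} gives $\Var_A(\Tr[M\phi_A]) \le 2/2^d$ for every observable with $\|M\|\le 1$. Since $\TRD(\phi_A,\id/2^d)=1-2^{-d}$ is large, Lemma~\ref{Lemma:LearningToDeciding} and the variance bound in Theorem~\ref{thm:mainlowerboundonqsq} together imply that learning $\Cc_d$ to trace distance $0.05$ requires $2^{\Omega(d)}$ $\Qstat$ queries at tolerance $\tau = 1/\poly(d)$.

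Second, I would pad: define $\widetilde{\phi}_A := \ket{\phi_A}\otimes \ket{0}^{\otimes(n-d)}$, prepared by the same circuit now viewed as acting only on the first $d$ qubits of $\ket{0}^{\otimes n}$ (still depth $O(d)$, size $O(d^2)$); setting $d=(\log n)^{1/\alpha}$ matches the claimed parameters. To show padding preserves hardness, suppose $\mathcal{A}$ is a $\QSQ$ algorithm that learns $\{\widetilde{\phi}_A\}_A$ to trace distance $0.05$ using $T$ queries of tolerance $\tau$. I would simulate $\mathcal{A}$ on the $d$-qubit problem by replacing each $\mathcal{A}$-query $M$ with the $d$-qubit observable $M' := (\id_d \otimes \bra{0}^{\otimes(n-d)})\, M\, (\id_d \otimes \ket{0}^{\otimes(n-d)})$; note $\|M'\|\le 1$ and $\Tr(M'\phi_A) = \Tr(M\widetilde{\phi}_A)$, so the $d$-qubit $\Qstat(\tau)$ oracle's answer for $M'$ is a valid response to $\mathcal{A}$'s query $M$. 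When $\mathcal{A}$ outputs a hypothesis $\pi$ on $n$ qubits, output $\Tr_{n-d}(\pi)$; by monotonicity of trace distance under partial trace, $\TRD(\Tr_{n-d}(\pi),\phi_A)\le \TRD(\pi,\widetilde{\phi}_A)\le 0.05$, and the success probability is preserved. Hence $T\ge 2^{\Omega(d)} = 2^{\Omega((\log n)^{1/\alpha})}$, as claimed.

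The only real subtlety is the first step: selecting a \emph{Clifford-preparable} hard subclass, since Theorem~\ref{thm:mainsep} as stated concerns the non-stabilizer function states $\ket{\psi_A}$. Once one observes that the variance argument in that proof carries over essentially unchanged to the stabilizer phase-state class $\ket{\phi_A}$ (indeed, it is literally the $i=j=1$ block of the same calculation), the padding and the per-query simulation are routine.
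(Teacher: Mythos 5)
Your proposal follows the same padding-plus-query-compression strategy as the paper: both replace each $n$-qubit query $M$ by its $(0,0)$-block $(\id\otimes\bra{0}^{\otimes(n-d)})M(\id\otimes\ket{0}^{\otimes(n-d)})$ and transfer the hardness of a smaller instance. The substantive difference is your choice of base class, and it is a genuine improvement. The paper pads the function states $\ket{\psi_A}=2^{-n/2}\sum_x\ket{x,x^\top Ax}$ from Theorem~\ref{thm:mainsep}, but, as you observe, these are not stabilizer states (your $2$-qubit example with reduced spectrum $\{3/4,1/4\}$ is decisive, since stabilizer reduced states have flat dyadic spectra), and preparing $\ket{x,x^\top Ax}$ requires Toffoli-type gates; so the paper's proof as written does not actually deliver the ``Clifford circuits'' claimed in the theorem statement. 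Your switch to the phase states $\ket{\phi_A}$ repairs this: they are prepared by $H^{\otimes d}$ followed by $\mathrm{CZ}$ and $Z$ gates in depth $O(d)$ and size $O(d^2)$, and the needed variance bound $\Var_A(\Tr[M\phi_A])\le 2/2^d$ is exactly the $i=j=1$ block of the paper's own calculation (via $\E_A[\phi_A]=\id/2^d$ and the fourth-moment formula in Fact~\ref{fact:phasestatedecomp}), so the $2^{\Omega(d)}$ lower bound at tolerance $1/\poly(d)$ goes through via Lemma~\ref{Lemma:LearningToDeciding} and Theorem~\ref{thm:mainlowerboundonqsq}. Your handling of the hypothesis by taking the partial trace over the ancillas and invoking monotonicity of trace distance is also cleaner than the paper's ``without loss of generality the output has the form $\ket{\varphi}\otimes\ket{0}^{\otimes k(n)}$.'' The parameter bookkeeping ($d=(\log n)^{1/\alpha}$, bound $2^{\Omega((\log n)^{1/\alpha})}$) matches the paper's. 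In short: same architecture, but your instantiation is the one that actually proves the stated theorem.
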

\begin{proof}
The idea is to ``pad" a family of circuits with auxilliary qubits. In the previous section, from Theorem~\ref{thm:mainsep} we saw that the set of example states $\{\ket{\psi_A}=\frac{1}{\sqrt{2^n}}\sum_x \ket{x,x^\top A x}\}_{A}$, is hard to learn to trace distance $0.05$. Instead of the example state $\ket{\psi_A}$ now instead consider the ``padded state" $\ket{\psi_A}\otimes \ket{0}^{k(n)}$. Say a $\QSQ$ algorithm learns these padded states with the set of $\Qstat$ queries given by $\{M_i\}_i$ (which are random variables). Let us decompose each $M_i$ as $M_i = \sum_{x,y\in \01^{k(n)}}M_i^{x,y}\otimes \ketbra{x}{y}$, where $\Vert M_i^{x,y}\Vert \leq 1$ and $M_i^{x,x}$ is Hermitian. Since the auxiliary qubits are fixed, it is clear that 
$$
\Tr\left[M \cdot \ketbra{\phi_A}{\phi_A} \otimes \ketbra{0}{0}^{\otimes k(n)}\right] = \Tr\left[M_i^{0,0} \phi_A\right].
$$
Furthermore, we can assume without loss of generality that the algorithm always outputs a state of the form $\pi \otimes (\ket{0}\bra{0})^{\otimes k(n)}$ (as otherwise we could improve $\A$ by requiring it to do so). Thus, a $\QSQ$ algorithm for the padded states up to trace distance $0.05$ implies a $\QSQ$ algorithm with queries $\left\{M_i^{0,0}\right\}_i$ for learning $\Cc =\{\ket{\psi_A}\}_A$ up to trace distance $0.05$. Say that this algorithm uses at most $t$ $\Qstat$ queries. Then Theorem~\ref{thm:mainsep} implies that $t \geq 2^{\Omega(n)}$. The state is now composed of $m = k(n)+n$ qubits. Pick $k=2^{n^\alpha}$ for some $\alpha<1$, so $m=\Theta(k(n))$ and $n = (\log k)^{1/\alpha}$. Then we have that $t \geq 2^{\Omega(\log m)^{1/\alpha})}$. To conclude the theorem, note that $\ket{\psi_A}$ has a circuit of size $O(n^2)$ and depth $O(n)$. Thus, the padded states can be prepared with circuits of size $O(\log m)^{2/\alpha}$ and depth $O(\log m)^{1/\alpha}$.
\end{proof}

\section{New upper and lower bounds on $\QSQ$ learning states}
\label{sec:examples}
In this section we first give a couple of classes of states which can be learned in the $\QSQ$ framework before discussing lower bounds for other class of states.
\subsection{New upper bounds}
We first prove that the class of functions that are $k$-Fourier-sparse Boolean functions on $n$ bits, i.e.,
$$
\Cc_1=\{f:\01^n\rightarrow \01: |\textsf{supp}(\widehat{f})|=k\}
$$
 can be learned in time $\poly(n,k)$ in the $\QSQ$ model. This generalizes the results in~\cite{arunachalam2021two,arunachalam2020quantum}, which showed that showed parities and $O(\log n)$-juntas (which are a subset of Fourier-sparse functions)  are $\poly(n)$-time learnable.\footnote{We remark that the same proof also shows that $k$-term DNF formulas are learnable: for every $g\in \Cc_1$, there exists $S$ s.t. $|\widehat{g}(S)|\geq 1/k$ and the proof of Theorem~\ref{thm:learningsparsefunctions} can identify such an $S$  using $\QSQ$ queries and then one can use the algorithm of Feldman~\cite{Feldman12:dnfwithoutboosting} for learning  the unknown DNF formulas.}  We  observe that the quantum coupon collector problem, i.e., learnability of
 $$
 \Cc_2=\{S\subseteq [n]: |S|=k\}
 $$
 considered in~\cite{arunachalam2020quantumcoupon} can be implemented in $\QSQ$. Finally, we also observe that one can learn codeword states defined in~\cite{arunachalam2018optimal}: consider an $[n,k,d]_2$ linear code $\{Mx: x\in \01^k\}$ where $G \in \mathbb{F}_2^{n\times k}$ is a rank-$k$ generator matrix of the code, $k=\Omega(n)$, and distinct codewords have Hamming distance at least $d$, then define the concept class 
 $$
 \Cc_3=\{f_x(i)=(Gx)_i:x\in \01^k\},
 $$
 where $G$ is known the learning algorithm. Below we show we can learn $\Cc_3$ in the $\QSQ$ model. Prior learning protocols~\cite{arunachalam2018optimal,arunachalam2020quantumcoupon,arunachalam2021two,arunachalam2020quantum}  showed that these concept classes are learnable with quantum examples (a stronger model than $\QSQ$) whereas here we show they are learnable in the weaker $\QSQ$ framework. Before we prove this, we will use the following~lemmas.
 	\begin{lemma}{\cite[Theorem~12]{gopalan:fouriersparsity}}
		\label{lemma:granularityofsparse}
		Let $k\geq 2$. The Fourier coefficients of a $k$-Fourier-sparse Boolean function $f:\01^n\rightarrow \pmset{}$ are integer multiples of $2^{1-\lfloor \log k\rfloor}$.
	\end{lemma}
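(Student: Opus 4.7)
The plan is to prove the granularity in three steps: first reduce to a $\pm 1$-valued function on a low-dimensional Boolean cube, then use elementary granularity on that cube, and finally bound the dimension by $\lfloor \log k\rfloor$—which is the main obstacle.

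For the reduction, set $L := \mathrm{span}_{\mathbb{F}_2}(\mathrm{supp}(\widehat{f})) \subseteq \mathbb{F}_2^n$ and $d := \dim L$. Since $\widehat{f}$ is supported on $L$, the Fourier expansion $f(x) = \sum_{S \in L} \widehat{f}(S) \chi_S(x)$ depends only on $x$ modulo $L^\perp$, and so $f$ descends to a function $\widetilde{f} : \01^d \to \{-1,1\}$ (after identifying $\mathbb{F}_2^n / L^\perp$ with $\01^d$ via a choice of basis). Under the natural bijection between $L$ and the dual of $\01^d$, the $2^d$ Fourier coefficients of $\widetilde{f}$ agree with the values $\widehat{f}(S)$ for $S \in L$, with the remaining coefficients of $f$ being zero.

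For the granularity step, note that for any $g : \01^d \to \{-1,1\}$ and any $T$, $\widehat{g}(T) = 2^{-d} \sum_{y \in \01^d} g(y)(-1)^{\langle T, y\rangle}$ is $2^{-d}$ times an integer. Moreover that integer is a sum of $2^d$ terms each congruent to $1 \pmod 2$, so it has the same parity as $2^d$; for $d \geq 1$ it is therefore even, giving $\widehat{g}(T) \in 2^{1-d}\mathbb{Z}$. Transferring back, $\widehat{f}(S) \in 2^{1-d}\mathbb{Z}$ for every $S$ (the case $d = 0$ being the trivial one where $f$ is a constant $\pm 1$).

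The crux is the bound $d \leq \lfloor \log k\rfloor$. A naive span argument only gives $d \leq k$, which is far too weak. The essential extra structure comes from the identity $f^2 = 1$, equivalent to the convolution identity $\widehat{f} * \widehat{f} = \delta_0$: for every $0 \neq U \in L$, $\sum_S \widehat{f}(S) \widehat{f}(S \oplus U) = 0$. This places rigid additive constraints on $\mathrm{supp}(\widehat{f})$; for instance, if some nonzero $U$ has a unique pair $(S, S\oplus U)$ within the support contributing to the sum, one immediately obtains $\widehat{f}(S)\widehat{f}(S \oplus U) = 0$, a contradiction, so every such $U$ must admit many representations as a difference inside the support. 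My plan is to exploit these constraints via induction on $k$: one argues that either the support is already contained in a subspace of dimension $\lfloor \log k\rfloor$, or one can strip off a structured "layer" of the support and recurse on a function of strictly smaller sparsity, following the combinatorial scheme of Gopalan--Kalai--Klivans. Combined with the first two steps, this yields $\widehat{f}(S) \in 2^{1-\lfloor \log k\rfloor}\mathbb{Z}$ for every $S$, completing the proof.
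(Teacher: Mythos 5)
The paper does not actually prove this lemma---it imports it as a black box from Gopalan et al.\ \cite{gopalan:fouriersparsity}---so the only thing to check is whether your argument is sound on its own terms. It is not: your Step 3 is false. The dimension $d$ of $\mathrm{span}_{\mathbb{F}_2}(\mathrm{supp}(\widehat{f}))$ is \emph{not} bounded by $\lfloor \log k\rfloor$. The smallest counterexample is the $3$-bit multiplexer $f(x,y_0,y_1)=y_x$, whose $\pm1$ Fourier expansion is $\tfrac12 Y_0+\tfrac12 Y_1+\tfrac12 XY_0-\tfrac12 XY_1$: here $k=4$, so $\lfloor\log k\rfloor=2$, but the support $\{\{y_0\},\{y_1\},\{x,y_0\},\{x,y_1\}\}$ spans a $3$-dimensional space. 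More dramatically, the addressing function on $m+2^m$ bits has sparsity $k=4^m$ but Fourier dimension $m+2^m=\Theta(\sqrt{k})$; in general the Fourier dimension of a $k$-sparse Boolean function can be polynomially large in $k$, not logarithmic. Consequently your Steps 1--2, which are correct, only certify granularity $2^{1-d}$, and this can be exponentially weaker than the claimed $2^{1-\lfloor\log k\rfloor}$ (for the addressing function, $2^{1-d}=2^{1-m-2^m}$ versus the true granularity $2^{-m}$). The convolution identity $\widehat f * \widehat f=\delta_0$ does force every realized difference $U$ to have at least two representations inside the support, but that constraint is nowhere near strong enough to collapse the dimension to $\log k$, and no induction can rescue a false intermediate statement.

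The actual proof in \cite{gopalan:fouriersparsity} does not proceed by bounding the Fourier dimension at all. It argues by induction directly on the sparsity $k$: one restricts $f$ to the two affine subspaces $\{\chi_\gamma=\pm1\}$ for a suitable nonzero $\gamma$ in the difference set of the support, notes that the Fourier coefficients of $f$ are half-integer combinations of those of the two restrictions (namely $\widehat f(S)=\tfrac12(\widehat{f_{+}}(\bar S)+\widehat{f_{-}}(\bar S))$), and shows the restrictions' sparsities drop by roughly a factor of two, so that each of the $\lfloor\log k\rfloor$ levels of recursion costs one factor of $2$ in granularity. If you want to keep your reduction-to-the-quotient framing, you must replace Step 3 by an argument of this restriction/halving type; as written, the proof does not establish the lemma.
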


	\begin{lemma}{\cite[Theorem~4.4]{arunachalam2020quantum}}
		\label{thm:goldreichlevin} Let $f:\pmset{n}\rightarrow \pmset{}$,  $\tau \in (0,1]$. There exists a $\poly(n,1/\tau,\ell)$-time quantum statistical learning algorithm that with high probability outputs $U=\{T_1,\ldots,T_\ell\}\subseteq [n]$ such that: (i) if $|\widehat{f}(T)|\geq \tau$, then $T\in U$; and (ii) if $T\in U$, then $|\widehat{f}(T)|\geq \tau/2$.	
	\end{lemma}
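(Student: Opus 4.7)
The idea is to adapt the classical Goldreich--Levin algorithm to the $\QSQ$ setting. Classically, one performs a binary search over prefixes of subsets $T \subseteq [n]$, at each step estimating the Fourier mass on characters consistent with the current prefix and pruning those with negligible mass. The quantum analog exploits Fourier sampling: on the phase state $\ket{\phi_f} = 2^{-n/2}\sum_x f(x)\ket{x}$, applying $\Had^{\otimes n}$ and measuring outputs $T$ with probability $\widehat{f}(T)^2$. For a prefix $S \in \01^i$, the mass of all $T$'s consistent with $S$ equals $\bra{\phi_f}M_S\ket{\phi_f}$, where $M_S := \Had^{\otimes n}(\ketbra{S}{S} \otimes \id_{n-i})\Had^{\otimes n}$ has norm at most $1$ and is implementable by a Clifford circuit followed by a computational-basis projection.

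First I would bridge this primitive to the QSQ-accessible example $\ket{\psi_f} = 2^{-n/2}\sum_x \ket{x,f(x)}$: applying Hadamard to the label qubit gives
$$
(\id \otimes \Had)\ket{\psi_f} \;=\; \tfrac{1}{\sqrt 2}\ket{u}\ket{0} + \tfrac{1}{\sqrt 2}\ket{\phi_f}\ket{1},
$$
so the observable $N_S := (\id \otimes \Had)(M_S \otimes \ketbra{1}{1})(\id \otimes \Had)$ satisfies $\bra{\psi_f}N_S\ket{\psi_f} = \tfrac{1}{2}\bra{\phi_f}M_S\ket{\phi_f}$, has norm at most $1$, and is efficiently implementable. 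A single $\Qstat$ query to $N_S$ therefore yields an additive approximation of the desired partial Fourier sum.

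With this primitive, the algorithm would proceed as follows: initialize $L_0 = \{\emptyset\}$; at each level $i = 1,\dots,n$ extend every $S \in L_{i-1}$ to $S0,S1$, use a $\Qstat(\tau^2/8)$ query to estimate $\tfrac{1}{2}\sum_{T: T_{[i]}=S'}\widehat{f}(T)^2$, and retain $S'$ in $L_i$ iff the estimate exceeds $\tau^2/4$. By Parseval, $\sum_{S}\sum_{T:T_{[i]}=S}\widehat{f}(T)^2 = 1$, so at most $4/\tau^2$ prefixes at any level can have partial sum above $\tau^2/4$, giving $|L_i| \leq 4/\tau^2$. Any $T$ with $|\widehat{f}(T)| \geq \tau$ has every prefix survive (partial sum $\geq \tau^2$, hence estimate $\geq 3\tau^2/8$), and every surviving prefix has partial sum $\geq \tau^2/8$. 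Finally, for each $T \in L_n$ I would verify by a single $\Qstat(\tau/4)$ query to the observable $Z^T \otimes Z$ on $\ket{\psi_f}$, whose expectation equals $\widehat{f}(T)$, and output $T$ iff the estimate exceeds $3\tau/4$ in absolute value; the triangle inequality then gives both guarantees (i) and (ii). The total number of queries is $O(n/\tau^2) = \poly(n,1/\tau)$.

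The main obstacle is bookkeeping: calibrating thresholds and tolerances so that both directions hold simultaneously under the additive noise of $\Qstat$ queries, and correctly accounting for the factor of $1/2$ introduced by the $\ket{\psi_f} \to \ket{\phi_f}$ bridge. Once this is handled, efficiency follows because every observable used is built from $\Had^{\otimes n}$, single-qubit Paulis, and computational-basis projections, and the total count of distinct observables is $O(n/\tau^2)$.
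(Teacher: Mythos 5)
Your proposal is correct, and it reconstructs essentially the standard argument: the paper itself does not prove this lemma but cites it from Arunachalam--Grilo--Sundaram (Theorem~4.4 of \cite{arunachalam2020quantum}), whose proof is exactly this QSQ implementation of Kushilevitz--Mansour via the Hadamard-conjugated prefix projectors $N_S$ applied to the example state. Your threshold bookkeeping ($\Qstat(\tau^2/8)$ for the prefix search, $\Qstat(\tau/4)$ for the final verification, and the Parseval bound $|L_i|=O(1/\tau^2)$) checks out and yields both guarantees (i) and (ii) with $O(n/\tau^2)$ queries.
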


\begin{theorem}
\label{thm:learningsparsefunctions}
The concept classes  $\Cc_1,\Cc_2,\Cc_3$ defined above can be learned in the $\QSQ$ model.
\end{theorem}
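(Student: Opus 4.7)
The plan is to handle each of the three classes separately, in each case designing a small family of $\Qstat$ observables whose expectation values reveal enough structural information about the unknown concept to reconstruct it. In all three cases the observables are simple tensor products of rank-$1$ computational-basis projectors with either the identity or the Pauli $Z$, so the norm bound $\|M\|\leq 1$ and efficient implementability are automatic; the crux in each case is to identify the right set of observables and a small enough tolerance that still falls within the $\QSQ$ framework.

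For $\Cc_2$ I would query the $n$ observables $M_i=\ketbra{i}{i}$ for $i\in [n]$. Since $\Tr[M_i \ketbra{\psi_S}{\psi_S}]$ equals $1/k$ if $i\in S$ and $0$ otherwise, tolerance $1/(2k)=1/\poly(n)$ suffices to decide $i\in S$, and $n$ such queries recover $S$ exactly. For $\Cc_3$, take the observables $M_j=\ketbra{j}{j}\otimes Z$ for $j\in [n]$; a direct calculation gives $\Tr[M_j \ketbra{\psi_{f_x}}{\psi_{f_x}}]=(-1)^{(Gx)_j}/n$, so with tolerance $1/(2n)$ the sign of each response determines the bit $(Gx)_j$. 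After $n$ queries we know the codeword $Gx$ exactly, and since $G$ has rank $k$ we solve for $x\in \01^k$ by Gaussian elimination in time $O(nk^2)$.

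For $\Cc_1$ the strategy is two-phase. Setting $g(x):=(-1)^{f(x)}$, I would first invoke the $\QSQ$ Goldreich--Levin procedure of Lemma~\ref{thm:goldreichlevin} with tolerance $\tau=1/k$ to obtain a set $U\subseteq \01^n$ containing every $T$ with $|\widehat{g}(T)|\geq 1/k$ and consisting only of $T$ with $|\widehat{g}(T)|\geq 1/(2k)$. By the granularity bound of Lemma~\ref{lemma:granularityofsparse}, every non-zero Fourier coefficient of a $k$-Fourier-sparse Boolean function is a multiple of $2^{1-\lfloor\log k\rfloor}\geq 2/k$, so $U=\mathrm{supp}(\widehat{g})$ and $|U|\leq k$. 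Next, for each $T\in U$ I would estimate $\widehat{g}(T)$ via the $\Qstat$ query $M_T=\chi_T\otimes Z$ with $\chi_T=\sum_x(-1)^{T\cdot x}\ketbra{x}{x}$; a direct computation gives $\Tr[M_T \ketbra{\psi_f}{\psi_f}]=\widehat{g}(T)$ and $\|M_T\|=1$. Taking tolerance $1/(2k)$ and rounding the estimate to the nearest multiple of $2^{1-\lfloor\log k\rfloor}$ recovers $\widehat{g}(T)$ exactly, and $f$ is then reconstructed from the Fourier expansion of $g$.

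The main obstacle is the $\Cc_1$ case, where the approximate identification of the Fourier support by the Goldreich--Levin routine has to be combined with the granularity bound so that $U$ equals the support exactly, and the coefficient-estimation step has to recover the \emph{exact} integer multiples rather than just an approximation; this is what rules out spurious elements of $U$ and lets us obtain $f$ itself from a Fourier expansion with exact coefficients. The $\QSQ$ requirements are then easy to verify: each observable above has norm at most $1$ and is efficiently implementable, and each tolerance is $1/\poly(n)$ whenever $k=\poly(n)$, which holds trivially for $\Cc_2$ and $\Cc_3$ (since $k\leq n$) and is the usual assumption on the sparsity for $\Cc_1$.
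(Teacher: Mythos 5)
Your proposal is correct and follows essentially the same route as the paper: for $\Cc_1$ it combines the granularity bound (Lemma~\ref{lemma:granularityofsparse}) with the $\QSQ$ Goldreich--Levin routine (Lemma~\ref{thm:goldreichlevin}) and then estimates the surviving coefficients with norm-one diagonal observables, and for $\Cc_2,\Cc_3$ it uses rank-one projector queries with tolerance $\Theta(1/k)$ resp.\ $\Theta(1/n)$ exactly as the paper does (modulo querying each index directly instead of binary search for $\Cc_2$, and putting $Z$ on the label register instead of $\ketbra{0}{0}$ for $\Cc_3$). The only substantive difference is in the endgame for $\Cc_1$: you round the estimated coefficients to the granularity lattice and recover $f$ exactly, whereas the paper only estimates them to error $\varepsilon/k$ and outputs the sign of the resulting polynomial to get an $\varepsilon$-approximate hypothesis; your variant is a mild strengthening and is valid since the tolerance $1/(2k)$ is below half the lattice spacing $2^{-\lfloor\log k\rfloor}\geq 1/k$.
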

\begin{proof} 
We first give a learning algorithm for $\Cc_1$. For every $f\in \Cc_1$, observe that it's Fourier coefficients satisfy $|\widehat{f}(S)|\geq 1/k$ by Lemma~\ref{lemma:granularityofsparse}. We can now use Lemma~\ref{thm:goldreichlevin} to collect \emph{all} the non-zero  Fourier coefficients in time $\poly(n,1/\tau,k)$ in the $\QSQ$ model. Call these non-zero coefficients $S_1,\ldots,S_k$. Next, we learn all these Fourier coefficients up to error $\varepsilon/k$ using $\Stat$ queries: for $i\in [k]$, let $\phi(x,b)= b\cdot (-1)^{S_i\cdot x}$ for all $x\in \01^n,b\in \01$, hence $\Exp_x [\phi(x,f(x))]=\Exp_x \big[f(x)\cdot (-1)^{S_i\cdot x}\big]=\widehat{f}(S_i)$. Overall this takes time $O(k)$. Once we obtain all these approximations $\{\alpha_i\}_{i\in [k]}$, we output the function $
		g(x)=\textsf{sign} \Big(\sum_{i\in [k]} \alpha_i \cdot  \chi_{S_i}(x)\Big)$ for every  $x\in \01^{n}.$
Using the same reasoning as in~\cite[Eq.~(7)]{arunachalam2020quantum} it is not hard to see that $g$ is $\varepsilon$-close to $f$ (i.e., $\Pr_x [g(x)=f(x)]\geq 1-\varepsilon$).

We next give a learning algorithm for $\Cc_2$. Let $S \subseteq [n]$ of size $k$. Given copies of $\frac{1}{\sqrt{k}} \sum_{i \in S} \ket{i}$, learn $S$. We now show how to learn $S$ in $\QSQ$  using $k \log n$ $\Qstat$ queries. Let $M_1=\sum_{i=1}^{n/2} \ketbra{i}{i}$. This satisfies $\|M_1\|\leq 1$ and $M_1$ can be implemented using a $\poly(n)$-sized circuit. Observe that 
 $$
 \langle \psi|M_1|\psi\rangle=\frac{1}{k}\sum_{q,q'\in S}\sum_{i\in [n/2]}[q=i=q']=\frac{|[n/2]\cap S|}{k} 
 $$
 which is at least $1/k$ if and only if there is an $i \in [n/2]\cap S$. So if we do a $\Qstat$ query with $M_1$ and tolerance $1/(2k)$, the learning algorithm learns if there is an $i \in [n/2]$ such that $i \in S$. Repeat this using a binary search and we will eventually find one element in $S$ using $O(\log n)$ $\Qstat$ queries. Repeat this to find all the elements in $S$, so the overall complexity is $O(k \log n)$. 

    We next give a learning algorithm for $\Cc_3$.
 Consider the $\QSQ$ queries $M_j=\ketbra{e_j}{e_j} \otimes \ketbra{0}{0}$ and $\tau=1/(2n)$.  Then observe that 
 $$
 \langle \psi_x| M_j| \psi_x\rangle = [e_j^\top M x=0]/n,
 $$ which equals $1/n$ if $e_j^\top Mx=0$ and $0$ otherwise, so with tolerance $1/(2n)$, we can learn which is the case. Since $G$ is the generator matrix of a good code, i.e., $G$ has rank $k$, there are $k$ linearly independent rows in $G\in \mathbb{F}_2^{n\times k}$ (say they are $G^{i_1},\ldots,G^{i_k}$). The learning algorithm can perform these $\Qstat$ measurements for all $M_{i_1},\ldots,M_{i_k}$ in order to learn $G^{i_1} x,\ldots,G^{i_k} x$. Since $G^i$s are linearly independent, these $k$ linearily independent constraints on $x$ suffice to learn $x$.
\end{proof}

We next observe that the set of trivial states, i.e., states $\ket{\psi}=C\ket{0^n}$ where $C$ is a constant-depth $n$-qubit circuits, can be learned in polynomial time in the $\QSQ$ model. An open question of this work, and also the works of~\cite{hinsche2022single,sweke23}, is if we can learn the distribution $P_C=\{\langle x|\psi\rangle^2\}_x$ using \emph{classical} $\SQ$ queries. The theorem below shows that if we had direct access to $\ket{\psi}$, one can learn the state and the corresponding distribution $P_C$, using $\QSQ$ queries. In the next section we show that once the depth $d=\omega(\log n)$, these states are hard for $\QSQ$ queries as well. 

\begin{theorem}\label{thm:trivial_states}
    The class of $n$-qubit trivial states  can be learned up to trace distance $\leq \varepsilon$ using $\poly(n,1/{\varepsilon})$ $\Qstat$ queries with tolerance~$\poly(\varepsilon/n)$.
\end{theorem}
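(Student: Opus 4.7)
The plan is to exploit the lightcone structure of constant-depth circuits together with local tomography via $\Qstat$ queries, followed by an ``unentangling'' reconstruction step in the spirit of the shallow-circuit learning algorithms of Huang--Tong--Fang--Su~\cite{HTFS22}. Concretely, if $\ket{\psi} = C\ket{0^n}$ where $C$ has depth $d = O(1)$ consisting of gates of bounded arity, then for any contiguous block $B$ of qubits the backward lightcone $L(B)$ through $C$ has size $|B| + O(d) = O(|B|)$, and the reduced state $\rho_B = \Tr_{B^c}\ketbra{\psi}{\psi}$ is a function only of qubits in $L(B)$ and the local portion of $C$.

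First, I would use $\Qstat$ queries to do local Pauli tomography on every window $B$ of some constant size $k$ (chosen large enough relative to $d$, say $k = O(d)$). For each of the $O(n)$ such windows and each of the $4^k = O(1)$ Pauli strings $P$ supported on $B$, a single $\Qstat$ query with observable $P \otimes I_{B^c}$ and tolerance $\tau$ returns an estimate of $\langle \psi| P \otimes I | \psi \rangle$ up to $\tau$. Taking $\tau = \varepsilon/\poly(n)$ small enough, standard Pauli tomography assembles an estimate $\widehat{\rho}_B$ of each reduced state $\rho_B$ with $\TRD(\widehat{\rho}_B,\rho_B) \leq \delta = \varepsilon/\poly(n)$. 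The total query count is $O(n \cdot 4^k) = \poly(n)$.

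Next, I would use these local marginals to reconstruct $\ket{\psi}$ qubit-by-qubit via local inversion. Sweep through qubits $i = 1, \ldots, n$; at each step the learned marginal $\widehat{\rho}_{L(i)}$ on the lightcone of qubit $i$ determines (up to $\delta$ in trace distance) a local unitary $V_i$, supported on $L(i)$, that disentangles qubit $i$ from the rest into the state $\ket{0}$. Composing all $V_i$ into $V = V_n \cdots V_1$ produces a constant-depth circuit with $V\ket{\psi} \approx \ket{0^n}$, so $\sigma := V^\dagger \ket{0^n}$ is our hypothesis. Because each $V_i$ is computed from a marginal accurate to $\delta$, and the disentangling errors at different qubits combine additively via the triangle inequality over the $n$ layers, the total trace distance error satisfies $\TRD(\sigma, \ketbra{\psi}{\psi}) = O(n \delta)$; choosing $\delta = \varepsilon/(cn)$ for a suitable constant $c$ yields final accuracy $\varepsilon$. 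The $\Qstat$ tolerance required is therefore $\tau = \poly(\varepsilon/n)$, giving the claimed complexity.

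The main obstacle is the local-inversion step: showing that an approximate local marginal $\widehat{\rho}_{L(i)}$ yields a local unitary $V_i$ whose action on the true $\ket{\psi}$ disentangles qubit $i$ with only $O(\delta)$ loss in trace distance, and that the errors across all $n$ qubits remain additive rather than compounding multiplicatively. This is a Lipschitz-type continuity statement for the disentangling map, and it follows from the fact that, by the lightcone property, the inversions $V_i$ and $V_j$ on non-overlapping lightcones commute exactly, while overlapping lightcones involve only a constant number of interacting qubits so that local errors propagate within a constant-size region. Once this continuity is in hand, the query count and tolerance bounds follow immediately from the local-tomography step.
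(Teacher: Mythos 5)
Your local-tomography step is essentially identical to the paper's: for each constant-size subset of qubits you estimate all Pauli expectations with $\Qstat$ queries of tolerance $\poly(\varepsilon/n)$ and assemble the reduced density matrices, at a cost of $\poly(n)$ queries. Where you diverge is the reconstruction. The paper does not attempt an explicit circuit reconstruction at all: it invokes a black-box result (\cite[Theorem~4]{yu2023learning}) asserting that knowing every $2^d$-body reduced density matrix of a trivial state to precision $\varepsilon^2/(4n)$ in trace distance already determines the global state to trace distance $\varepsilon$, so the hypothesis is simply any state consistent with the learned marginals. Your route is more constructive, but as written it has a genuine gap.

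The gap is in the sewing step, and it is present already in the exact, zero-error case, so it is not a matter of how errors accumulate. If $V_i$ is a local inversion computed from the marginals of the \emph{original} state $\ket{\psi}$, then after applying $V_1$ the state on $L(1)$ has changed, and $V_2$ need no longer disentangle qubit $2$ of the new state; the product $V_n\cdots V_1\ket{\psi}$ can be far from $\ket{0^n}$. Concretely, take $n=2$ and $\ket{\psi}=\tfrac{1}{\sqrt2}(\ket{00}+\ket{11})$, with exact local inversions $V_1=(H\otimes \id)\,\mathrm{CNOT}_{1\to 2}$ and $V_2=(\id\otimes H)\,\mathrm{CNOT}_{2\to 1}$, each of which maps $\ket{\psi}$ to $\ket{00}$. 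Then $V_2V_1\ket{\psi}=V_2\ket{00}=\ket{0}\otimes\ket{+}$, and your hypothesis $\sigma=V_1^\dagger V_2^\dagger\ket{00}$ satisfies $|\braket{\sigma}{\psi}|^2=1/2$, i.e.\ $\TRD(\sigma,\psi)=1/\sqrt{2}$ --- a constant error produced by \emph{exact} local inversions. The remark that non-overlapping lightcones commute does not rescue this: in a depth-$d$ circuit every qubit's lightcone overlaps those of its $O(d)$ neighbours, each such overlap can contribute a constant (not $O(\delta)$) error as above, and there are $\Theta(n)$ of them. This is exactly why the shallow-circuit learning literature you allude to (\cite{HTFS22} and its successors) needs a genuinely nontrivial sewing construction --- e.g.\ products of operators of the form $V_i^\dagger(\ketbra{0}{0}_i\otimes\id)V_i$, each of which fixes $\ket{\psi}$, combined with ancillas --- rather than naive sequential composition. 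To repair the proof you would either have to import that machinery together with a robustness statement for $\delta$-perturbed marginals, or replace the reconstruction by a marginals-determine-the-state theorem, which is what the paper does.
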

\begin{proof}
    Say that the circuit depth is $d$. Using  \cite[Theorem~4]{yu2023learning} it is sufficient to reconstruct all $D:=2^d$-body reduced density matrices up to precision $\frac{\varepsilon^2}{4n}$ with respect to trace distance. Thus, it is sufficient to show that such a tomography can be accomplished with $\Qstat$ queries. To do so, simply query all $4^{D}-1$ non-identity Pauli strings acting on a party $s$ of size $D$ and reconstruct the state as $\hat{\rho}_s = \frac{1}{2^D}(\id + \sum_x \alpha_x P_x)$, where $P_x$ is a non-identity Pauli string and $\alpha_x$ is the response upon querying $P_x$. The schatten 2-norm of the difference between the resulting state and the true reduced density matrix $\rho_s$ must satisfy 
    \begin{align}
        \Vert \rho_s - \hat{\rho}_s \Vert_2^2 & = \Tr[(\rho_s - \hat{\rho}_s)^2]\\
        & = \frac{1}{4^D}\Tr\left[\left(\sum_x (\Tr[P_x\rho_s]-\alpha_x)P_x\right)^2\right]\\
        & = \frac{1}{2^D}\sum_x \left(\Tr[P_x\rho_s]-\alpha_x\right)^2\\
        & < \tau^2 \cdot 2^D,
    \end{align}
    where we have used that Pauli strings (including identity) satisfy $\Tr[P_xP_y]=\delta_{x,y}2^D$. In general, $\TRD(\rho_s,\hat{\rho}_s) \leq 2^{D/2-1}\Vert \rho_s - \hat{\rho}_s\Vert_{2}$. Thus, $\TRD(\rho_s,\hat{\rho}_s) < \tau\cdot 2^{D-1}$. Taking $\tau \leq \frac{\eps^2}{n\cdot 2^{D+1}} = O(\frac{\eps^2}{n})$ yields a tomography with the desired precision. There are $\binom{n}{D} = O(n^{D})$ such reduced density matrices. For each one, we require a constant number of $\Qstat$ queries, each requiring $O(D)=O(1)$ gates. Thus, the overall complexity is $O(n^{D}) \in \poly(n)$ for both query and time complexity.
\end{proof}

\subsection{Hardness of testing purity}
\begin{theorem}
Let $\A$ be an algorithm that upon a input of quantum state $\rho$, with high probability, estimates the purity of $\rho$ with error $< 1/4$ using $\Qstat(\tau)$ queries. Then $\A$ must make at least $2^{\Omega(\tau^22^n)}$ such queries.
\end{theorem}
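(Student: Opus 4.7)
The plan is to reduce purity testing to a many-vs-one statistical decision problem, bound the quantum statistical dimension of that problem via the variance method, and close it with Levy's lemma.

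\textbf{Step 1: Reduction to a decision problem.} Let $\Cc$ be the ensemble of $n$-qubit pure states equipped with the Haar measure, and let $\sigma = \id/2^n$. Then $\Tr[\psi^2]=1$ for every $\psi\in\Cc$ while $\Tr[\sigma^2]=1/2^n\leq 1/2$, so the purities differ by at least $1/2$. Hence an algorithm $\A$ that estimates $\Tr[\rho^2]$ to error $<1/4$ with high probability directly solves the quantum statistical decision problem of Definition~\ref{Dfn:statistical_decision_problem} for $(\Cc,\sigma)$ using the same number of $\Qstat(\tau)$ queries. It therefore suffices to prove $\QSD_\tau^{\delta}(\Cc,\sigma)\geq 2^{\Omega(\tau^2 2^n)}$. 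By the inequality $\QSD_\tau^\delta(\Cc,\sigma)\geq (1-2\delta)\QSDA_\tau(\Cc,\sigma)$ proved earlier, it is enough to prove the same lower bound on $\QSDA_\tau(\Cc,\sigma)$.

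\textbf{Step 2: Bound $\QSDA$ via a single-observable concentration estimate.} By Lemma~\ref{Lemma:RandomCover}, to lower bound $\QSDA_\tau(\Cc,\sigma)$ it suffices to upper bound
$$\max_{M:\,\|M\|\leq 1}\,\Pr_{\psi\sim\mathrm{Haar}}\bigl[\,|\Tr[M(\psi-\sigma)]|>\tau\,\bigr].$$
Fix such an $M$ and consider the real random variable $f(\psi)=\bra{\psi}M\ket{\psi}$. An easy calculation (using $\E_{\psi\sim\mathrm{Haar}}[\psi]=\id/2^n$) gives $\E[f(\psi)]=\Tr[M]/2^n=\Tr[M\sigma]$, so $|\Tr[M(\psi-\sigma)]|=|f(\psi)-\E[f(\psi)]|$. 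Moreover, for any two unit vectors $\ket{\psi},\ket{\phi}$,
$$|f(\psi)-f(\phi)|\leq |(\bra{\psi}-\bra{\phi})M\ket{\psi}|+|\bra{\phi}M(\ket{\psi}-\ket{\phi})|\leq 2\|M\|\cdot\|\ket{\psi}-\ket{\phi}\|_2\leq 2\|\ket{\psi}-\ket{\phi}\|_2,$$
so $f$ is $O(1)$-Lipschitz on the complex unit sphere in $\C^{2^n}$ (a real sphere of dimension $2^{n+1}-1$).

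\textbf{Step 3: Levy's lemma.} Applying Theorem~\ref{thm:levy} to $f$ with Lipschitz constant $k=O(1)$ and dimension $d=2^{n+1}$ yields
$$\Pr_{\psi\sim\mathrm{Haar}}\bigl[|f(\psi)-\E[f(\psi)]|>\tau\bigr]\leq 2\exp(-C\cdot 2^n\tau^2)$$
for a universal constant $C>0$. This bound is uniform in $M$ with $\|M\|\leq 1$, so Lemma~\ref{Lemma:RandomCover} gives $\QSDA_\tau(\Cc,\sigma)\geq \tfrac12 \exp(C\cdot 2^n\tau^2)=2^{\Omega(\tau^2 2^n)}$. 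Chaining with Step~1 proves the claim.

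\textbf{Main obstacle.} The essential content is the concentration estimate in Step~3, which is immediate from Levy's lemma once the $O(1)$-Lipschitz constant is noted. The only subtlety is that $\Cc$ is a continuous family under Haar measure rather than a finite uniform ensemble, but the minimax argument underlying Lemma~\ref{Lemma:RandomCover} carries over since the pure-state manifold and the set of bounded observables are compact. As in the discussion following Theorem~\ref{thm:mainlowerboundonqsq}, we may without loss of generality assume $\tau$ is bounded by a small absolute constant, since decreasing $\tau$ only makes the query problem harder.
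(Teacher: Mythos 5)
Your proposal is correct and follows essentially the same route as the paper: reduce purity estimation to the many-vs-one decision problem of Haar-random pure states versus $\id/2^n$, note that $f(\psi)=\Tr[M\psi]$ is $O(1)$-Lipschitz, and apply Levy's lemma uniformly over observables to lower bound $\QSDA_\tau$ and hence the query complexity. The only difference is presentational (you spell out the $\QSD\geq(1-2\delta)\QSDA$ chaining and the Lipschitz computation slightly more explicitly), not substantive.
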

\begin{proof}
    Suppose we have such an algorithm $\A$. Then we could solve the decision problem of $\Cc = \{U\ketbra{0}{0}U^\dagger \ \vert \ U\in\mathcal{U}(2^n)\}$ (all pure states) versus $\sigma = \frac{1}{2^n}\id$. We prove that this decision problem is hard using a concentration of measure argument similar to the variance method. Drawing pure states from the Haar measure on $\mathcal{U}(2^n)$ yields $\E[U\ketbra{0}{0}U^\dagger] = \frac{1}{2^n}\id$. Upon querying an observable $M$, consider the adversarial response of $\frac{1}{2^n}\Tr[M]$. By Levy's Lemma(~\ref{thm:levy}), most Haar random states cannot deviate much from this average. For our purposes, we are concerned with functions of the form $f(\ket{\psi}) = \Tr[M\ketbra{\psi}{\psi}]$ where $\Vert M \Vert \leq 1$. We immediately observe that such $f$'s have Lipschitz constant $2$ \cite{Popescu_Short_Winter_2006}. By Levy's Lemma~\ref{thm:levy} we have that
    \begin{align}
        \Pr_{U}[\vert \Tr[MU\ketbra{0}{0}U^\dagger] - \frac{1}{2^n}\Tr[M] \vert > \tau] \leq 2\exp \left(-\frac{2^{n+1}\tau^2}{36\pi^3}\right)
    \end{align}
    To conclude, we use Levy's lemma to lower bound $\QSDA_\tau$ (in a manner similar to that of the variance lower bound). Recall that $\QSDA_\tau(\Cc, \sigma)$ is the smallest integer $d$ such that there exists a distribution $\eta$ over $\Qstat$ queries $M$ such that $\forall \rho \in \Cc: \Pr_{M \sim \eta}[\vert \Tr[M(\rho - \sigma)]\vert > \tau] \geq 1/d$. From this definition we have that
\begin{align}
        \frac{1}{d} \leq \Pr_{M\sim \eta}\Pr_{\rho \sim \mu}[\Tr[M(\rho - \sigma)]\vert > \tau] \leq 2\exp\left(-\frac{2^{n+1}\tau^2}{36\pi^3}\right)\ .
    \end{align}
    Thus, $\QSDA_\tau\left(\{U\ketbra{0}{0}U^\dagger\}, \frac{1}{2^n}\id\right) \geq 2^{\Omega(\tau^2 2^n})$ and $\A$ must make at least $2^{\Omega(\tau^2 2^n)}$ queries.
\end{proof}

\subsection{Hardness of the Abelian hidden subgroup problem}
One of the great successes of quantum computing is solving the hidden subgroup problem for Abelian groups, of which Shor's famous factoring algorithm is a consequence. In this problem, we are given query access to a function $f$ on a group $G$ such that there is some subgroup $H \leq G$ satisfying $f$ is constant every left coset of $H$ and is distinct for different left cosets of $H$. How many queries to $f$ suffice to learn $H$? When $G$ is a finite Abelian group, $H$ can be efficiently determined by separable quantum algorithms. One approach which is often used to analyze the general Hidden subgroup problem is the \emph{standard} approach, which we describe now~\cite{wang2010hidden}:
\begin{enumerate}
    \item Prepare the superposition $\frac{1}{\sqrt{\vert G \vert}}\sum_{g\in G} \ket{g}\otimes \ket{0}$ by a Fourier transform over the group~$G$.
    \item Use a single query to prepare the superposition state $\frac{1}{\sqrt{\vert G \vert}}\sum_{g\in G} \ket{g}\otimes\ket{f(g)}$.
    \item Measure the second register and obtain a superposition over elements in some coset with representative $g'$. That is, the algorithm can be viewed as having the state  $\rho_H=\sum_{g'}\ketbra{\psi_{g'H}}{\psi_{g'H}}$ where $\ket{\psi_{g'H}}=\frac{1}{\sqrt{\vert H \vert}}\sum_{g\in g'H}\ket{g}$.
    \item Again apply a quantum Fourier transform and measure the state to obtain an element $g\in H^\perp$, where $H^\perp = \{g\in G \vert \chi_g(H) =1\}$.
\end{enumerate}

Repeating the above procedure $\tilde{O}(\log \vert G \vert)$ times yields a generating set for $H^\perp$ with high probability, allowing one to reconstruct $H$ as well. In fact observe that the above algorithm works even if one just makes \emph{separable} measurements. The state $\rho_H$ in step (3) of the algorithm above is called a \emph{coset state}. Here, we show that solving the Hidden subgroup problem for even Abelian groups is hard when the learning algorithm has access only to $\QSQ$ queries.

Consider the additive group $G = \mathbb{Z}_2^n$. In Simon's problem, a version of the hidden subgroup problem on $\mathbb{Z}_2^n$, the hidden subgroups are of the form $H = \{0,s\}$. While solving Simon's problem is easy using separable quantum measurements, it cannot be readily replicated using $\Qstat$ queries. Intuitively, every $y$ in the orthogonal complement of $s$ is equally likely to be observed upon a computational basis measurement. To see this, note that after discarding the register containing the function value, the resulting mixed states are $\rho_s = \frac{1}{2^{n-1}}\sum_{\overline{x}}\ketbra{\overline{x}}{\overline{x}}$, where $\overline{x}$ is a coset representative and $\ketbra{\overline{x}}{\overline{x}}$ is the projector onto the corresponding coset. Thus, accurately simulating this measurement with $\Qstat$ queries requires exponentially small tolerance $\tau$. The following theorem formalizes this~notion.
\begin{theorem}
    Solving the hidden subgroup problem for the Abelian group $\mathbb{Z}_2^n$ with $\Qstat$ queries of the form $M = M'\otimes \Id$ requires $\Omega(\tau^2 \cdot 2^n)$ many such queries to succeed with high probability.
\end{theorem}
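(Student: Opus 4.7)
The plan is to prove this by reducing to the variance-based lower bound in Theorem~\ref{thm:mainlowerboundonqsq}. The key observation is that when the query has the restricted form $M = M' \otimes \id$, its expectation on the full state $\ket{\phi_s} = \frac{1}{\sqrt{2^n}}\sum_x \ket{x}\ket{f_s(x)}$ (for a Simon-type function $f_s$ with hidden period $s \neq 0$) reduces to $\Tr[M'\rho_s]$ with
\begin{align*}
\rho_s \;=\; \Tr_2\bigl[\ket{\phi_s}\bra{\phi_s}\bigr] \;=\; \tfrac{1}{2^n}\bigl(\id + X^s\bigr),
\end{align*}
where $X^s = X^{s_1}\otimes \cdots \otimes X^{s_n}$ is the Pauli-$X$ string indexed by $s$. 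The restricted query model thus becomes $\QSQ(\tau)$ learning over the concept class $\Cc = \{\rho_s : s \in \mathbb{Z}_2^n\setminus\{0\}\}$.

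I would next set up the many-vs-one decision problem for Lemma~\ref{Lemma:LearningToDeciding}. Taking $\mu$ uniform over $\Cc$ and $\bar\rho := \E_{s\in\Cc}[\rho_s] = \tfrac{1}{2^n}\id + \tfrac{J-\id}{2^n(2^n-1)}$ (with $J$ the all-ones matrix), a short calculation using $\|X^s\|_1 = 2^n$ gives $\|\rho_s - \bar\rho\|_1 \geq 1 - O(2^{-n})$, so $\min_{\rho\in\Cc}\TRD(\rho,\bar\rho)$ is a constant and the reduction from learning to decision is valid for sufficiently small $\tau,\eps$. To bound the variance, write $\Tr[M'\rho_s] = \Tr[M']/2^n + \Tr[M' X^s]/2^n$; only the second term depends on $s$, giving
\begin{align*}
\Var_{s\sim\mu}\bigl[\Tr[M'\rho_s]\bigr] \;\leq\; \E_{s\sim\mu}\!\Bigl[\bigl(\Tr[M' X^s]/2^n\bigr)^2\Bigr] \;=\; \frac{1}{(2^n-1)\,4^n}\sum_{s\neq 0}\bigl(\Tr[M' X^s]\bigr)^2.
\end{align*}

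The crucial step is to use Parseval in the Pauli basis: $\{P/\sqrt{2^n}\}$ is orthonormal w.r.t.\ the Hilbert--Schmidt inner product, so $\sum_{P}(\Tr[M' P])^2 = 2^n\,\Tr[(M')^2] \leq 4^n$ whenever $\|M'\|\leq 1$. Restricting this sum to the $2^n$ Pauli-$X$ strings in particular yields $\sum_{s\neq 0}(\Tr[M' X^s])^2 \leq 4^n$, and hence $\Var_{s\sim\mu}[\Tr[M'\rho_s]] \leq 1/(2^n-1)$. Plugging this into the variance bound of Theorem~\ref{thm:mainlowerboundonqsq} gives $\QSDA_\tau(\Cc,\bar\rho) \geq \tau^2(2^n-1)$, and the inequalities $\QSQ \geq \QSD - 1 \geq (1-2\delta)\QSDA - 1$ established earlier deliver the claimed $\Omega(\tau^2\cdot 2^n)$ lower bound. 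The only nontrivial step is identifying that the shift operators arising from Simon-type functions form a subset of a Pauli orthogonal basis, which is what makes Parseval apply cleanly; the rest is a direct instantiation of the framework already developed in Section~\ref{sec:qsqupper}.
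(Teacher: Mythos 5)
Your proof is correct, but it takes a genuinely different route from the paper's. You reduce the restricted queries $M=M'\otimes\Id$ to queries on the reduced coset states $\rho_s=\tfrac{1}{2^n}(\id+X^s)$ and then apply the \emph{variance} bound of Theorem~\ref{thm:mainlowerboundonqsq} against $\sigma=\E_s[\rho_s]$, with the key estimate $\sum_{s}(\Tr[M'X^s])^2\le 2^n\Tr[(M')^2]\le 4^n$ coming from Parseval over the Pauli basis. The paper instead uses the \emph{average correlation} bound against $\sigma=\tfrac{1}{2^n}\id$ (the coset state of the trivial subgroup): there $\hat\rho_s=X^s$ and $\Tr[\hat\rho_s\hat\rho_{s'}\sigma]=\delta_{s,s'}$, giving $\gamma(\Cc',\sigma)=1/|\Cc'|$ and hence $\QAC_{\tau^2}=\Omega(\tau^2\cdot 2^n)$. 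The two arguments are secretly the same computation --- orthogonality of the shift operators $X^s$ under the Hilbert--Schmidt inner product --- packaged through different lower-bounding lemmas, and they yield the same $\Omega(\tau^2\cdot 2^n)$ bound. What the paper's choice of $\sigma=\tfrac{1}{2^n}\id$ buys is a cleaner reduction: that state is itself a valid coset state (of the trivial/injective instance), so the many-vs-one decision problem is literally a sub-instance of the HSP and no learning-to-deciding lemma is needed; your route requires the extra (easy, but unstated) observation that an HSP solver identifies $s$ exactly and hence is an $\eps=0$ learner of $\{\rho_s\}$, after which Lemma~\ref{Lemma:LearningToDeciding} applies since $\TRD(\rho_s,\bar\rho)\approx 1/2$ (note $\TRD=\tfrac12\|\cdot\|_1$, so your bound $\|\rho_s-\bar\rho\|_1\ge 1-O(2^{-n})$ gives a constant close to $1/2$, which still comfortably exceeds $2(\tau+\eps)$). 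What your route buys is that the Parseval step makes transparent why the restriction to $X$-strings is what drives the exponential lower bound, and it sidesteps computing pairwise overlaps of coset projectors.
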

\begin{proof}
    We prove the theorem by a bound on $\QAC_\tau(\Cc, \sigma)$ where $\sigma = \frac{1}{2^n}\id$. Say that $\A$ is an algorithm which solves the hidden subgroup problem with high probability using $\Qstat$ queries of the form $M=M'\otimes \Id$. Then, the queries $\{M_i\}_i$ used by $\A$ imply the existence of queries $\{M_i'\}_i$ where $M_i'\in\C^{2^n}\times \C^{2^n}$ which suffice to identify the coset states $\rho_H = \frac{1}{\vert H \vert}\sum_{\overline{x}}\ketbra{\overline{x}}{\overline{x}}$, where $\overline{x}$ denotes a coset and $\ketbra{\overline{x}}{\overline{x}}$ the projector onto this coset.

    Consider the subset $\Cc_0 \subset \Cc$ of coset states of subgroups of the form $H_s = \{0,s\}$. For such a subgroup $H_s$ the corresponding coset state is $\rho_s = \frac{1}{2^{n-1}}\sum_{\overline{x}}\ketbra{\overline{x}}{\overline{x}}$, where $\{\overline{x}\}$ are a set of $2^{n-1}$ coset representatives and $\ket{\overline{x}} = \frac{1}{\sqrt{2}}(\ket{x} + \ket{x \oplus s})$. If $f$ is a constant function, then $\rho_H = \frac{1}{2^n}\Id$. Thus, the correctness of $\A$ implies the existence of a $\QSQ$ algorithm that can solve the decision problem of $\{\rho_{H=\{0,s\}}\}_H$ versus $\sigma = \frac{1}{2^n}\id$.

   For such a decision problem, $\hat{\rho_s} = 2^n\rho_s - \id$ and $\Tr[\hat{\rho}_{s}\hat{\rho}_{s'}\sigma] = 2^n\Tr[\hat{\rho}_{s}\hat{\rho}_{s'}]-1$. Let $s = s'$. Then $\Tr[\rho_s^2]=2^{-2(n-1)}$ and $\Tr[\hat{\rho}_{s}^2\sigma] = 1$. Now instead consider when $s\neq s'$. For every coset $\overline{x}$ of $H_s$ there exist two cosets $\overline{y}_1$ and $\overline{y}_2$ of $H_{s'}$ with a non-empty intersection (of exactly one element) with $\overline{x}$. Thus, we have that $\Tr[\ketbra{\overline{x}}{\overline{x}}\ketbra{\overline{y}_1}{\overline{y}_1}] = \Tr[\ketbra{\overline{x}}{\overline{x}}\ketbra{\overline{y}_2}{\overline{y}_2}]= \frac{1}{4}$ and
    \begin{align}
        \Tr[\hat{\rho}_s \hat{\rho}_{s'}\sigma] & = 2^n\Tr[\rho_s \rho_{s'}]-1 = \frac{2^n}{2^{2(n-1)}}\sum_{\overline{x},\overline{y}} \vert \braket{\overline{x}}{\overline{y}}-1 = \frac{2^n}{2^{2(n-1)}} \sum_{\overline{x}}\frac{1}{2}-1 = 0\ .
    \end{align}
    For any subset $\Cc'\subseteq \Cc_0$ we thus have that $\gamma(\Cc',\sigma) = \frac{1}{\vert \Cc' \vert}$. If $\vert \Cc'\vert < \frac{1}{\tau}$ then $\gamma(\Cc',\sigma) > \tau$. Note that $\vert \Cc_0 \vert = 2^n-1$ and thus $\kappa_\tau^\gamma-\textsf{frac}(\Cc_0,\sigma) = \Theta(\frac{1}{\tau 2^{n}})$ and $\QAC_\tau(\Cc,\sigma) = \Omega(\tau\cdot 2^n)$. By Theorem~\ref{thm:mainlowerboundonqsq} we  have that $\QSD_\tau^\delta(\Cc,\sigma) \geq (1-2\delta)\QAC_{\tau^2}(\Cc,\sigma) = \Omega(\tau^2\cdot 2^n)$.
\end{proof}

Thus, any $\QSQ$ algorithm for solving the hidden subgroup problem on $\mathbb{Z}_2^n$ must depend non-trivially on the register holding the function value. This is in contrast to the standard Fourier sampling method which has no dependence on the function register.

\begin{remark}
    The average correlation argument above also implies that learning coset state below trace distance $\frac{1}{2}$ with high probability requires $\Omega(\tau^2 \cdot 2^n)$ $\Qstat$ queries of tolerance $\tau$.
\end{remark}
\begin{proof}
    Note that the trace distance between $\rho_H$ for $H=\{0,s\}$ and $\frac{1}{2^n}\id$ is $\frac{1}{2}$. Using Lemma~\ref{lem:learn_decide_2}, we have that $QSQ^{1/2, \delta}_{\tau}(\Cc) \geq \QSD_\tau^\delta(\D, \frac{1}{2^n}\id)$, where $\D = \{\rho_H \vert H = \{0,s\}\}$. From section 3 we know that $\QSD_\tau^\delta(\D, \frac{1}{2^n}\id) \geq (1-2\delta)\QAC_{\tau^2}(\Cc, \frac{1}{2^n}\id)$. The average correlation argument above yields that $\QAC_{\tau^2}(\Cc, \frac{1}{2^n}\id) = \Omega(\tau^2 \cdot 2^n)$, thus proving the claim.
\end{proof}
\subsection{Hardness of shadow tomography}
In \cite{DBLP:conf/focs/ChenCH021} the authors derive lower bounds on the sample complexity of shadow tomography using separable measurements. Recall that in shadow tomography, given copies of $\rho$, the goal of a learner is to predict the expectation value $\Tr[O_i \rho]$ of a collection of known observables $\{O_i\}_i$ up to error $\varepsilon$. To prove these lower bounds the authors construct a many-vs-one decision task where~$\sigma = \id/{2^n}$ and 
\begin{align}
    \Cc = \left\{\rho_i = \frac{\id+3\varepsilon O_i}{2^n}\right\}\ .
\end{align}
Assuming that $\Tr[O_i] =0 $ and $\Tr[O_i^2]=2^n$ for all $O_i$, then an algorithm which solves the shadow tomography problem with high probability also solves the decision problem. Thus, a lower bound on the latter is also a lower bound on the sample complexity of shadow tomography.

\begin{theorem}\label{thm:CCHL}
    Any algorithm that uses $\Qstat(\tau)$ queries and predicts $\Tr[P\rho]$ up to error $\varepsilon$ for all non-identity Pauli strings $P$ with high probability requires $\Omega(\tau^2 \cdot 2^{2n}/\varepsilon^2)$ queries. 
\end{theorem}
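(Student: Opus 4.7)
The plan is to apply the reduction strategy of Chen--Cotler--Huang--Li that converts shadow tomography into a many-versus-one decision problem, and then invoke the average correlation lower bound from Theorem~\ref{thm:mainlowerboundonqsq}(2). The key structural fact to exploit is the orthogonality of the non-identity Pauli strings under the Hilbert--Schmidt inner product, which will make the off-diagonal correlations in the $\QAC$ bound vanish identically.

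Concretely, I would set $\sigma = \id/2^n$ and define the concept class
\[
\Cc \;=\; \Big\{\, \rho_P \;=\; \tfrac{1}{2^n}\bigl(\id + 3\varepsilon P\bigr) \;:\; P \in \{I,X,Y,Z\}^{\otimes n} \setminus \{I^{\otimes n}\} \,\Big\}.
\]
For $\varepsilon < 1/3$ each $\rho_P$ is a valid state. Since $\Tr[P \rho_P] = 3\varepsilon$ while $\Tr[P \sigma] = 0$, any shadow tomography algorithm which predicts all $\Tr[P \rho]$ to additive error $\varepsilon$ with high probability solves the decision problem of whether the input state is some $\rho_P \in \Cc$ or equals $\sigma$. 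Moreover $\TRD(\rho_P,\sigma) = \tfrac{3\varepsilon}{2}$ since $\|P\|_1 = 2^n$, which is a constant gap in $n$, so Lemma~\ref{Lemma:LearningToDeciding} (with tolerance parameter taken small enough compared to $\varepsilon$) lets us pass the decision complexity through to a $\QSQ$ query lower bound.

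Next I would compute the average correlation explicitly. Because $\sigma$ is maximally mixed, $\hat\rho_P := \rho_P \sigma^{-1} - \id = 3\varepsilon P$, and using $\Tr[P P'] = 2^n \delta_{P,P'}$ we get
\[
\bigl|\Tr[\hat\rho_P \hat\rho_{P'} \sigma]\bigr| \;=\; 9\varepsilon^2\,\delta_{P,P'}.
\]
Therefore, for any nonempty $\Cc' \subseteq \Cc$,
\[
\gamma(\Cc',\sigma) \;=\; \frac{1}{|\Cc'|^2}\sum_{P,P' \in \Cc'}\bigl|\Tr[\hat\rho_P \hat\rho_{P'}\sigma]\bigr| \;=\; \frac{9\varepsilon^2}{|\Cc'|}.
\]
Requiring $\gamma(\Cc',\sigma) > \tau^2$ forces $|\Cc'| < 9\varepsilon^2/\tau^2$, so $\kappa^\gamma_{\tau^2}\text{-}\textsf{frac}(\Cc,\sigma) \leq \frac{9\varepsilon^2/\tau^2}{4^n-1}$, and hence $\QAC_{\tau^2}(\Cc,\sigma) = \Omega(\tau^2 \cdot 4^n/\varepsilon^2)$.

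Finally, chaining the inequalities $\QSQ_\tau \geq \QSD_\tau - 1 \geq (1-2\delta)\,\QSDA_\tau - 1 \geq (1-2\delta)\,\QAC_{\tau^2} - 1$ from Section~\ref{sec:qsqupper} together with the $\QAC$ estimate above yields the claimed lower bound on the number of $\Qstat(\tau)$ queries. The only real obstacle is bookkeeping: one must confirm that the trace-distance gap $3\varepsilon/2$ meets the hypothesis $\min_{\rho\in\Cc}\TRD(\rho,\sigma) > 2(\tau+\varepsilon')$ of Lemma~\ref{Lemma:LearningToDeciding}, where $\varepsilon'$ is the (small) accuracy to which the reduced decision algorithm identifies the state; everything else is a direct consequence of Pauli orthogonality and the general machinery already developed.
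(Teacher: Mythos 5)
Your proposal is correct and follows essentially the same route as the paper: the same many-vs-one ensemble $\rho_P = (\id + 3\varepsilon P)/2^n$ against $\sigma = \id/2^n$, the same Pauli-orthogonality computation giving $\gamma(\Cc',\sigma) = 9\varepsilon^2/|\Cc'|$, and the same chain $\QSQ \geq \QSD \geq (1-2\delta)\QSDA_\tau \geq (1-2\delta)\QAC_{\tau^2}$. Your tolerance bookkeeping is in fact slightly more careful than the paper's, whose proof body ends at $\Omega(\tau^4\cdot 2^{2n}/\varepsilon^2)$ while the theorem statement reads $\Omega(\tau\cdot 2^{2n}/\varepsilon^2)$; your $\Omega(\tau^2\cdot 4^n/\varepsilon^2)$ is what the machinery as defined actually yields.
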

\begin{proof}
    We prove the theorem using a bound on $\QAC_\tau(\Cc, \sigma)$ where $\sigma = \frac{1}{2^n}\id$. For convience we label the states from the many-vs-one decision task as $\rho_i$ where $i\in [4^{n}-1]$. For such a $\sigma$ we further have that $\Tr[\hat{\rho}_i\hat{\rho}_j\sigma] = 2^{n}\Tr[\rho \rho']-1$. By the orthogonality of Pauli strings, $\Tr[\hat{\rho}_i\hat{\rho}_j\sigma] = 9\varepsilon^2\delta_{i,j}$. For any subset $\Cc'\subseteq \Cc$ we thus have that $\gamma(\Cc', \sigma) = \frac{9\varepsilon^2}{\vert \Cc' \vert}$. If $\vert \Cc' \vert < \frac{9\varepsilon^2}{\tau}$ then $\gamma(\Cc', \sigma) > \tau$. Thus, $\kappa_\tau^\gamma-\textsf{frac}(\Cc,\sigma) = \Theta(\varepsilon^2\cdot \tau^{-1}\cdot 2^{2n})$ and $\QAC_\tau(\Cc, \sigma) = \Omega(\frac{\tau\cdot 2^{2n}}{\eps^2})$. Using Theorem~\ref{thm:mainlowerboundonqsq} we know that $\QSD_\tau^\delta (\Cc,\sigma) \geq (1-2\delta)\QAC_{\tau^2}(\Cc,\sigma) = \Omega(\tau^2\cdot 2^{2n}/\eps^2)$.
 \end{proof}

In \cite{DBLP:conf/focs/ChenCH021} the authors derive a lower bound of $\Omega(2^n/\eps^2)$ for the same task. They further show an upper bound of $O(n2^n/\eps^2)$ as well. Our result essentially says that shadow tomography benefits from more than just estimating expectation values. With only $\Qstat$ queries, the nearly optimal algorithm is to simply query every Pauli string.

\subsection{Learning quantum biclique states}

An influential work of Feldman et al.~\cite{feldman2017statistical} considers  the planted biclique problem. The goal here is to learn the class of distributions each indexed by subsets $S \subseteq \lbrace 1,2 \ldots ,n \rbrace$. For every $S$, the distribution $D_S$ is defined as follows 
\[ 
D_S(x)=\begin{cases} 
      \frac{k/n}{2^{n-k}}+\frac{1-k/n}{2^n} & x\in 1_S\times \01^{n-k} \\
      \frac{1-k/n}{2^n} &  x\notin 1_S\times \01^{n-k},
   \end{cases}
\]
where above $1_S\times \01^{n-k}$ is the set $\{x\in \01^n: x_S=1_S\}$. 

A natural way of generalizing problems over distributions to quantum statistical queries is to consider coherent encodings of distributions, i.e., for a given distribution $D$ over $X$, we define a quantum state $\ket{\psi} = \sum_x \sqrt{D(x)}\ket{x}$. Classical $\Stat$ queries then correspond to $\Qstat$ with \emph{diagonal} observables and a natural question is, how much can coherent examples help? In what follows, we first show that for the task of distinguishing two coherent encodings, there can be at most a quadratic gap between the precision that is tolerated by $\Qstat$ and $\Stat$ queries. We use this to show that, for some choice of parameters, there are large gaps between the classical and quantum statistical query complexity of the $k$-biclique problem.
 We demonstrate below that $\Qstat$ measurements can help significantly in certain regimes of tolerance.

\begin{lemma}
For large enough $n$ and $k\geq 2\log n$, the $k$-planted biclique problem with coherent encodings can be solved with statistical quantum algorithm that makes at most $\binom{n}{k}$ $\Qstat \left(\sqrt{k/n}\right)$ queries, but cannot be solved by any algorithm that makes $\Stat\left(\sqrt{k/n}\right)$ queries. 

\end{lemma}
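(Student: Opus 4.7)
The plan is to handle the two directions separately: the quantum upper bound is an explicit, non-adaptive algorithm that sweeps over all candidate planted sets using Helstrom-type observables, while the classical lower bound is an adversarial argument invoking Fact~\ref{Lemma:norms} on the total variation distance between $D_S$ and the uniform distribution $D_0$.

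For the quantum upper bound, write $\ket{\psi_S} = \sum_x \sqrt{D_S(x)}\ket{x}$ and $\ket{\psi_0} = 2^{-n/2}\sum_x \ket{x}$. First I would reduce the problem to distinguishing $\ket{\psi_S}$ from $\ket{\psi_0}$ for each of the $\binom{n}{k}$ candidate sets $S$ of size $k$ (this is the source of the query count). For a fixed $S$ the algorithm queries the observable $M_S = 2\Pi_S^+ - \id$, where $\Pi_S^+$ is the projector onto the positive part of $\ket{\psi_S}\!\bra{\psi_S} - \ket{\psi_0}\!\bra{\psi_0}$; this observable satisfies $\|M_S\|\le 1$ and achieves bias $\Tr[M_S(\psi_S - \psi_0)] = 2\TRD(\psi_S,\psi_0)$. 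The main calculation is to lower bound this trace distance: expanding $\braket{\psi_S}{\psi_0}$ into the two parts $x_S = 1_S$ and $x_S \ne 1_S$ gives
\begin{equation*}
\braket{\psi_S}{\psi_0} \;=\; 2^{-k/2}\sqrt{k/n + (1-k/n)2^{-k}} \;+\; (1-2^{-k})\sqrt{1-k/n}.
\end{equation*}
Using $k \ge 2\log n$ to bound $2^{-k} \le 1/n^2$, one obtains $|\braket{\psi_S}{\psi_0}|^2 \le 1-k/n + O(1/n)$, hence $\TRD(\psi_S,\psi_0) \ge \sqrt{k/n}\,(1-o(1))$. Thus the observed bias exceeds $2\tau$ for a tolerance $\tau$ of order $\sqrt{k/n}$ (with an appropriately small constant), and the algorithm concludes ``planted at $S$'' if some query returns a value far from $\Tr[M_S\psi_0]$ and ``uniform'' otherwise. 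The algorithm is correct by a union bound over the $\binom{n}{k}$ queries.

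For the classical lower bound I would invoke Fact~\ref{Lemma:norms}: any function $\phi: \{0,1\}^n \to [-1,1]$ queried to a $\Stat$ oracle produces expectations that differ between $D_S$ and $D_0$ by at most $2\TVD(D_S,D_0)$. A direct computation using the two-part structure of $D_S$ gives $\TVD(D_S,D_0) = (k/n)(1-2^{-k}) \le k/n$. Since $\sqrt{k/n} \gg k/n$ whenever $k \ll n$, the adversary who always returns $\E_{x\sim D_0}[\phi(x)]$ (or the midpoint $\tfrac{1}{2}(\E_{x\sim D_S}[\phi(x)] + \E_{x\sim D_0}[\phi(x)])$) gives a response that is within tolerance $\sqrt{k/n}$ of both the $D_S$ and $D_0$ expectations. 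Therefore no $\Stat(\sqrt{k/n})$ algorithm can distinguish the two cases regardless of how many queries it makes.

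The main obstacle is purely quantitative: the quantum upper bound sits at the boundary where the Helstrom bias ($\approx 2\sqrt{k/n}$) matches twice the target tolerance ($2\sqrt{k/n}$), so one has to track the $o(1)$ correction coming from the $2^{-k/2}$ term carefully and absorb a constant factor into the tolerance. The condition $k \ge 2\log n$ is exactly what makes this absorption work, since it forces $2^{-k/2} \le 1/n$, ensuring $|\braket{\psi_S}{\psi_0}|^2$ is bounded away from $1-k/n + 2\sqrt{k/n^3}$ by enough to beat the tolerance.
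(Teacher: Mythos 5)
Your proposal is correct and follows essentially the same route as the paper: the quantum upper bound enumerates all $\binom{n}{k}$ candidate sets and queries the Helstrom observable for each, using $\max_{\|M\|\le 1}|\Tr(M(\psi_S-\psi_0))| = 2\TRD(\psi_S,\psi_0)$ together with the same inner-product expansion of $\braket{\psi_S}{\psi_0}$ (your expression is algebraically identical to the paper's), while the classical lower bound uses the same computation $\TVD(D_S,D_0)=(k/n)(1-2^{-k})$ and the same adversary answering consistently with the uniform distribution. The boundary issue you flag (the bias being $2\sqrt{k/n}(1-o(1))$ rather than strictly $2\sqrt{k/n}$) is present in the paper as well, which resolves it by working with the slightly smaller tolerance $\sqrt{2\log(n/4)/n}$.
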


\begin{proof}
First observe that $\TRD (\ket{\psi_S}, \ket{+^n}) = \sqrt{1-|\braket{\psi_S}{+^n}|^2}$, and
\begin{align}
    \braket{+^n}{\psi_S} &= \left(\sqrt{\frac{k}{n} + \frac{1-k/n}{2^k}} - \sqrt{\frac{1-k/n}{2^k}} \right) \frac{1}{\sqrt{2^k}} + \sqrt{1 - \frac{k}{n}}.
\end{align}
Define $\ket{\phi} = (\ket{+} + \ket{\psi_S})/\sqrt{2+2\langle \psi_S|+\rangle}$ and $\ket{\phi^\intercal}  (\ket{+} - \ket{\psi_S})/\sqrt{2-2\langle \psi_S|+\rangle}$.
The optimal distinguishing $\Qstat$ query between $\ket{+^n}$ and $\ket{\psi_S}$ is the difference between projectors on the state $\ket{P_S} = \frac{{\ket{\phi} + \ket{\phi^\intercal}}} {\sqrt{2}}$ and its orthogonal complement in the span of $\ket{+^n}$ and $\ket{\psi_S}$ (see for example~\cite[Theorem~3.4]{watrous2018theory}). Call this measurement $M_S$ and notice that it is implementable by a $k$-qubit controlled rotation. A (possibly inefficient) quantum algorithm for detecting the planted clique would query  $\Qstat(\tau)$ oracle with $M_S$ for every subset $S \subseteq [n]$ of cardinality $|S| = k$. 
From Lemma \ref{Lemma:norms} and optimality of the measurement, we know that $|\Tr(M_S (\psi_S - \psi_0))| = 2\TRD(\psi_S, \psi_0)$. It follows that 
as long as
$\tau \leq \TRD(\ket{\Psi_S}, \ket{+^n})$, such algorithm succeeds.  We now bound $\TRD(\ket{\Psi_S}, \ket{+^n})$. To that end, observe that:
\begin{align}
  \left(\sqrt{\frac{k}{n} + \frac{1-k/n}{2^k}} - \sqrt{\frac{1-k/n}{2^k}} \right) \frac{1}{\sqrt{2^k}} \leq \frac{1}{\sqrt{2^{k+1}}} \sqrt{1- \frac{k}{n}}.
\end{align}
from which we have that:
\begin{align}
   \left( 1 + 2^{-(k+1)/2}\right) \sqrt{1-\frac{k}{n}}\geq \braket{+^n}{\psi_S} \geq \sqrt{1-\frac{k}{n}},
\end{align}
and\footnote{Using 
$ \sqrt{1+3 \times 2^{-(k+1)/2}} \geq 1 + 2^{-(k+1)/2}$ for all $k \geq 1 $.}
$\sqrt{\frac{k}{n} } \geq \TRD(\psi_0, \psi_S)   \geq \sqrt{ \frac{k}{n} - \frac{4}{2^{k/2}} }$. For $k \geq 2 \log n$, $n \geq 5$ and $\tau \leq\sqrt{\frac{2\log(n/4)}{n}}$,  the planted biclique can be detected by at most $n \choose k$ $\Qstat(\tau)$ queries.
 On the other hand, the $k$-planted biclique problem has $\TVD(D, D_i) = \frac{k}{n} \left( 1 - 2^{-k} \right)$ for all $D_i \in \mathcal{D}_D$, from which 
$\TVD(D, D_0) = \frac{k}{n}(1-2^{-k}) < \frac{k}{n}$.
It follows that: 
\begin{equation*}
 \max_{\phi, |\phi| \leq 1} \Pr_{D \sim \mathcal{D}} \left[ |D[\phi] - D_0[\phi]|  \geq 2\tau\right]   \leq  \Pr_{D \sim \mathcal{D}} \max_{\phi, |\phi| \leq 1} \left[   |D[\phi] - D_0[\phi]| \geq 2\tau \right] = \Pr_{D \sim \mathcal{D}} \left[ \TVD(D,D_0) \geq \tau \right].
\end{equation*}
For $\tau =k/n$, we have $ \Pr_{D \sim \mathcal{D}} \left[ \TVD(D,D_0) \geq \frac{k}{n} \right]  = 0$,
which means that the clique state is undetectable by any $\Stat(\tau)$ query (an adversarial oracle can output an outcome consistent with uniform distribution and succeed at all times).  For $k \geq 2 \log n$  and large enough $n$, the statistical queries have better tolerance than the quantum queries. It follows that for $k \geq 2 \log n$, $n \geq 72$ and $\tau = \sqrt{\frac{2\log(n/4)}{n}}$, the $k$-planted biclique problem cannot be solved by a $\Stat(\tau)$ algorithm, but can be solved with an algorithm that can makes $\Qstat(\tau)$ queries.
\end{proof}

\subsection{Hardness of Learning Approximate Designs}
In this class we show that the class of quantum states that forms an approximate $2$-designs are hard to learn in the $\QSQ$ model. 
\begin{theorem}\label{thm:approx_design}
    Let $\Cc$ be an ensemble of states forming a $\eta$-approximate $2-$design where $\eta = O(2^{-n})$. Learning states from $\Cc$ with error $\leq 1/3$ in trace distance requires $\Omega(\tau^2 \cdot 2^n)$ $\Qstat(\tau)$~queries.
\end{theorem}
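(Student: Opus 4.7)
The plan is to apply the variance lower bound in Theorem~\ref{thm:mainlowerboundonqsq} with $\sigma = \E_{\rho \sim \Cc}[\rho]$, which for any ensemble that is even approximately a $1$-design is close to $\id/2^n$. First I would verify that we may invoke Lemma~\ref{Lemma:LearningToDeciding}: since each $\rho \in \Cc$ is (essentially) pure, $\TRD(\rho,\id/2^n) = 1 - 1/2^n$, so taking $\eps = 1/3$ and any sufficiently small constant tolerance $\tau$ satisfies the gap condition $\min_{\rho \in \Cc}\TRD(\rho,\sigma) > 2(\tau+\eps)$ (and as noted in the paragraph preceding Section~\ref{sec:deg2sep}, this is without loss of generality). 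Thus it suffices to lower bound $\QSDA_\tau(\Cc,\E_\rho[\rho])$.

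The heart of the argument is a variance estimate: for every Hermitian $M$ with $\|M\| \leq 1$,
\begin{equation*}
\Var_{\rho \sim \Cc}[\Tr(M\rho)] \;=\; \E_{\rho}[\Tr(M\rho)^{2}] - \bigl(\E_{\rho}[\Tr(M\rho)]\bigr)^{2} \;=\; O(2^{-n}).
\end{equation*}
For an \emph{exact} $2$-design one has $\E_\rho[\rho^{\otimes 2}] = (\id + \SWAP)/(2^n(2^n+1))$, and a direct calculation using $\Tr[(M\otimes M)\SWAP] = \Tr(M^2)$ gives
\begin{equation*}
\Var_{\rho}[\Tr(M\rho)] \;=\; \frac{2^n\Tr(M^2) - \Tr(M)^2}{4^n(2^n+1)} \;\leq\; \frac{1}{2^n+1},
\end{equation*}
using $\Tr(M^2) \leq 2^n$ (since $\|M\|\leq 1$) and $\Tr(M)^2 \geq 0$. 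For an $\eta$-approximate $2$-design, the deviation $\|\E_\rho[\rho^{\otimes 2}] - (\id+\SWAP)/(2^n(2^n+1))\|$ measured in the appropriate operator norm is at most $\eta$, so $\E_\rho[\Tr(M\rho)^2]$ and $\E_\rho[\Tr(M\rho)]^2$ each change by at most $O(\eta \cdot \|M\otimes M\|) = O(\eta)$. With the hypothesis $\eta = O(2^{-n})$, the variance remains $O(2^{-n})$.

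Combining this with the variance bound in Theorem~\ref{thm:mainlowerboundonqsq} gives
\begin{equation*}
\QSDA_\tau(\Cc,\E_\rho[\rho]) \;\geq\; \tau^2 \cdot \min_{M:\|M\|\leq 1} \bigl(\Var_\rho[\Tr(M\rho)]\bigr)^{-1} \;=\; \Omega(\tau^2 \cdot 2^n),
\end{equation*}
and then chaining the lower bounds $\QSQ_\tau^{1/3,\delta}(\Cc) \geq \QSD_\tau^\delta(\Cc,\sigma) - 1 \geq (1-2\delta)\,\QSDA_\tau(\Cc,\sigma) - 1$ yields the claimed $\Omega(\tau^2 \cdot 2^n)$ bound.

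The main technical point is the perturbation step: one must fix a concrete definition of ``$\eta$-approximate $2$-design'' (e.g., in diamond norm of the twirling channel, or in $\infty$-norm of the second-moment operator) for which the estimate $|\E_{\text{approx}}[\Tr(M\rho)^2] - \E_{\text{Haar}}[\Tr(M\rho)^2]| = O(\eta)$ is immediate for all $\|M\|\leq 1$. Once that is pinned down, the rest of the proof is essentially the same short computation used in the Haar case, and no further combinatorial analysis (as in Theorem~\ref{thm:mainsep}) is needed.
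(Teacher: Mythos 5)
Your proposal is correct and follows essentially the same route as the paper: bound the variance of $\Tr(M\rho)$ by the exact Haar second-moment computation plus an $O(\eta)$ perturbation, then chain through the variance bound on $\QSDA$ and the learning-to-deciding reduction after checking the trace-distance gap. The only detail you leave open --- which norm defines the $\eta$-approximate $2$-design --- is resolved in the paper by taking the first and second moments to be $\eta$-close in trace distance, which immediately gives $|\Tr[M(\cdot)]|\leq 2\eta$ for all $\|M\|\leq 1$, exactly the estimate your perturbation step needs.
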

\begin{proof}
    We prove the theorem by showing that the variance of $\{\Tr[M\rho]\}_{\rho \in \Cc}$ for any such design must be exponentially small. By the definition of an approximate design, we have that
    \begin{align}
        d_{\Tr}\left(\E_{\rho \sim \Cc}[\rho], \frac{1}{2^n}\id\right) \leq \eta, \quad d_{\Tr}\left(\E_{\rho \sim \Cc}[\rho^{\otimes 2}], \frac{1}{4^n + 2^n}(\id + \SWAP)\right) \leq \eta \ ,
    \end{align}
    where $\frac{1}{2^n}\id$ and $\frac{1}{4^n+2^n}(\id + \SWAP)$ are respectively the first and second moments of the unitary Haar measure. For any observable $M$, by the definition of trace distance we have that
    \begin{align}
        \left\vert \Tr[M(\frac{1}{2^n}\id - \E_{\rho \sim \Cc}[\rho])] \right\vert \leq 2\eta, \quad 
        \left\vert \Tr[M(\frac{1}{4^n+2^n}(\id+\SWAP) - \E_{\rho \sim \Cc}[\rho^{\otimes 2}])] \right\vert \leq 2\eta\ .
    \end{align}
    Thus, $\Var_{\rho \sim \Cc}(\Tr[M \rho]) \leq \Var_{\rho \sim \mathcal{U}(2^n)}(\Tr[M \rho]) + O(2^{-n})$. We now show that $\Var_{\rho \sim \mathcal{U}(2^n)}(\Tr[M \rho]) = O(2^{-n})$ for any $\Vert M \Vert \leq 1$.
    \begin{align}
        \mathop{\Var}_{\rho \sim \mathcal{U}(2^n)}(\Tr[M \rho]) & = \frac{1}{4^n+2^n}\Tr[M^{\otimes 2}(\id + \SWAP)] - \frac{1}{4^n}\Tr[M]^2\\
        & = \frac{1}{4^n+2^n}\Tr[M^2] - \frac{1}{2^n(4^n+2^n)}\Tr[M]^2\ ,
    \end{align}
    where we have used that fact that $\Tr[M^{\otimes 2}] = \Tr[M]^2$ and $\Tr[M^{\otimes 2} \SWAP] = \Tr[M^2]$. As $\Vert M \Vert \leq 1$ we have that $\Tr[M^2] \leq 2^n$. Thus, taking $M$ to have $2^{n-1}$ eigenvalues equal to $+1$ and $2^{n-1}$ equal to~$-1$ maximizes the variance yielding
    \begin{align}
        \mathop{\Var}_{\rho \sim \Cc}(\Tr[M \rho]) & \leq \mathop{\Var}_{\rho \sim \mathcal{U}(2^n)}(\Tr[M \rho]) + O(2^{-n})  \leq \frac{2^n}{4^n+2^n} +O(2^{-n}) = O(2^{-n})\ .
    \end{align}
    To invoke Theorem~\ref{thm:mainlowerboundonqsq} and Lemma~\ref{Lemma:LearningToDeciding} we first note that all $\rho' \in \Cc$ are far from $\E_{\rho \sim \Cc}[\rho]$ in trace distance. This follows from triangle inequality: 
    $$
    d_{\Tr}\left(\rho' , \E_{\rho \sim \Cc}[\rho]\right) \geq d_{\Tr}\left(\rho', \frac{1}{2^n}\id\right)-d_{\Tr}\left(\frac{1}{2^n}\id, \E_{\rho \sim \Cc}[\rho]\right)\geq \frac{2^n-1}{2^n}-O(2^{-n}).
    $$
    Fixing $\varepsilon = 1/3$ and $\tau = 1/\poly(n)$ there is an $n_0$ such that for all $n\geq n_0$ we have that $d_{\Tr}(\rho' , \E_{\rho \sim \Cc}[\rho]) > 2(\tau + \varepsilon)$. Using Lemma~\ref{Lemma:LearningToDeciding} we thus have that learning states from $\Cc$ requires $\Omega(\tau^2 \cdot 2^{n})$ $\Qstat$ queries.
\end{proof}


\section{Further applications}
\label{sec:app}
\subsection{Error mitigation}
In this section, we show how to use our $\QSQ$ lower bound to 
 resolve an open question posed by Quek et al.~\cite{quek2022exponentially}. Therein the authors consider two forms of quantum error mitigation, which they call \emph{strong} and \emph{weak} error mitigation. We first describe these two models before stating our~result. 

\begin{definition}[Weak Error Mitigation]
An $(\varepsilon, \delta)$ weak error mitigation algorithm $\mathcal{A}$ takes an input a series of observables $\{O_1,\ldots, O_m\}$ satisfying $\Vert O_i \Vert \leq 1$ and outputs a set of values $\{\alpha_1,\ldots, \alpha_m\}$ such that with probability at least $1-\delta$ we have that
\begin{align}
    \left\vert \Tr[O_i \rho] - \alpha_i \right\vert \leq \varepsilon \ .
\end{align}
    
\end{definition}

\begin{definition}[Strong Error Mitigation]
    An $(\varepsilon, \delta)$-strong error mitigation algorithm $\mathcal{A}$ outputs a bitstring $z$ sampled from a distribution $P$ such that, with probability at least $1-\delta$, $\TVD(P,P_\rho)\leq \varepsilon$. Here $P_\rho$ is the distribution on the computational basis induced by the state $\rho$, i.e. $P_\rho(x) = \Tr[\ketbra{x}{x}\rho]$.
\end{definition}

In both cases, we assume that the algorithm is given classical descriptions of both the preparation and noise channels resulting in $\rho$. Further, $\A$ can make measurements on multiple copies of $\rho$ at once.

\begin{remark}
    For some forms of error mitigation it may be interesting to consider not just allowing the algorithm to query the circuit $U_{\mathcal{C}}$ but also modified circuits $U_{\mathcal{C'}}$. However this can be subsumed into the framework of weak error mitigation as given. To return an estimate of $O$ for $U_{\mathcal{C}'}$ the algorithm returns an estimate of $U_{\mathcal{C}}U_{\mathcal{C}'}^\dagger O U_{\mathcal{C}'}U_{\mathcal{C}}^\dagger$ from the original circuit.
\end{remark}

In \cite{quek2022exponentially} the authors show that strong error mitigation implies weak error mitigation for local observables. They then prove a partial converse and show that for a restricted family of observables weak error mitigation cannot recover strong error mitigation (for polynomial-sized inputs). The question of an unconditional separation is left open. Here we will show that Theorem~\ref{thm:mainsep} closes this open question and implies that weak error mitigation with polynomial numbers of observables does not suffice to recover strong error mitigation. First, note that by definition weak error mitigation outputs $\QSQ$ queries with tolerance $\tau = \varepsilon$. To match our notation, we will continue by using $\tau$ instead of $\varepsilon$. This is the equivalent of \cite[Lemma~5]{quek2022exponentially}. Next we have the following theorem from their work.
\begin{theorem}{\cite[Theorem 5]{quek2022exponentially}}\label{thm:distalg}
    For a class of distributions $\mathcal{Q} = \{q_1,\ldots, q_k\}$ and $\varepsilon, \delta > 0$ there is an algorithm which takes $O(\frac{\log \vert \mathcal{Q} \vert}{\varepsilon^2})$ samples from a target distribution $p$ (not necessarily in $\mathcal{Q}$) and outputs a $q^* \in \mathcal{Q}$ such that
    \begin{align}
        \TVD(p,q^*) \leq 3\min_{i\in [k]} \TVD(p,q_i) + \varepsilon \ .
    \end{align}
\end{theorem}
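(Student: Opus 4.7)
The plan is to prove Theorem~\ref{thm:distalg} via the classical Scheffé tournament / Yatracos minimum-distance estimator, which is the standard route to this sort of factor-$3$ approximation.

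\textbf{Step 1 (Scheffé sets).} For each ordered pair $(i,j)$ with $i\neq j$, define the Scheffé set
\[
S_{ij} = \{x : q_i(x) > q_j(x)\}.
\]
A short calculation gives $q_i(S_{ij}) - q_j(S_{ij}) = \TVD(q_i, q_j)$, so Scheffé sets encode total variation distances via linear functionals.

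\textbf{Step 2 (empirical estimation and concentration).} Draw $n = O\!\left(\tfrac{\log(k/\delta)}{\varepsilon^2}\right)$ i.i.d.\ samples from $p$, and for every pair $(i,j)$ compute the empirical frequency $\hat{p}(S_{ij})$. By Hoeffding's inequality applied to each of the at most $\binom{k}{2}$ Scheffé sets, followed by a union bound, with probability at least $1-\delta$ every estimate satisfies
\[
|\hat{p}(S_{ij}) - p(S_{ij})| \leq \varepsilon/8.
\]
Condition on this event for the rest of the argument.

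\textbf{Step 3 (tournament).} Declare $q_i$ the winner against $q_j$ if $|\hat{p}(S_{ij}) - q_i(S_{ij})| \leq |\hat{p}(S_{ij}) - q_j(S_{ij})|$, and output any $q^*$ that beats every other candidate (breaking ties arbitrarily). If no undefeated candidate exists, output the one with the fewest losses. The final step of the analysis will show such a $q^*$ is ``good.''

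\textbf{Step 4 (analysis).} Let $\mu = \min_i \TVD(p, q_i)$, attained by $q_{i^\star}$. The key lemma is: in any match $q_{i^\star}$ versus $q_j$, if $\TVD(q_{i^\star}, q_j) > 2\mu + \varepsilon$ then $q_{i^\star}$ wins. Indeed, on $S=S_{i^\star j}$ we have $q_{i^\star}(S) - q_j(S) = \TVD(q_{i^\star}, q_j) > 2\mu+\varepsilon$, and since $|p(S) - q_{i^\star}(S)| \leq \TVD(p, q_{i^\star}) = \mu$, the estimate $\hat{p}(S)$ must be more than $\mu + \tfrac{3\varepsilon}{4}$ away from $q_j(S)$ but within $\mu + \varepsilon/8$ of $q_{i^\star}(S)$. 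Hence any $q_j$ that beats $q_{i^\star}$ satisfies $\TVD(q_{i^\star}, q_j) \leq 2\mu + \varepsilon$, and by the triangle inequality
\[
\TVD(p, q_j) \leq \TVD(p, q_{i^\star}) + \TVD(q_{i^\star}, q_j) \leq 3\mu + \varepsilon.
\]
A standard combinatorial argument then ensures that the tournament output $q^*$ satisfies the same bound: either $q^*=q_{i^\star}$, or $q^*$ beat $q_{i^\star}$ directly, or $q^*$ beat some $q_j$ that beat $q_{i^\star}$ (and one can propagate the bound with only a constant loss).

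The main obstacle is the analysis in Step 4: one must carefully track how the constant factors in the Hoeffding slack, the triangle inequality, and the tournament propagation combine to produce exactly the factor $3$ and additive $\varepsilon$ in the theorem statement, rather than a weaker constant. This is the crux of the Devroye--Lugosi--Yatracos analysis and requires choosing the Hoeffding slack (here $\varepsilon/8$) consistently with the final target.
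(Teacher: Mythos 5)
The paper does not prove this statement at all: it is imported verbatim as \cite[Theorem~5]{quek2022exponentially} and used as a black box in the proof of Theorem~\ref{thm:weakvstrongEM}, so there is no in-paper proof to compare against. Judged on its own merits, your proposal takes the standard and correct route (Scheff\'e sets, empirical estimation with Hoeffding plus a union bound over the $O(k^2)$ sets, then hypothesis selection), and Steps 1, 2, and the key lemma of Step 4 are all sound: if $\TVD(q_{i^\star},q_j)>2\mu+\varepsilon$ then $q_{i^\star}$ wins the match, so any $q_j$ not beaten by $q_{i^\star}$ satisfies $\TVD(p,q_j)\le 3\mu+\varepsilon$.

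The genuine gap is in the selection rule of Step 3 and its analysis. The ``beats'' relation is a tournament and need not have an undefeated vertex (a $3$-cycle already defeats it), so the fallback branch is unavoidable in general; and the propagation you sketch there --- ``$q^*$ beat some $q_j$ that beat $q_{i^\star}$'' --- does not preserve the factor $3$. Applying your key lemma with $q_j$ in the role of the reference hypothesis replaces $\mu$ by $\TVD(p,q_j)\le 3\mu+\varepsilon$, so the chain yields a bound of order $9\mu$, not $3\mu$; the ``constant loss'' you wave at is exactly the constant the theorem is claiming. The fix is to replace the tournament by the Yatracos minimum-distance rule: output $q^*=\arg\min_i \max_{j\ne j'}|q_i(S_{jj'})-\hat p(S_{jj'})|$. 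Then with $S=S_{*\,i^\star}$,
\begin{align*}
\TVD(q^*,q_{i^\star}) = |q^*(S)-q_{i^\star}(S)| \le |q^*(S)-\hat p(S)|+|\hat p(S)-q_{i^\star}(S)| \le 2\max_{A}|q_{i^\star}(A)-\hat p(A)| \le 2\mu+\tfrac{\varepsilon}{4},
\end{align*}
using minimality of $q^*$ and the uniform $\varepsilon/8$ accuracy of $\hat p$ on Scheff\'e sets, whence $\TVD(p,q^*)\le 3\mu+\varepsilon$ by one more triangle inequality. No propagation through intermediate hypotheses is needed, and the existence of $q^*$ is automatic.
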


With these tools we can now prove a separation between strong and weak error mitigation.

\begin{theorem}\label{thm:weakvstrongEM}
    Let $\A$ be an algorithm that takes as inputs the estimates for weak error mitigation with $\tau = 1/\poly(n)$ and outputs $O(n^2)$ samples from some distribution $P$ such that $  \TVD(P,P_\rho)< 1/20$ with high probability. Then, $\A$ requires estimates of $\Omega(\tau^2\cdot 2^{n/2})$ distinct observables.
\end{theorem}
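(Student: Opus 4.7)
The plan is to reduce from the $\QSQ$ hardness of learning the degree-$2$ concept class $\Cc = \{\ket{\psi_A} = \frac{1}{\sqrt{2^n}}\sum_x \ket{x,x^\top A x}\}_{A \in \mathbb{F}_2^{n \times n}}$ established in Theorem~\ref{thm:mainsep}. Concretely, assume an algorithm $\A$ exists that uses estimates from only $k$ observables (each to precision $\tau$) and outputs $O(n^2)$ samples from a distribution $P$ with $\TVD(P,P_\rho) < 1/20$ with high probability. The goal is to construct a $\QSQ$ learning algorithm for $\Cc$ making only $k$ $\Qstat(\tau)$ queries, which by Theorem~\ref{thm:mainsep} forces $k = \Omega(\tau^2 \cdot 2^{n/2})$.

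First I would note the equivalence between a single weak-$\EM$ estimate and a single $\Qstat(\tau)$ query on $\rho$: both return a value within $\tau$ of $\Tr[O\rho]$ for an observable with $\|O\|\le 1$. Since $\ket{\psi_A}$ is prepared by an efficient noiseless circuit, running $\A$ on $\rho = \ketbra{\psi_A}{\psi_A}$ consumes exactly $k$ $\Qstat(\tau)$ queries and outputs $O(n^2)$ samples from a distribution $P$ with $\TVD(P, P_{\psi_A}) < 1/20$, where $P_{\psi_A}(x,b) = \frac{1}{2^n}\indic{b = x^\top A x}$.

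Next I would recover $A$ from those samples via the hypothesis-selection routine of Theorem~\ref{thm:distalg} applied to the class $\mathcal{Q} = \{P_{\psi_B} : B \in \mathbb{F}_2^{n\times n}\}$, which has $\log|\mathcal{Q}| = O(n^2)$, so the promised $O(n^2)$ samples suffice when the hypothesis-selection error parameter is set to a small constant, say $\varepsilon = 0.01$. Fact~\ref{fact:zippel} gives $\TVD(P_{\psi_A}, P_{\psi_{A'}}) = \Pr_x[x^\top A x \neq x^\top A'x] \geq 1/4$ for $A \neq A'$. Theorem~\ref{thm:distalg} returns $B^\star$ with $\TVD(P, P_{\psi_{B^\star}}) \leq 3 \cdot \tfrac{1}{20} + 0.01 < \tfrac{3}{16}$, so by the triangle inequality $\TVD(P_{\psi_A}, P_{\psi_{B^\star}}) < \tfrac{3}{16} + \tfrac{1}{20} < \tfrac{1}{4}$, forcing $B^\star = A$. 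This yields an exact $\QSQ$ learner for $\Cc$ using $k$ queries, and Theorem~\ref{thm:mainsep} (or rather its variance bound $\Var_A(\Tr[M\psi_A]) = O(2^{-n/2})$ combined with Theorem~\ref{thm:mainlowerboundonqsq}) gives $k = \Omega(\tau^2 \cdot 2^{n/2})$.

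The only delicate point is the constant balancing in the previous paragraph: we need the $1/20$ strong-$\EM$ tolerance, the factor-$3$ blow-up from hypothesis selection, and its additive slack to compose to strictly less than the $1/4$ separation guaranteed by Fact~\ref{fact:zippel}, so that exact identification of $A$ is forced with high probability. The $1/20$ threshold in the theorem statement is chosen precisely so this margin closes; had a looser TVD been allowed, one would instead appeal to Theorem~\ref{thm:mainsep} at a larger trace-distance tolerance, but the $1/4$ gap from Fact~\ref{fact:zippel} provides no more room, which is why the reduction requires the strong-$\EM$ output to be better than $1/20$-close in TVD.
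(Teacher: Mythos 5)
Your proposal is correct and follows essentially the same route as the paper's proof: reduce to the degree-$2$ concept class of Theorem~\ref{thm:mainsep}, treat each weak-$\EM$ estimate as a $\Qstat(\tau)$ query, run the hypothesis-selection routine of Theorem~\ref{thm:distalg} on the $O(n^2)$ samples, and use Fact~\ref{fact:zippel} plus the triangle inequality to force exact recovery of $A$, contradicting the $\QSQ$ lower bound. The only difference is your choice of the additive slack in hypothesis selection ($0.01$ versus the paper's $1/20$), which is immaterial since both close the margin below the $1/4$ separation.
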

\begin{proof}
   We show that such samples would give one the ability to exact learn quadratic polynomial states with polynomial $\QSQ$ queries, contradicting Theorem~\ref{thm:mainsep}. Let $P_A$ denote the distribution on the computational basis induced by $\ket{\psi_A} = \frac{1}{\sqrt{2^n}}\sum_x \ket{x}\otimes \ket{x^\top A x}$. For this concept class we can directly identify an example state with the distribution it induces on the computational basis and vice versa.
    Let's assume that such an algorithm $\A$ does exist. Using Theorem~\ref{thm:distalg}, and noting that $\log \vert \Cc \vert = \Theta(n^2)$,
   we can obtain a $P_{B}$ such~that 
    \begin{align}
        \TVD(P,P_{B}) \leq 3\min_{B'\in\mathcal{C}} \TVD(P,P_{B'}) + 1/20.
    \end{align}
    By the assumption upon $P$, we have that $\TVD(P, P_{A}) < 1/20$, where $A$ is the true concept. Thus, $\TVD(P,P_B) < 1/5$. For $A\neq B$ we have that
    \begin{align}
        \TVD(P_A, P_B) & = \frac{1}{2}\sum_{x,y}\vert P_A((x,y)) - P_B((x,y))\vert\\
        & = \frac{1}{2^{n+1}}\sum_{x,y} \vert \delta[x^\top A x = y] - \delta[x^\top B x =y]\vert = \Pr_{x\sim\01^n}[x^\top A x \neq x^\top B x] \geq 1/4,
    \end{align}
    where the last inequality follows from Fact~\ref{fact:zippel}.  For $A \neq B$, by the triangle inequality, we have that
    \begin{align}
        1/4 \leq \TVD(P_A, P_B) & \leq \TVD(P,P_A) + \TVD(P,P_{B}) < 1/4\ .
    \end{align}
    Thus we must have that $A^* = B$ and the true distribution (and function/state) can be recovered. This implies that the inputs to $\A$ could have been used as $\Qstat$ queries to solve the approximate state learning problem of Theorem~\ref{thm:mainlowerboundonqsq}. By the hardness of this problem $\A$ requires $\Omega(\tau^2\cdot 2^{n/2})$ distinct observables as inputs.
\end{proof}

\subsection{Learning distributions}
In this section, we consider the following setup of statistical query learning that was considered in the work of~\cite{hinsche2022single}. Let $U$ be a unitary and consider the induced distribution $P_U$ on the computational basis, i.e.,
$$
P_U(x)=\langle x| U|0^n\rangle^2.
$$
In~\cite{hinsche2022single,sweke23} they considered learning algorithms that were given access to the following: for $\phi:\01^n\rightarrow [-1,1]$ and $\tau\in [0,1]$,
$$
\Stat: (\phi,\tau)\rightarrow \alpha_{\phi} \in \Big[\mathop{\E}_{x\sim P_U}[\phi(x)]+\tau,\mathop{\E}_{x\sim P_U}[\phi(x)]-\tau\Big].
$$
The goal of the learning algorithm is to learn $P_U$ upto total variational distance $\leq \varepsilon$ by making $\poly(n)$ many $\Stat$ queries each with tolerance $\tau=1/\poly(n)$.  Hinsche et al.~\cite{hinsche2022single}  showed the hardness of learning the distribution $P_U$ when $U$ is a Clifford circuit of depth $\omega(\log n)$ and recently Nietner et al.~\cite{sweke23} showed that if $U$ is a depth-$\Omega(n)$ circuit where each gate is picked from $U(4)$, then $P_U$ is not learnable using just $\Stat$ queries. 

In this section we consider a stronger question. One can also just directly look at the quantum state $\ket{\psi_U}=U\ket{0^n}$ and ask how many $\Qstat$ queries of the form
$$
\Qstat: (M,\tau)\rightarrow \alpha_{M} \in \Big[\langle \psi_U|M|\psi_U\rangle+\tau,\langle \psi_U|M|\psi_U\rangle-\tau\Big].
$$
suffice to learn $P_U$ upto small trace distance? Note that the learning model in~\cite{hinsche2022single,sweke23} is \emph{a strict restriction} of this model, cause one could just consider $M=\sum_x \phi(x)\ketbra{x}{x}$, then 
$$
\langle \psi_U|M|\psi_U\rangle=\sum_x \phi(x)\langle x|U|0^n\rangle^2=\sum_x \phi(x)P_U(x)=\mathop{\Exp}_{x\sim P_U}[\phi(x)],
$$
which is precisely $\alpha_\phi$. To this end, we first generalize~\cite{hinsche2022single} in the following theorem. 

\begin{theorem}\label{thm:smallckt_distrib}
    For constant $\alpha\in (0,1)$, there is a family of $n$-qubit circuits consisting of $\{\textsf{Had},\textsf{X},\textsf{CX}\}$ gates of depth $d=(\log n)^{1/\alpha}$ and size $d^2$ that requires $2^{\Omega(d)}$ $\Qstat$ queries to learn the output distribution in the computational basis to error $\leq 0.00125$ in total variational~distance.
\end{theorem}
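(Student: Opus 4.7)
The plan is to apply a padding argument, analogous to Theorem~\ref{thm:smallckt}, directly to the distribution-learning problem. Take the degree-$2$ concept class $\{\ket{\psi_A} : A \in \mathbb{F}_2^{n \times n}\}$ of Theorem~\ref{thm:mainsep}, whose computational-basis output distribution is $P_A(x,b) = \frac{1}{2^n}\indic{b = x^\top A x \bmod 2}$. Fact~\ref{fact:zippel} yields $\TVD(P_A, P_B) = \Pr_{x \sim \01^n}[x^\top A x \neq x^\top B x] \geq \tfrac14$ for every $A \neq B$, so any distribution $q$ within TVD $\eps$ of some $P_A$ identifies $A$ uniquely whenever $\eps < \tfrac{1}{8}$; in particular $\eps = 0.00125$ works with room to spare, since $2 \cdot 0.00125 < \tfrac14$.

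Next, pad with $k(n) := 2^{n^\alpha}$ ancilla qubits initialised to $\ket{0}$, obtaining the state $\ket{\psi_A}\otimes\ket{0}^{\otimes k(n)}$ on $m := n+1+k(n)$ qubits, whose computational-basis distribution is $\tilde P_A(x,b,z) = P_A(x,b)\,\delta_{z,0^{k(n)}}$. Since $m = \Theta(2^{n^\alpha})$, we have $n = \Theta((\log m)^{1/\alpha}) = \Theta(d)$. A standard reversible construction prepares $\ket{\psi_A}$ in depth $O(n) = O(d)$ and size $O(n^2) = O(d^2)$ on the first $n+1$ qubits, while the ancilla qubits are left untouched; this matches the claimed circuit-complexity bounds in terms of the new parameter $m$.

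Suppose, for contradiction, a $\QSQ$ algorithm $\mathcal{B}$ uses $T$ queries of tolerance $\tau = 1/\poly(m)$ to output a description of $q$ with $\TVD(q, \tilde P_A) \leq 0.00125$ with high probability. Any $m$-qubit observable $M$ used by $\mathcal{B}$ decomposes over the ancilla register as $M = \sum_{x,y \in \01^{k(n)}} M^{x,y} \otimes \ketbra{x}{y}$, and since the ancillas are fixed in $\ket{0^{k(n)}}$ we get
$$
\Tr\!\bigl[M\,(\psi_A \otimes \ketbra{0}{0}^{\otimes k(n)})\bigr] = \Tr[M^{0,0}\,\psi_A],
\qquad \|M^{0,0}\| \leq 1.
$$
Hence every query of $\mathcal{B}$ is simulated by a single $\Qstat$ query on $\ket{\psi_A}$ with the same tolerance, and $\mathcal{B}$ induces a $T$-query $\QSQ$ algorithm on the unpadded states. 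Given the returned distribution $q$, marginalise over the ancilla register and perform (query-free) exhaustive search for the unique $A'$ with $\TVD(q, P_{A'}) \leq 0.00125$; such an $A'$ exists and is unique by the triangle-inequality argument above. Outputting $\ket{\psi_{A'}}$ then solves the state-learning problem of Theorem~\ref{thm:mainsep} exactly, forcing $T = 2^{\Omega(n)} = 2^{\Omega(d)}$.

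The main obstacle is verifying that $\ket{\psi_A}$ can genuinely be prepared within the stated gate set, depth, and size bounds; this reduces to computing the degree-$2$ Boolean function $x^\top A x \bmod 2$ reversibly using a quadratic number of elementary gates (the diagonal part by CNOTs from each qubit $i$ with $A_{ii}=1$, and the off-diagonal products handled via short ancilla-based gadgets). Everything else is a clean query-preserving reduction, so the hardness transfers from state learning (Theorem~\ref{thm:mainsep}) to distribution learning without any deterioration in the query count.
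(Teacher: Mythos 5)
Your proof is correct, and the overall skeleton (pad the degree-$2$ example states with $\ket{0}$ ancillas, simulate padded $\Qstat$ queries by unpadded ones via the $M^{0,0}$ block, and transfer the hardness of Theorem~\ref{thm:mainsep}) is exactly what the paper does via Theorem~\ref{thm:smallckt}. Where you genuinely diverge is the step converting a good distribution hypothesis into a good state hypothesis. The paper invokes Fact~\ref{lem:statetodistribution} (via the Hellinger-distance bound of Fact~\ref{fact:traceandhellinger}): a $q$ with $\TVD(q,P_A)\leq \varepsilon^2$ yields a coherent encoding $\ket{\psi_q}$ with $\TRD(\ket{\psi_q},\ket{\psi_A})\leq \sqrt{2}\varepsilon$, which is why the constant $0.00125=\tfrac{1}{2}(0.05)^2$ is tuned to land exactly on the $0.05$ trace-distance threshold of Theorem~\ref{thm:smallckt}. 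You instead use the minimum-distance property $\TVD(P_A,P_B)=\Pr_x[x^\top Ax\neq x^\top Bx]\geq 1/4$ from Fact~\ref{fact:zippel} to decode the unique $A$ by (query-free, possibly inefficient) exhaustive search and output $\ket{\psi_A}$ exactly. Your route is class-specific but buys a stronger statement: hardness holds for any total-variation error below $1/8$, not just $0.00125$ (the paper only recovers a comparable improvement later, in Appendix~\ref{app:yao}). The paper's route is one line and generic --- it works for any class of example states with no separation assumption --- but is tied to the quadratically smaller error parameter. Both are valid; the exhaustive search is harmless since only query complexity is being lower bounded.
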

\begin{proof}
    Consider the padded states $\ket{\psi_A}\otimes\ket{0}^{\otimes k(n)}$ we considered in Theorem~\ref{thm:smallckt} where $\{\ket{\psi_A}=\frac{1}{\sqrt{2^n}}\sum_x \ket{x,x^\top A x}\}_{A}$. 
    Using Fact~\ref{lem:statetodistribution} learning the output distributions of these states below total variational distance $0.00125$ implies the existence of an algorithm learning the states up to trace distance~$0.05$. However, we know that doing so requires $2^{\Omega(d)}$ queries using Theorem~\ref{thm:smallckt}. Thus, learning the output distributions requires at least $2^{\Omega(d)}$ $\Qstat$ queries as well.
\end{proof}

We next prove a generalization of~\cite{sweke23}. Before that, we need the following result.
\begin{theorem}{\cite[Theorem~36]{sweke23}}
\label{thm:sweke}
        There exists a $d = O(n)$ such that for any circuit depth $d' \geq d$ and any distribution $Q$ over $\{0,1\}^n$, we have that
        \begin{align*}
            \Pr_{U \sim \mu_{d'}}\left[\TVD(P_U, Q) \geq \frac{1}{225}\right] \geq 1-O(2^{-n}),
     \end{align*}
            \end{theorem}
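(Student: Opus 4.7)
My plan is a two-step reduction: first prove the tail bound under the Haar measure, then transfer it to the random-circuit measure $\mu_{d'}$ using that depth-$O(n)$ random circuits form an approximate unitary design. Since $Q$ is fixed in the quantifier, no union bound over $Q$ is needed, so the whole argument reduces to a single concentration-of-measure estimate for the Lipschitz functional $f(U) := \TVD(P_U, Q)$ on the unitary group.

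\emph{Step 1 (expectation under Haar).} For $U$ Haar-random on $\mathbf{U}(2^n)$, the coordinate $P_U(x) = |\langle x| U|0^n\rangle|^2$ has Porter--Thomas density, with mean $1/2^n$ and typical fluctuations of order $1/2^n$. By linearity,
\[
\E_{U\sim\mathrm{Haar}}[f(U)] = \tfrac{1}{2}\sum_x \E_U\bigl[|P_U(x) - Q(x)|\bigr],
\]
and for each $x$ the one-dimensional Porter--Thomas density at $1/2^n$ gives $\E_U[|P_U(x)-Q(x)|]\geq c/2^n$ for an absolute constant $c > 0$ uniformly in $Q(x)$ (since the density is bounded away from $0$ in a neighborhood of $1/2^n$). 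Summing over $2^n$ values of $x$ yields $\E_U[f(U)] \geq c_0$ for a universal constant $c_0$ strictly greater than $2/225$; the sharpest form of this is that $\E_U[f(U)] \to 1/e$ uniformly for any $Q$.

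\emph{Step 2 (concentration under Haar).} View $f$ as a function of $|\psi\rangle := U|0^n\rangle$ on the unit sphere $S^{2^{n+1}-1}$. For $|\psi\rangle,|\phi\rangle$ close in $\ell_2$, the triangle inequality gives
\[
|f(\psi) - f(\phi)| \le \tfrac{1}{2}\sum_x \bigl||\langle x|\psi\rangle|^2 - |\langle x|\phi\rangle|^2\bigr| \le \bigl\||\psi\rangle - |\phi\rangle\bigr\|_2,
\]
after expanding the difference of squares and applying Cauchy--Schwarz. Thus $f$ is $1$-Lipschitz, and Levy's lemma (Theorem~\ref{thm:levy}) gives $\Pr_U[|f(U) - \E[f(U)]| \geq \varepsilon] \leq 2\exp(-C \cdot 2^n\varepsilon^2)$. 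Taking $\varepsilon = c_0 - 1/225$ (a positive constant) yields $\Pr_{U\sim \mathrm{Haar}}[f(U) < 1/225] \leq \exp(-\Omega(2^n))$, which is comfortably $O(2^{-n})$.

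\emph{Step 3 (transfer to $\mu_{d'}$) and obstacle.} Finally I would replace the Haar measure by $\mu_{d'}$ for some $d' = O(n)$, using that depth-$O(n)$ brickwork random circuits form $\delta$-approximate $k$-designs with $\delta = 2^{-\Omega(n)}$ for $k = \poly(n)$ (Brand\~ao--Harrow--Horodecki and subsequent refinements). The main obstacle is that $f(U)$ is \emph{not} a low-degree polynomial in the entries of $U$, so one cannot immediately apply the design property to the indicator of the bad event. The cleanest fix is to approximate the absolute values $|P_U(x) - Q(x)|$ by smooth low-degree polynomials, whose expectation under a design matches the Haar expectation up to an additive $2^{-\Omega(n)}$; alternatively, one can appeal to design-based Lipschitz-concentration inequalities that mimic Levy's lemma. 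Either route requires balancing the polynomial degree against the design precision so that the final error is $O(2^{-n})$; this quantitative bookkeeping, rather than any conceptual difficulty, is the real technical heart of the argument.
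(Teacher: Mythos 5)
First, note that the paper does not prove this statement at all: it is imported verbatim as \cite[Theorem~36]{sweke23}, so the only ``proof'' in the paper is the citation. Your Steps 1 and 2 are a correct proof of the \emph{Haar-measure} version of the claim: the Porter--Thomas computation gives $\E_{U\sim\mathrm{Haar}}[\TVD(P_U,Q)] \geq c_0$ for a universal constant $c_0 > 1/225$ uniformly in $Q$, the functional $U \mapsto \TVD(P_U,Q)$ is $1$-Lipschitz in $\|U\ket{0^n}-V\ket{0^n}\|_2$, and Levy's lemma then gives a failure probability of $\exp(-\Omega(2^n))$, far better than the $O(2^{-n})$ demanded.

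The genuine gap is Step 3, and it is not ``quantitative bookkeeping'' --- it is where the approach breaks. Depth-$O(n)$ random circuits are only known (and are only used elsewhere in this paper, e.g.\ in the proof of Theorem~\ref{thm:approx_design}) to be $2^{-n}$-approximate \emph{2-designs}. To transfer a Levy-type tail bound for a constant deviation with failure probability $O(2^{-n})$ via the moment method, you need to control moments of order $k = \Omega(n)$ of the $2^n$-term sum $\sum_x |P_U(x)-Q(x)|$, i.e.\ an approximate $k$-design with $k=\Omega(n)$ and additive error far below $2^{-n}$; the Brand\~ao--Harrow--Horodecki-type results you invoke give such designs only at depth growing polynomially in $k$, not at depth $O(n)$. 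The polynomial-approximation fix has the same defect: approximating $|\cdot|$ well enough that the total additive error over $2^n$ coordinates stays $\ll 1$ forces the degree, and hence the required design order, to grow with $n$. The actual proof of \cite[Theorem~36]{sweke23} sidesteps Haar concentration entirely and argues directly from the second-moment (2-design) property via a Paley--Zygmund-style anticoncentration bound on the coordinates $P_U(x)$; this is also why the constant is the second-moment-flavored $1/225$ rather than something close to the Haar value $1/e$. So your argument as written establishes the theorem for Haar-random $U$ but does not reach the stated conclusion for $U\sim\mu_{d'}$ with $d'=O(n)$.
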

    where $\mu_{d'}$ indicates the uniform distribution over circuits of depth $d'$.

    \begin{theorem}
     Let $\A$ be an algorithm that makes $\Qstat(\tau)$ queries and with high probability learns the output distributions of $O(n)$-depth random circuits, to error $\leq 1/225$ in total variational distance, then $\A$ must make $\Omega(\tau^2 \cdot 2^n)$ such many queries.
 \end{theorem}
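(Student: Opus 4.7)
The plan is to mimic the proof of Theorem~\ref{thm:approx_design}: combine a variance bound coming from the $2$-design property of depth-$O(n)$ random circuits with the learning-to-decision reduction, using a diagonal observable built from Fact~\ref{Lemma:norms} in place of the spectral projector $\Pi_+$ of Lemma~\ref{Lemma:LearningToDeciding}. I set $\sigma = \id/2^n$, so that $P_\sigma$ is the uniform distribution on $\{0,1\}^n$, and let $\Cc = \{\psi_U = U\ketbra{0^n}{0^n}U^\dagger\}$ range over the uniform measure $\mu_d$ on depth-$d = O(n)$ random circuits.

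The first step is the variance bound. Depth-$O(n)$ random circuits form $O(2^{-n})$-approximate unitary $2$-designs~\cite{Harrow_Low_2009}, so the calculation in the proof of Theorem~\ref{thm:approx_design} carries over verbatim and gives $\mathop{\Var}_{U \sim \mu_d}[\Tr(M\psi_U)] = O(2^{-n})$ for every observable $M$ with $\|M\| \leq 1$. Applying the variance part of Theorem~\ref{thm:mainlowerboundonqsq} to $\mu_d$ then yields $\QSDA_\tau(\Cc, \sigma) = \Omega(\tau^2 \cdot 2^n)$, hence $\QSD_\tau^\delta(\Cc, \sigma) = \Omega(\tau^2 \cdot 2^n)$ for any constant $\delta < 1/2$.

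The second step is a distribution-learning analogue of Lemma~\ref{Lemma:LearningToDeciding}. Given the unknown input state $\rho$, run $\A$ to obtain $P'$; pick $P_\nu \in \{P_U\}_{U \in \Cc}$ minimizing $\TVD(P', P_\nu)$; and issue one further $\Qstat(\tau)$ query with the diagonal observable
$$
M_\nu = \sum_{x \in \{0,1\}^n} \mathrm{sign}\bigl(P_\nu(x) - 2^{-n}\bigr)\ketbra{x}{x},
$$
which satisfies $\|M_\nu\| \leq 1$. Since $M_\nu$ is diagonal, $\Tr(M_\nu(\rho' - \sigma)) = \sum_x \mathrm{sign}(P_\nu(x) - 2^{-n})(P_{\rho'}(x) - 2^{-n})$ for any state $\rho'$, which by Fact~\ref{Lemma:norms} equals $2\TVD(P_\nu, P_\sigma)$ when $P_{\rho'} = P_\nu$. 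Combining this with the TV triangle inequality and the bound $\TVD(P_U, P_\nu) \leq 2\eps$ (which holds because $P_\nu$ minimizes $\TVD(P', \cdot)$ among $\{P_U\}_U$, so $\TVD(P_\nu, P') \leq \eps$) yields $|\Tr(M_\nu(\psi_U - \sigma))| \geq 2\TVD(P_U, P_\sigma) - 8\eps$ whenever $\A$ succeeds with TV error $\eps$. Theorem~\ref{thm:sweke} guarantees $\TVD(P_U, P_\sigma) \geq c$ for a $1 - O(2^{-n})$ fraction of $U \sim \mu_d$; provided $c > 4\eps + \tau$, the extra query distinguishes $\rho \in \Cc$ from $\rho = \sigma$ with high probability. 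Removing the $O(2^{-n})$ exceptional $U$'s from $\Cc$ perturbs the first two moments of $\mu_d$ by $O(2^{-n})$ in trace norm, leaving Step~1 intact, and the reduction just sketched transfers the $\Omega(\tau^2 \cdot 2^n)$ lower bound to $\A$.

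The main obstacle is matching the constants. As literally stated, Theorem~\ref{thm:sweke} provides only $c = 1/225$, whereas Step~2 needs $c > 4\eps + \tau$, which is violated at the nominal $\eps = 1/225$. This is handled by observing that the Levy-concentration argument underlying Theorem~\ref{thm:sweke} actually yields any positive constant $c$ strictly below the Haar-expected TV distance $\mathop{\E}_{U \sim \mathrm{Haar}}[\TVD(P_U, P_\sigma)]$, which is a positive constant bounded away from zero; enlarging $c$ above $4 \cdot 1/225$ makes the reduction go through for every $\eps \leq 1/225$. Any such constant-level strengthening leaves the asymptotic $\Omega(\tau^2 \cdot 2^n)$ bound intact.
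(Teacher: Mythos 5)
Your first step (the $O(2^{-n})$ variance bound via the approximate $2$-design property) is exactly what the paper does. The gap is in your second step, and you have correctly located it yourself: your learning-to-deciding reduction needs $\TVD(P_U, P_\sigma) > 4\eps + \tau$ for the extra diagonal query to separate $\psi_U$ from $\id/2^n$, but the only quantitative input available, Theorem~\ref{thm:sweke}, gives $\TVD(P_U, Q) \geq 1/225$ -- the same constant as the error threshold $\eps = 1/225$ in the statement you are trying to prove. So as written your argument establishes the theorem only for $\eps < (1/225 - \tau)/4$, not for $\eps \leq 1/225$. Your proposed repair -- that the concentration argument behind Theorem~\ref{thm:sweke} "actually yields" any constant below the Haar-expected TV distance -- is an unverified strengthening of an external result; for $O(2^{-n})$-approximate $2$-designs (as opposed to exact Haar states) the achievable constant is precisely what that paper computed, and you cannot simply assert it improves. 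A proof that rests on a lemma you have not proved, with constants that matter, is not a proof of the stated theorem.

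The paper avoids this entirely by taking a different route for the second step. Rather than reducing to the many-vs-one decision problem against $\sigma = \id/2^n$, it runs a direct adversary-plus-Yao argument: the adversary answers every query with $\Tr[M_i\,\E_\rho[\rho]]$, Chebyshev plus a union bound shows a $1 - O(t\tau^{-2}2^{-n})$ fraction of circuits are consistent with all $t$ responses, a deterministic algorithm must therefore output one fixed distribution $Q$ on that whole set, and Theorem~\ref{thm:sweke} is then applied to that specific $Q$ -- exploiting its "for any distribution $Q$" quantifier -- to conclude that almost all consistent circuits are $\geq 1/225$ far from $Q$. The success and failure criteria are then compared at the same constant $1/225$, so no slack is needed. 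If you want to salvage your decision-problem route, you would have to either prove the stronger concentration statement you invoke or restructure the reduction so that the algorithm's own output (rather than the fixed reference $\sigma$) plays the role of the distribution that Theorem~\ref{thm:sweke} is applied to, which is essentially what the paper's Yao argument accomplishes.
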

\begin{proof}
     This is a generalization of~\cite[Theorem~6]{sweke23} and follows by a similar analysis to their lower bound. For $d \geq 3.2(2+\ln 2)n+\ln n$ we have that the uniform distribution over depth $d$ random circuits is a $2^{-n}$ approximate 2-design \cite{sweke23}. Like we saw in the proof of Theorem~\ref{thm:approx_design}, for every observable $\Vert M \Vert \leq 1$, we have that $\Var_{\rho \sim \Cc}(\Tr[M\rho]) = O(2^{-n})$. We proceed with the same adversarial lower bound. Upon making a $\Qstat$ query with observable $M$, the adversary responses with $\E_{\rho \sim \Cc}[\Tr[M \rho]]$. Using Chebyshev's inequality,
    \begin{align}
    \label{eq:concentrationvariance}
        \Pr_{\rho \sim \Cc}\left[\left\vert \Tr[M\rho] - \Tr[M\mathop{\Exp}_{\rho \sim \Cc}[\rho]] \right\vert > \tau \right] = \tau^{-2} \cdot 2^{-n} .
    \end{align}
    While the proof could continue using Theorem~\ref{thm:mainlowerboundonqsq} (by reducing to the many-vs-one decision problem), it is more direct to note that the above inequality implies that every deterministic algorithm cannot identify the correct $\rho \in \Cc$  for a  large fraction of states in $\Cc$. 
    
    In particular,  say $\A$ is a deterministic algorithm that outputs an estimate of a distribution $Q$ such that $\TVD(P_\psi, Q) < 1/225$ and uses at most $t$ $\Qstat(\tau)$  queries. Using Eq.~\eqref{eq:concentrationvariance}, there is a fraction of $\Cc'$ of measure at least $1-O(t\cdot \tau^{-2}\cdot  2^{-n})$ that are consistent with $\Tr[M_i \E_{\rho \sim \Cc}[\rho]]$ for the $\Qstat(\tau)$ queries $\{M_1,\ldots,M_t\}$ made by $\A$. Since $\A$ is deterministic, it must output the same distribution $Q$ for all $\rho \in \Cc'$. We now use Theorem~\ref{thm:sweke} to claim that there is a large set of states that are both consistent with $\Tr[M_i \E_{\rho \sim \Cc}[\rho]]$ and far from $Q$ in total variational distance.
    \begin{align*}
        \Pr_{\rho \sim \Cc}\left[\rho \in \Cc' \land \TVD(P_\rho, Q) \geq 1/225\right] & = 1-\Pr_{\rho \sim \Cc}\left[\rho \notin \Cc' \lor \TVD(P_\rho, Q) < 1/225\right] \geq 1 - O(t\cdot \tau^{-2}\cdot 2^{-n})\ ,
    \end{align*}
    where the inequality follows from the union bound, Theorem~\ref{thm:sweke}, and the concentration of measure shown above. Thus, there is a set $\Cc''$ of measure at least $1-O(t\cdot \tau^{-2}\cdot  2^{-n})$ that is both consistent with $\Tr[M_i \E_{\rho \sim \Cc}[\rho]]$ for all queries $M_i$ and also $\TVD(P_\rho, Q) \geq 1/225$ for all $\rho \in \Cc''$. Upon the input of a $\rho\in\Cc''$, $\A$ fails to provide a distribution $Q$ such that $\TVD(P_\rho, Q) < 1/225$. For any constant success probability $1-\delta$ there is then some $n_\delta$ such that for $n \geq n_\delta$ $\A$ must fail on a set of measure strictly greater than $\delta$. Using Yao's Principle, thus any randomized algorithm using $t\in\poly(n)$ $\Qstat$ queries must fail with probability strictly greater than $\delta/2$ for $n$ sufficiently large. Thus, $\A$ must use $t= \Omega(\tau^{2}\cdot 2^{n})$ $\Qstat$ queries.
\end{proof}

\bibliographystyle{alpha}
\bibliography{refs}

\appendix

\section{Upper and Lower Bounds on Search Problems}\label{app:qssd}
Here we use $\QSDA$ to extend to statistical dimension to search problems. First, a definition of a general search problem over quantum states:

\begin{definition}[Quantum Search Problem]
    Let $\Cc$ be a closed set of quantum states and $\mathcal{F}$ some set. Then a decision problem $\Z$ is a mapping $\Z : \Cc \rightarrow 2^{\mathcal{F}}$\footnote{We implicitly assume that $\Z$ maps concepts to measurable subsets of $\mathcal{F}$ with respect to some $\sigma$-algebra on $\mathcal{F}$.}. An algorithm $\A$ is said to solve the search problem if upon the input of a state $\rho \in \Cc$ it returns $f \in \Z(\rho)$.
\end{definition}

Learning quantum states can be cast in this framework using the mapping $\Z(\rho) = \{\sigma\ \vert \ \TRD(\rho,\sigma) \leq \eps\}$. In fact, all of the lower bounds for learning in the main text could be recast in this framework. However, this was over kill for our goals in the main text. Lastly, much more than learning can be cast in this framework. For example, solving the hidden subgroup problem is a search problem.

We now define a quantity we dub the \textit{quantum search statistical dimension}, or $\QSSD$. This is a natural quantization of that given in \cite{Feldman16}, using $\QSDA$ which we just developed.

\begin{definition}[Quantum Search Statistical Dimension]
    Let $\Z$ be a search problem over a closed set of states $\Cc$ and a solution set $\mathcal{F}$. And by $\mathcal{S}^\mathcal{F}$ we denote the space of probability distributions over $\mathcal{F}$. Then, for success probability $\alpha$ the quantum search statistical dimension is:
    \begin{align}
        \QSSD_\tau^\alpha(\Z) = \sup_{\sigma} \inf_{\mu \in \mathcal{S}^{\mathcal{F}}} \QSDA(\Cc \backslash \Z_\alpha(\mu),\sigma)\ ,
    \end{align}
    where $\Z_\alpha(\mu) := \{\rho\in\Cc \ \vert \ \mu(\Z(\rho) \geq \alpha \}$.
\end{definition}

Think of $\mu$ as representing a distribution of solutions an algorithm outputs. Thus, $\QSSD$ represents the hardness of distinguishing the concept class from a reference state given that the algorithm responds with a solution $f$ drawn from $\mu$ upon receiving queries consistent with $\sigma$. This is formalized in the following theorem, whose proof quantizes \cite[Theorem~4.9]{Feldman16}.

\begin{theorem}\label{thm:qssd_lb}[Searching is as hard as deciding]
    For any quantum search problem $\Z$, $\tau >0$, and success probabilities $1-\delta > \alpha > 0$, we have that
    \begin{align}
        \QSQ_\tau^\delta(\Z) \geq \left(1 - \frac{\delta}{1-\alpha}\right) \cdot \QSSD_\tau^\alpha(\Z)\ .
    \end{align}
\end{theorem}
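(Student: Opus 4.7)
The plan is to generalize the ``learning-as-deciding'' argument of Lemma~\ref{Lemma:LearningToDeciding} to arbitrary search problems and then invoke the randomized-cover characterization of Lemma~\ref{Lemma:RandomCover}. Conceptually this is the quantum analogue of Feldman's characterization of search problems via randomized statistical dimension.

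First I fix a reference state $\sigma$ approaching the supremum in the definition of $\QSSD$, and let $\A$ be any $\QSQ_\tau^\delta$ algorithm for $\Z$ using $q = \QSQ_\tau^\delta(\Z)$ queries. I define a canonical distribution $\mu_\sigma \in \mathcal{S}^{\mathcal{F}}$ by the following fictitious process: run $\A$ using its own internal randomness, but answer every query $M_i$ with the adversarial value $\Tr(M_i \sigma)$; the induced distribution on outputs is $\mu_\sigma$. Crucially, $\mu_\sigma$ depends only on $\A$ and $\sigma$ and is well-defined independently of any particular input $\rho \in \Cc$.

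Next, for every $\rho \in \Cc \setminus \Z_\alpha(\mu_\sigma)$ (so that $\mu_\sigma(\Z(\rho)) < \alpha$) I lower bound the probability $p_\rho$, over $\A$'s coins, that at least one of the queries $M_1,\dots,M_q$ generated in the fictitious execution distinguishes $\rho$ from $\sigma$ in the sense $|\Tr(M_i(\rho-\sigma))| > \tau$. To do this I couple the fictitious execution with a real execution of $\A$ on input $\rho$ against the valid oracle strategy ``return $\Tr(M_i \sigma)$ if this is a legal $\Qstat(\tau)$ answer for $\rho$, and otherwise return any legal answer''. On the event that no ``fallback'' answer is ever needed, which has probability exactly $1-p_\rho$, the two executions agree coin-by-coin. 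A short conditional-probability calculation combining the $(1-\delta)$-success guarantee of $\A$, the bound $\mu_\sigma(\Z(\rho)) < \alpha$, and the hypothesis $1-\delta > \alpha$ then yields $p_\rho \geq 1 - \delta/(1-\alpha)$. Picking one of the $q$ queries of $\A$ uniformly at random produces, via Markov's inequality, a random observable $M$ satisfying $\Pr[|\Tr(M(\rho-\sigma))|>\tau] \geq (1-\delta/(1-\alpha))/q$ for every $\rho \in \Cc \setminus \Z_\alpha(\mu_\sigma)$. Lemma~\ref{Lemma:RandomCover} then gives $\QSDA_\tau(\Cc \setminus \Z_\alpha(\mu_\sigma),\sigma) \leq q/(1-\delta/(1-\alpha))$, and since $\mu_\sigma \in \mathcal{S}^{\mathcal{F}}$ is one specific choice the infimum in the definition of $\QSSD$ is at most this value; taking a supremum over $\sigma$ produces
\[
\QSQ_\tau^\delta(\Z) \;\geq\; \Big(1 - \tfrac{\delta}{1-\alpha}\Big) \cdot \QSSD_\tau^\alpha(\Z).
\]

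The main obstacle I anticipate is the coupling-based lower bound $p_\rho \geq 1 - \delta/(1-\alpha)$ in the second step. The subtlety is that after the first ``invalid-response'' query the real and fictitious executions can issue entirely different subsequent queries, so one has to carefully define the ``no-divergence'' event $R'$ on $\A$'s coins in terms of the fictitious query sequence and verify that on $R'$ the two executions coincide. The multiplicative factor $1/(1-\alpha)$, rather than a weaker additive bound of the form $1 - \delta - \alpha$, arises from tracking the conditional success probability $\Pr[\text{real success}\mid R']$ in place of the unconditional probability, and controlling it via $\mu_\sigma(\Z(\rho))/(1-p_\rho)$ together with a rearrangement that exploits $\mu_\sigma(\Z(\rho)) < \alpha$. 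Once the coupling is set up cleanly, the rest of the argument is a direct quantum translation of the classical randomized-dimension characterization in \cite[Theorem~4.9]{Feldman16}.
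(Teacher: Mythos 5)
Your proposal follows the paper's proof essentially step for step: the same choice of a reference state $\sigma$ near the supremum, the same canonical output distribution $\mu_\sigma$ induced by answering every query with $\Tr(M\sigma)$, the same quantity $p_\rho$, and the same final application of Lemma~\ref{Lemma:RandomCover} to the uniform mixture of the algorithm's queries. Your explicit treatment of the coupling between the fictitious and real executions is, if anything, more careful than the paper's one-line version of that step.

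The one genuine issue is the derivation of the multiplicative constant, and it sits exactly where you flagged the difficulty. The calculation you describe --- bounding $\Pr[\text{success}\mid R']$ by $\mu_\sigma(\Z(\rho))/(1-p_\rho) < \alpha/(1-p_\rho)$ and multiplying by $\Pr[R']=1-p_\rho$ --- gives $\Pr[\text{fail}] \geq (1-p_\rho)\bigl(1-\alpha/(1-p_\rho)\bigr) = 1-p_\rho-\alpha$, hence only the additive bound $p_\rho \geq 1-\delta-\alpha$. This is strictly weaker than $1-\delta/(1-\alpha)$ whenever $0<\alpha<1-\delta$ (the difference is $\alpha(1-\delta-\alpha)>0$), and no rearrangement of $1-p_\rho-\alpha\leq\delta$ recovers the multiplicative form, since $(1-p_\rho)(1-\alpha)\leq\delta$ is a strictly stronger inequality. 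The paper obtains the stated constant by asserting that, conditioned on all $\sigma$-answers being valid, the output is distributed according to $\mu_\sigma$ and so lands in $\Z(\rho)$ with probability $<\alpha$, which gives $(1-p_\rho)(1-\alpha)\leq\delta$ directly; your more careful conditional-probability route, which correctly observes that the conditional law is $\mu_\sigma$ restricted to $R'$ rather than $\mu_\sigma$ itself, pays for that care with the weaker constant. As written, your argument therefore proves the theorem only with the factor $1-\delta-\alpha$ in place of $1-\delta/(1-\alpha)$; to match the statement you must either justify the paper's conditional claim or restate the bound with the additive factor.
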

\begin{proof}
 We show that the existence of an algorithm solving the search problem also implies the existence of a cover. Say that $\A$ solves $\Z$ with probability at least $1-\delta$ using at most $q$ queries and let $d = \QSSD_\tau^\alpha(\Z)$. From the definition of $\QSSD_\tau^\alpha$ there is a $\sigma$ such that $\QSDA(\Cc \backslash \Z_\alpha(\mu),\sigma)$ is arbitrarily close to $d$ for any distribution $\mu$ on the solution space. Further, assume that $\A$ receives $\Tr[M\sigma]$ upon querying $M$ whenever this is a valid response. Let $f$ be a random variable (with distribution $\mu$) corresponding to the output of $\A$ in such a scenario upon receiving responses $\{\Tr[M_i \sigma]\}_i$ and let
    \begin{align}
        p_\rho = \Pr_{\A}\left[\exists i\in [q], \ \vert \Tr[M_i(\rho-\sigma)] \vert > \tau\right],
    \end{align}
    which is the probability over the randomness of $\A$ that it can distinguish $\rho$ from $\sigma$. Say that $\rho \notin \Z_\alpha(\mu)$. If $\Tr[M_i\sigma]$ was a valid answer for all queries, then $\A$ draws a solution according to $\mu$. However, by construction this solution is in $\Z(\rho)$ with probability strictly smaller than $\alpha$. However, we assume that $\A$ fails with probability at most $\delta$. Thus, we have that $(1-p_\rho)(1-\alpha) \leq \delta$, implying that $p_\rho \geq 1 - \frac{\delta}{1-\alpha}$.     Let $\hat{M}$ be a random variable with a pdf constructed by taking a uniform average of the distributions of the queries $M_i$ made by the algorithm (which are random variables) assuming that all queries are given responses $\Tr[M\sigma]$\footnote{Thus, the distribution over queries is not dependent on the input state $\rho$.}. Then we have that
    \begin{align}
        \forall \rho \in \Cc, \ \Pr[\vert \Tr[\hat{M}(\rho-\sigma) \vert > \tau] \geq \frac{p_\rho}{q} \geq \frac{1-\frac{\delta}{1-\alpha}}{q}\ .
    \end{align}
    By Lemma~\ref{Lemma:RandomCover} this implies that $\QSDA_\tau(\Cc\backslash\Z_\alpha(\mu),\sigma) \leq \frac{q}{1-\frac{\delta}{1-\alpha}}$ proving the theorem statement.
\end{proof}

Remarkably, $\QSSD$ also \textit{upper bounds} query complexity, which we prove below. Our proof quantizes the proof by Feldman~\cite[Theorem~4.10]{Feldman16}. 
\begin{theorem}\label{thm:qssd_ub}[Searching is not much harder than deciding]
   For any quantum search problem $\Z$, $\tau > 0$, $0< \delta < 1$, and $0 < \beta, \alpha < 1 $ such that $1-\delta= \alpha-\beta$, we have that:
    \begin{align}
        \QSQ_{\tau / 3}^{\delta}(\Z) = O\left(\QSSD_\tau^\alpha(\Z) \cdot \frac{n}{\tau^2}\cdot \log\left(\frac{n}{\tau^2\cdot \beta}\right)\right)\ .
    \end{align}
\end{theorem}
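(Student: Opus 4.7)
The plan is to quantize Feldman's classical multiplicative-weights algorithm for general search problems (Theorem~4.10 of \cite{Feldman16}). Let $d = \QSSD_\tau^\alpha(\Z)$. The algorithm maintains a positive weight $w_t(\rho)$ on each $\rho$ in a suitable discretization of $\Cc$, with normalized distribution $\mu_t$. In each round $t$ it first attempts to produce an answer by forming a candidate solution distribution $\nu_t$ --- draw $\rho \sim \mu_t$ and sample a solution from any fixed distribution supported on $\Z(\rho)$ --- and halts if $\mu_t(\Z_\alpha(\nu_t)) \geq 1-\beta/2$, returning a sample from $\nu_t$. Otherwise, the definition of $\QSSD_\tau^\alpha$ together with the Sion-minimax argument of Lemma~\ref{Lemma:RandomCover} applied to $\Cc \setminus \Z_\alpha(\nu_t)$ furnishes a reference state $\sigma_t$ and a distribution over $\Qstat(\tau)$ observables that $\tau$-distinguish any $\rho \in \Cc \setminus \Z_\alpha(\nu_t)$ from $\sigma_t$ with probability at least $1/d$. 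The algorithm samples such a query $M_t$, reads $R_t$ from $\Qstat(\tau/3)$, and updates
\[
w_{t+1}(\rho) \leftarrow w_t(\rho)\cdot (1-\gamma)^{\mathbf{1}[|\Tr(M_t\rho) - R_t| > 2\tau/3]}
\]
for a suitable constant $\gamma \in (0,1)$.

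Two invariants drive the analysis. First, by the $\Qstat(\tau/3)$ guarantee, $|\Tr(M_t\rho^\star) - R_t| \leq \tau/3 < 2\tau/3$, so the true concept $\rho^\star$ is \emph{never} downweighted and retains its initial mass throughout. Second, in every non-terminating round an $\Omega(\beta)$ fraction of $\mu_t$-mass sits in $\Cc\setminus\Z_\alpha(\nu_t)$, and by the $1/d$-distinguishing property an expected $\Omega(\beta/d)$ fraction of the total weight takes the $(1-\gamma)$ hit. A standard potential-function bound on $\Phi_t = \log(W_t / w_t(\rho^\star))$ then yields $T = O(d\log(N)/\beta)$ distinguishing rounds, where $N$ is the size of the discretization. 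Amplifying the per-query $1/d$ distinguishing guarantee into a reliable downweighting (by repetition and majority voting, union-bounded over concepts and rounds) costs an $O(\log(N T d / \beta))$ multiplicative factor; with $\log N = O(n/\tau^2)$ from the covering argument sketched below, this yields the claimed $\QSSD_\tau^\alpha(\Z) \cdot (n/\tau^2) \cdot \log(n/(\tau^2\beta))$ bound. On termination, $\mu_T(\Z_\alpha(\nu_T)) \geq 1-\beta/2$, and since $\rho^\star$ still carries its (comparatively large) initial weight, $\rho^\star \in \Z_\alpha(\nu_T)$ with probability $\geq 1-\beta$; sampling from $\nu_T$ then succeeds with probability $\geq \alpha(1-\beta) \geq \alpha - \beta = 1-\delta$.

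The main obstacle is obtaining the correct $n/\tau^2$ factor in $\log N$ rather than the $\exp(\Theta(2^n))$ one would get from a naive trace-distance $\tau$-net of all $n$-qubit density operators. The resolution is that the algorithm only interacts with $\Cc$ through the approximate response vectors $(\Tr(M_t\rho))_{t \leq T}$, so it suffices to discretize the induced image of $\Cc$ in $[-1,1]^T$ to $\ell_\infty$-accuracy $\tau/6$; combined with a dimension-reduction/sketching argument exploiting the bounded norm of the response functionals, this gives $\log N = O(n/\tau^2)$. A secondary subtlety is that $\sigma_t$, $\nu_t$ and the query distribution all depend adaptively on past responses, but this adaptivity is absorbed by a single union bound over the $T$ rounds inside the boosting factor above.
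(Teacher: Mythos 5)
Your proposal has a genuine gap at the exact point where the theorem's content lives: the claim that a multiplicative-weights scheme over a discretization of $\Cc$ can be run with $\log N = O(n/\tau^2)$. For a general quantum search problem, $\Cc$ is an arbitrary (closed) set of $n$-qubit density matrices, and any net of $\Cc$ fine enough that the surrogate concept answers all adaptively chosen $\Qstat(\tau/3)$ queries like the true one has size doubly exponential in $n$ in the worst case. Your proposed fix --- discretizing the image of $\Cc$ under the response map into $[-1,1]^T$ --- is circular, since the cell count is $(O(1/\tau))^T$ and hence $\log N = \Theta(T\log(1/\tau))$, which plugged back into $T = O(d\log N/\beta)$ yields nothing; and the ``dimension-reduction/sketching argument'' you invoke to get $\log N = O(n/\tau^2)$ is precisely the statement that needs proving. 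The paper avoids this entirely: it never discretizes $\Cc$. Instead it maintains a single hypothesis density matrix $\sigma_t$ updated by the online learning of quantum states algorithm of Aaronson et al., whose regret bound $O(\sqrt{Tn})$ against per-round losses of at least $\tau/6$ (obtained by converting a distinguishing observable $M_i$ into the POVM element $E_t = (M_i+\id)/2$) forces the number of update rounds to satisfy $T\tau/6 \leq O(\sqrt{Tn})$, i.e., $T = O(n/\tau^2)$. The $n/\tau^2$ factor is thus a consequence of a matrix-multiplicative-weights regret bound over density operators --- the correct quantization of Feldman's weights over the probability simplex --- not of a cardinality bound on $\Cc$. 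In each of the $T$ rounds the paper draws $d\ln(T/\beta)$ queries from the distribution guaranteed by $\QSDA(\Cc\setminus\Z_\alpha(\mu_t),\sigma_t)\leq d$, giving the stated $O(d\cdot (n/\tau^2)\cdot\log(n/(\tau^2\beta)))$ total.

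Two further problems, independent of the discretization: (i) your round count $T = O(d\log N/\beta)$ places $1/\beta$ outside the logarithm, so even granting $\log N = O(n/\tau^2)$ you would prove a weaker bound than claimed; (ii) your termination argument does not go through --- when you halt because $\mu_T(\Z_\alpha(\nu_T))\geq 1-\beta/2$, the true concept retains only its initial weight of roughly $1/N \ll \beta/2$ of the total mass, so nothing prevents it from lying in the residual $\beta/2$ fraction outside $\Z_\alpha(\nu_T)$, and the claimed success probability $\alpha-\beta$ does not follow. In the paper's scheme the failure events are instead ``no distinguishing query is drawn in some round,'' each of probability at most $\beta/T$, which union-bounds to $\beta$ and combines cleanly with the $\alpha$ success guarantee of $\Z_\alpha(\mu_T)$.
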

\begin{proof}
    Let $\rho\in\Cc$ be the true state. The online learning algorithm of \cite{aaronson2018online} states that one can update a reference state $\sigma_t$ such that the total regret is upper bounded by $2L\sqrt{(2 \ln 2) Tn}$, where $L$ is the Lipschitz constant of the loss function and $T$ is the total number of updates. Let $E_t$ be the POVM element the online learning algorithm receives at step $t$. Then choosing the loss function $\ell_t(\Tr[E_t \sigma_t]) = \vert \Tr[E_t (\sigma_t - \rho)]\vert$ results in $L=2$ and the total regret being exactly $\sum_{t=1}^T \vert \Tr[E_t (\sigma_t - \rho)]\vert$. We now show that the definition of $\QSSD$ allows $\A$ to find a series of POVM elements $\{E_t\}_{t=1}^T$ such that $\sigma_t$ is updated in a manner sufficient to solve the search problem.

    Let $\sigma_0 = \frac{1}{2^n}\id$ and $d = \QSSD_\tau^\beta$. By definition of $\QSSD$, there exists a measure $\mu_0$ on $\mathcal{F}$ such that $\QSDA(\Cc \backslash \Z_\delta(\mu_0), \sigma_0) \leq d$. By definition of $\QSDA$, there then exists a distribution $\eta$ over observables such that 
    \begin{align}
        \Pr_{M \sim \eta}\left[\vert \Tr[M(\rho' - \sigma_0)]\vert > \tau \right] \geq \frac{1}{d} 
    \end{align}
    for all $\rho' \in \Cc \backslash \Z_\delta(\mu_0)$. Assume now that $\rho \in \Cc \backslash \Z_\delta(\mu_0)$. Draw $d\cdot \ln(T/\delta)$ samples from $\eta$, denoted by $\{M_i\}_i$. Query each $M_i$ with tolerance $\tau / 3$ and let $v_i$ be the response. Let $i$ be such that $\vert \Tr[M_i(\rho-v_i)]\vert > 2\tau/3$. This will hold if $\Tr[M_i(\rho-\sigma_0) > \tau]$, which fails to occurs with probability at most $(1-\frac{1}{d})^{d\cdot \ln(T/\beta)} \leq \beta/T$. Define $E_0 = \frac{M_i + \id}{2}$. Then we have
    \begin{align}
        \vert \Tr[E_0 (\rho-\sigma_0)]\vert & = \frac{1}{2}\vert \Tr[M_i (\rho -\sigma_0)]\vert = \frac{1}{2}\vert (\Tr[M_i\rho]-v_i) +(v_i- \Tr[M_i\sigma_0)] > \frac{\tau}{6}\ .
    \end{align}
    Now update $\sigma_0$ using the online learning algorithm and POVM element $E_0$. Continue this process by using the definition of $\QSSD$ to find some distribution over observables and solutions, sample that distribution, and update $\sigma_t$ with $E_t$. The process stops when an $M_i$ distinguishing $\rho$ from $\sigma_t$ is not drawn, which occurs with some small probability of error or if $\rho \in \Z_\delta(\mu_t)$.

    In each step the online learning algorithm incurs a loss of at least $\tau / 6$, and thus a total regret of at least $T\cdot \tau / 6$. However, since the total regret is upper bounded by $4\sqrt{(2\log 2) Tn}$, this implies that $T \leq (1152 \ln 2)n/{\tau^2}$.     When $\A$ outputs a solution, if $\rho \in \Z_\alpha(\mu_T)$ then $\A$ succeeds with probability at least $\alpha$. If $\rho \notin \Z_\alpha(\mu_T)$ then the algorithm must have missed an update at some step. This occurs with probability at most $\beta$. Thus, the algorithm succeeds with probability at least $\alpha-\beta = 1-\delta$.
\end{proof}
It is worth noting that this result holds in an information-theoretic sense. The operators used to achieved such a query complexity may not efficiently constructed. Further, the theorem is not constructive. Nevertheless, this shows that $\QSSD$ \emph{characterizes} the complexity of search~problems.

\section{Alternative Proofs using Variance and Yao's Principle}\label{app:yao}

Our main separation between $\QSQ$ and $\QPAC$ involved using the variance method to lower bound $\QSDA$, which in turn lower bounds the learning problem. However, again by upper bounding the variance we can directly bound the learning complexity rather than going through an intermediate step involving decision problems. While this does not improve the lower bound on query complexity, it improves the constant below which learning is hard. Further, in some sense it is a simpler/more direct proof. In particular, we avoid having to make the assumption that $2\tau < \min_{f\in \Cc} \TRD(\psi_f, \E_f[\psi_f]) - 2\eps$. We give a proof here for reference and then discuss how this technique can be used in general.

\begin{theorem}\label{thm:mainsepalt}
 The concept class
    $$
    \Cc=\Big\{\ket{\psi_A}=\frac{1}{\sqrt{2^n}}\sum_{x\in \01^n} \ket{x,x^\top Ax \text{ (mod 2} )}:A\in \mathbb{F}_2^{n\times n}\Big\}
    $$ 
    requires $2^{\Omega(n)}$ many $\Qstat(1/\poly(n))$ queries to learn  below error $\sqrt{7}/8$ in trace distance.
\end{theorem}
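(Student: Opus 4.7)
The plan is to avoid the reduction through $\QSD$ and $\QSDA$ entirely, and instead use Yao's minimax principle together with the variance bound $\Var_A(\Tr[M\psi_A]) = 2^{-\Omega(n)}$ already established in the proof of Theorem~\ref{thm:mainsep}. The key observation is that the variance bound lets us construct a single \emph{universal adversarial oracle} that fools any low-query algorithm on a random input.

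First, by Yao's principle it suffices to exhibit a distribution over inputs (take $A$ uniform over $\mathbb{F}_2^{n\times n}$, or equivalently upper-triangular) on which every deterministic $\QSQ$ algorithm $\A$ making $t$ queries of tolerance $\tau = 1/\poly(n)$ fails with nontrivial probability. Fix such a deterministic $\A$ and consider the fictitious execution in which every $\Qstat$ query $M$ is answered by the unconditional mean $\bar{\alpha}_M := \Exp_A[\Tr(M\psi_A)]$. Since $\A$ is deterministic and the responses do not depend on $A$, this run makes a fixed sequence of queries $M_1,\dots,M_t$ and outputs a fixed state $\sigma$. Next I would invoke the variance bound: by Chebyshev's inequality, for each $M_i$,
\[
\Pr_A\big[\,|\Tr(M_i\psi_A) - \bar{\alpha}_{M_i}| > \tau\,\big] \;\leq\; \tfrac{\Var_A(\Tr[M_i\psi_A])}{\tau^2} \;=\; \tau^{-2}\cdot 2^{-\Omega(n)}.
\]
By a union bound, with probability at least $1 - t\tau^{-2}\cdot 2^{-\Omega(n)}$ over $A$, every $\bar{\alpha}_{M_i}$ is a valid $\Qstat(\tau)$ response for $\psi_A$, and hence the real execution of $\A$ on input $\psi_A$ coincides with the fictitious one and outputs the same fixed $\sigma$.

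For correctness I would then argue that a fixed $\sigma$ cannot be $\sqrt{7}/8$-close in trace distance to $\psi_A$ for more than one $A$. Indeed, for distinct $A,B$, Fact~\ref{fact:zippel} gives $|\langle\psi_A|\psi_B\rangle| = \Pr_x[f_A(x)=f_B(x)] \leq 3/4$, so $\TRD(\psi_A,\psi_B) \geq \sqrt{1 - 9/16} = \sqrt{7}/4$, and the triangle inequality prevents any fixed $\sigma$ from being within $\sqrt{7}/8$ of two distinct $\psi_A$'s. Hence at most one value of $A$ in $\Cc$ is ``covered'' by $\sigma$, and the fictitious run succeeds with probability $\leq 1/|\Cc| = 2^{-\Omega(n^2)}$. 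Combining with the previous display, the real algorithm succeeds with probability at most
\[
\tfrac{1}{|\Cc|} \;+\; t\tau^{-2}\cdot 2^{-\Omega(n)}.
\]
For this to exceed any constant like $2/3$, we must have $t = \Omega(\tau^2\cdot 2^{\Omega(n)}) = 2^{\Omega(n)}$, which is the desired lower bound.

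The main obstacle is really just bookkeeping: formalizing the equivalence between the true and fictitious executions of an adaptive (but deterministic) algorithm when all tolerance constraints are satisfied, and invoking Yao's principle correctly so that the lower bound transfers to randomized algorithms on worst-case inputs. The nontrivial quantitative content, namely the $2^{-\Omega(n)}$ upper bound on $\Var_A(\Tr[M\psi_A])$ uniformly over observables $M$ with $\|M\|\leq 1$, is already proved in Theorem~\ref{thm:mainsep} via the explicit decomposition of $\psi_A$ into the four blocks $\rho_i^A$ and the Haar-like moment calculations summarized in Fact~\ref{fact:phasestatedecomp}; the present argument simply reuses it as a black box. Note that no assumption of the form $2\tau < \min_A \TRD(\psi_A,\Exp_B[\psi_B]) - 2\eps$ is needed, since we never invoke Lemma~\ref{Lemma:LearningToDeciding}; this is why the error threshold improves from $0.05$ to $\sqrt{7}/8$.
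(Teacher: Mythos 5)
Your proposal is correct and follows essentially the same route as the paper's own proof of this theorem: a Chebyshev-plus-union-bound argument from the variance bound showing that a deterministic algorithm's transcript (and hence its output) is fixed on all but a $t\tau^{-2}2^{-\Omega(n)}$ fraction of concepts, the $\sqrt{7}/4$ pairwise trace distance from Fact~\ref{fact:zippel} showing a fixed output covers at most one concept, and Yao's principle to conclude. The only (cosmetic, and arguably cleaner) difference is that you center the adversarial responses at the true mean $\Exp_A[\Tr(M\psi_A)]$, which is exactly what Chebyshev requires, whereas the paper's write-up centers at $\Tr[M_i]/2^n$.
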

\begin{proof}
    In the main text we saw that for any observable $M$ such that $\Vert M \Vert \leq 1$ we have that $\Var_A(\Tr[M\psi_A]) = O(2^{-n/2})$. Say that a deterministic algorithm $\A$ makes $t$ queries $\{M_i\}_{i=1}^t$ of tolerance $\tau$. Then by Chebyshev's inequality and a union bound we have that
    \begin{align}
        \Pr_{A\sim \Cc}[\exists i: \ \vert \Tr[M_i (\psi_A - \frac{1}{2^n}\id)] \vert > \tau] = O(\frac{t}{\tau^2 \cdot 2^{n/2}})\ .
    \end{align}
    Since the distribution above is uniform over $\Cc$, this implies that there is a subset of concepts $\Cc_0 \subseteq \Cc$ of measure at least $1-O(\frac{t}{\tau^2 \cdot 2^{n/2}})$ all consistent with the answer $\frac{1}{2^n}\Tr[M_i]$ to each query $M_i$. Then upon input of any $\psi_A \in \Cc_0$, $\A$ outputs the same solution $\pi$. We assume that $\A$ succeeds to output some $\pi'$ such that $\TRD(\psi_A, \pi') \leq $ on concepts not $A \notin \Cc_0$. However, as we will now show, $\A$ must fail to do so for almost all $A \in \Cc_0$.     Recall that for pure states we have that $\TRD(\psi_A, \psi_B) = \sqrt{1-\vert \braket{\psi_A}{\psi_B}\vert^2}$ and that for function states $\braket{\psi_f}{\psi_h} = \Pr_{x\sim\01^n}[f(x) = h(x)]$. Using Fact~\ref{fact:zippel} we have that $\Pr_{x\sim\01^n}[f_A(x) = f_B(x)] \leq \frac{3}{4}$ and thus that $\TRD(\psi_A, \psi_B) \geq \sqrt{7}/4$ if $A \neq B$. Now assume that there is some $A \in \Cc_0$ such that $\TRD(\psi_A, \pi) < \sqrt{7}/8$. Then for all $B \in \Cc_0$ such that $A \neq B$ by triangle inequality we have that
    \begin{align}
        \TRD(\psi_B,\pi) & \geq \TRD(\psi_A,\psi_B) - \TRD(\psi_A,\pi) > \sqrt{7}/8\ .
    \end{align}
    Thus, $\A$ fails on all other concepts in $\Cc_0$. Since the measure of $\Cc_0$ is at least $1-O(\frac{t}{\tau^2 \cdot 2^{n/2}})$, if $t = o(\tau^2 \cdot 2^{n/2})$ there is some $n_\delta$ such that for all $n > n_0$ we have that the measure of $\Cc_0$ is at least $\delta$ for any constant $0 < \delta < 1$. That is, for any $0 < \delta < 1$ we must have that $\A$ fails on at least a $\delta-$fraction of the concepts for $n$ large enough. By Yao's Principle, any randomized algorithm that succeeds with probability at least $1-\delta/2$ must then use $\Omega(\tau^2 \cdot 2^{n/2})$ $\Qstat$ queries of tolerance $\tau$.
\end{proof}
Note that this then in turn improves theorem~\ref{thm:smallckt} and theorem~\ref{thm:smallckt_distrib} to hold for learning the states below trace distance $\sqrt{7}/8$ and total variational distance $7/128$ respectively.
It is not hard to see how this technique generalizes to other concept classes. Say that $\Cc$ is some concept class such that $\Var_{\rho \sim \Cc}(\Tr[M\rho]) \leq v$. Then for $t = o(\tau^2 \cdot v)$ queries there are adversarial responses such that any deterministic algorithm $\A$ must output the same answer for a set of measure $\delta$ for any constant $0 < \delta < 1$. Denote this set by $\Cc_0$. Assume further that for any fixed state $\pi$ we have that $\Pr_{\rho\sim \Cc_0}[\TRD(\psi_f, \pi) > \eps] \geq \delta'$. Then we have that $\A$ fails to output a state below trace distance $\eps$ for a set of measure at least $\delta'\cdot \delta$. By Yao's principle, then any randomized algorithm must use $\Omega(\tau^2 \cdot v)$ queries of tolerance $\tau$ to output a $\pi$ such that $\TRD(\psi_f, \pi) < \eps$ with probability at least $1-\delta' \cdot \delta/2$. When it is easy to prove that any output must fail on any large subset of concepts this method readily gives direct lower bounds on $\QSQ(\Cc)$. When this property is hard to verify, going through $\QSD$ may be easier in practice.
\end{document}